\documentclass[10pt,a4paper]{article}
\usepackage[margin=2.25cm]{geometry}
\usepackage[T1]{fontenc}
\usepackage{lmodern}
\usepackage{amsmath,amssymb,amsthm,bm,mathtools}
\usepackage{graphicx,booktabs,array,tabularx,longtable}
\usepackage{xcolor}
\usepackage{tikz}
\usetikzlibrary{arrows.meta,positioning}
\usepackage[numbers,sort&compress]{natbib}
\usepackage[colorlinks=true,allcolors=blue!45!black]{hyperref}
\usepackage{microtype}
\usepackage{placeins}
\graphicspath{{figures/}}
\newcommand{\F}{\bm F}
\newcommand{\C}{\bm C}
\newcommand{\E}{\bm E}
\newcommand{\Sg}{\bm S}
\newcommand{\I}{\bm I}
\newcommand{\ee}{\bm e}
\newcommand{\ssv}{\bm s}
\newcommand{\qq}{\bm q}
\newcommand{\dd}{\mathrm d}
\newcommand{\R}{\mathbb R}
\newcommand{\GL}{\mathrm{GL}^{+}(2)}
\newcommand{\cof}{\operatorname{cof}}
\newcommand{\tr}{\operatorname{tr}}
\newcommand{\diag}{\operatorname{diag}}
\newcommand{\doi}[1]{\href{https://doi.org/#1}{doi:\nolinkurl{#1}}.}
\newtheorem{proposition}{Proposition}

\newcolumntype{Y}{>{\raggedright\arraybackslash}X}
\title{Polyconvex neural-network surrogates with learned paired directional features for anisotropic hyperelastic homogenization}
\author{
S. Ares de Parga$^{1,4,\dagger}$ \and
A. Cornejo$^{1,2}$ \and
J.~A. Hern\'andez$^{1,3}$ \and
R. Rossi$^{1,2}$
}
\date{\parbox{.97\textwidth}{\centering\footnotesize
$^1$Centre Internacional de M\`{e}todes Num\`{e}rics en Enginyeria (CIMNE), Barcelona, Spain\\[.3em]
$^2$Universitat Polit\`{e}cnica de Catalunya, Department of Civil and Environmental Engineering (DECA), Barcelona, Spain\\[.3em]
$^3$Universitat Polit\`{e}cnica de Catalunya, Departament de Resist\`{e}ncia de Materials i Estructures a l'Enginyeria, E.S. d'Enginyeries Industrial, Aeroespacial i Audiovisual de Terrassa (ESEIAAT), Terrassa, Spain\\[.3em]
$^4$Department of Aeronautics and Astronautics, Stanford University, Stanford, CA 94305, USA\\[.7em]
\textsuperscript{\(\dagger\)} This work was conducted while S. Ares de Parga was affiliated with CIMNE.\\[.7em]
26 September 2026}}

\begin{document}
\maketitle
\begin{abstract}
Approximating the effective anisotropic hyperelastic response of a representative volume element (RVE) requires constitutive expressiveness, mechanical consistency, and computational efficiency. This work develops a polyconvex physics-augmented neural network (PANN) construction based on paired directional features whose orientations, stretch powers, and determinant exponents adapt to the RVE response. The features are composed with convex, componentwise nondecreasing cores, realized as an input-convex neural network (ICNN) or an input-convex Kolmogorov--Arnold network (ICKAN). The complete energies satisfy objectivity, an exactly stress-free reference, two-dimensional polyconvexity, and prescribed growth conditions by construction, and nonnegative energy is certified separately for the selected models. On a multicavity and a single-cavity periodic cell, these energies are compared with an Unconstrained energy, trained identically without the curvature and growth constraints, and with three hyperreduced projection-based reduced-order models (HPROMs), which serve as physics-based references. On the first cell, learning the features lowers the stress error sevenfold, and six features suffice on both. The constrained energies remain below one percent in stress error and in the displacement error of a tensile-coupon FE$^2$ application with a full-order reference, whereas the Unconstrained energy and the HPROMs are more accurate inside the data. Beyond the data, however, the Unconstrained energy loses rank-one convexity, and under confined compression it stops below half of the 500~MPa that the constrained energies reach. These results quantify what the guarantees cost in accuracy and what they secure beyond the data.
\end{abstract}
\noindent\textbf{Keywords:} polyconvexity; physics-augmented neural networks; anisotropic hyperelasticity; computational homogenization; FE$^2$; input-convex neural networks; projection-based reduced-order models.

\section{Introduction}
\label{sec:intro}
The effective response of a heterogeneous material emerges from the deformation of its constituents and their interactions with the microstructure. At finite strains, this interaction can produce anisotropy and coupling between loading modes that are difficult to capture with a predetermined macroscopic constitutive law. Computational homogenization addresses this difficulty by resolving a representative volume element (RVE), in which an imposed deformation drives a microscopic boundary-value problem whose solution yields the effective stresses and constitutive derivatives~\cite{Miehe1999,kouznetsova2001approach,geers2010multi}. The effective law is thus computed rather than prescribed in advance.

This flexibility comes at a computational cost. Characterizing the material over a range of deformations already requires many nonlinear finite element solves, and using the same RVE within a structural calculation multiplies those solves across macroscopic integration points and load increments~\cite{feyel2003multilevel,geers2010multi}. Much of this repeated work concerns closely related microscopic states. Surrogate modeling seeks to exploit that relationship by computing representative solutions offline and reusing their information in less expensive online evaluations. The goal is to reduce the cost of these evaluations without losing the constitutive behavior that made the RVE description useful.

Two complementary approximation routes are relevant, but they play different roles in this work. Non-intrusive constitutive surrogates learn the effective response from RVE input--output data and dispense with the microscopic solve online~\cite{le2015computational}. Intrusive projection-based reduced-order models (PROMs) instead compress the microscopic displacement field and retain projected equilibrium equations~\cite{benner2015survey,yvonnet2007,Hernandez2014}, with hyperreduction limiting the local contributions required to evaluate them~\cite{hernandez2017}. This work follows the first route and learns the effective stored energy while embedding the constitutive restrictions needed for its mechanical use. The intrusive route is retained as a reference, because it preserves an approximate microscopic state and the underlying constituent law.

The central contribution is a family of learnable paired directional features for polyconvex anisotropic energies. Their orientations and exponents adapt to the RVE response, while their paired construction makes the stress vanish exactly at the undeformed state without compromising polyconvexity. Coupling these features to either an input-convex neural network (ICNN) or an input-convex Kolmogorov--Arnold network (ICKAN) gives two realizations of the same constitutive design. Hyperreduced PROMs of the same RVE then provide accuracy--cost references. Two structural examples follow, a tensile coupon and a confined compression test beyond the data, to show how the learned energies behave once they leave isolated material evaluations.

\subsection{Energy-based constitutive learning and polyconvexity}
Direct constitutive learning moves the expensive microscopic solves to the data-generation stage and replaces them online by a trained material model. Neural homogenization has demonstrated this separation~\cite{Aldakheel2023}, making systematic sampling of representative microscopic responses an essential part of the offline workflow~\cite{kalina2023}. What the surrogate must represent depends on the material class. History-dependent responses may require learned history-to-response maps~\cite{Liu2022} or internal descriptors such as maximum directional stretches~\cite{Ghaderi2020}, while interacting scales may motivate staggered data generation and training~\cite{AsadFarhat2026}. The present work, by contrast, considers history-independent hyperelasticity. This removes memory from the state but requires the stress to derive from a stored-energy function.

Learning that function, rather than fitting each stress component independently, makes the stress work conservative and gives a major-symmetric constitutive tangent by differentiation. This is a central motivation for the mechanics-informed energy formulation of As'ad et al.~\cite{asad2022}. Expressing the energy through objective deformation measures then removes dependence on a superposed rigid rotation. In this way, physics-augmented neural networks (PANNs) shift part of the burden from learning mechanical identities from data to satisfying them through the representation itself~\cite{klein2022,linden2023}. Prediction accuracy remains to be established, but potential structure and objectivity need not depend on how densely the training domain has been sampled.

Preserving a potential, however, leaves its curvature free. At finite strains, that control must also be imposed in appropriate variables. Convexity in Green--Lagrange strain~\cite{asad2022} is attractive because stress and tangent follow directly from the learned energy. Its implication for curvature with respect to the deformation gradient nevertheless contains a stress-dependent term~\cite{GaoNeffRoventaThiel2017}, so it is not a global polyconvexity guarantee. Conversely, Xu et al.~\cite{Xu2021} parameterize an incremental tangent through a lower-triangular factorization to enforce positive semidefiniteness. This controls the tangent matrix but does not define the scalar potential required for integrability. The modeling task is therefore to preserve both the potential and the desired finite-strain curvature condition within one construction.

Polyconvexity provides a route to that combination. It represents the energy as a convex function of deformation minors, yielding a sufficient condition for rank-one convexity~\cite{ball1977}. For the two-dimensional model considered here, the independent arguments are the deformation gradient and its determinant. This framework also underpins existence results when the additional coercivity and admissibility hypotheses are met~\cite{Ciarlet1988}. Polyconvexity serves here as a sufficient constitutive restriction, not as a necessary property of every homogenized material or a guarantee of structural uniqueness. The remaining challenge is to keep enough freedom within these restrictions to describe the anisotropic RVE response.

\subsection{Directional features and representational expressiveness}
\label{sec:discussion}
For anisotropy, this challenge is especially consequential, because a flexible scalar core can distinguish only the deformation states that its inputs distinguish. Structural tensors encode material directions in objective scalar quantities and underpin established anisotropic polyconvex energies~\cite{SchroderNeff2003}. The directional description is itself a constitutive choice. Models of fiber-reinforced tissue, for example, must account for a distribution of orientations rather than one perfectly aligned family when that dispersion affects the mechanical response~\cite{Gasser2006}. A learned feature map must likewise expose the relevant directional distinctions while remaining compatible with objectivity and convexity in the deformation minors. It is therefore part of the constitutive model, not merely a preprocessing step.

A convex scalar core does not resolve this choice. An ICNN provides a convex function~\cite{amos2017}, but a sufficient composition with nonlinear features requires those features to be convex in the independent deformation minors and the core to be componentwise nondecreasing~\cite{klein2022}. This monotonicity imposes an order on feature space, since $\bm z(\F_a)\leq\bm z(\F_b)$ componentwise implies $H(\bm z(\F_a))\leq H(\bm z(\F_b))$. This ordering persists for any network width, and the reference corrections and growth terms added later to complete the energy leave it unchanged in the learned contribution, so a poorly chosen feature family can impose a representational obstruction before network capacity or optimization becomes limiting.

The numerical evidence of Klein et al.~\cite{klein2022} illustrates this distinction. For their cubic beam-lattice data, an exactly objective invariant-based polyconvex model fails to represent the shear response, whereas a model formulated in $\F$, $\cof\F$, and $\det\F$ gives a substantially better fit while approximating objectivity through data augmentation. The same invariant-based model performs very well for their analytically generated transversely isotropic data. In these results, accuracy depends on whether the admissible inputs expose the distinctions required by the material response, rather than on polyconvexity alone. This motivates enlarging the feature family before concluding that a convex core is inadequate.

The natural response is to make the directional description adaptive without abandoning the admissible composition. Generalized-structure-tensor models calibrate the structural tensors together with the energy model, allowing the data to identify aspects of the anisotropy~\cite{Kalina2024}. Recent triclinic invariant constructions and group symmetrization further broaden the polyconvex descriptions available for finite material-symmetry groups~\cite{klein2026}. Greater directional freedom, however, must be reconciled with the undeformed state. A learned anisotropic contribution need not produce zero reference stress, while cancelling that stress with an arbitrary strain correction can violate polyconvexity or material symmetry. The invariant-based normalization and analytic growth terms of Linden et al.~\cite{linden2023} demonstrate why these operations must belong to the admissible constitutive construction. The feature family must therefore do two jobs at once: distinguish the relevant directional response and control its reference derivatives so that exact stress normalization remains compatible with polyconvexity. This is the role of the paired features introduced here.

While the feature map determines what the model can distinguish, the scalar core determines how those distinctions are combined into an energy. Prior constructions distribute this flexibility in different ways. Constitutive artificial neural networks (CANNs) learn coefficients and exponents within an interpretable catalog of admissible energy terms~\cite{Linka2023}, whereas neural ordinary differential equations can learn monotone energy derivatives and recover convex energy contributions by integration~\cite{Tac2022}. In the present work, an ICNN combines the paired features through nonnegative weights and fixed convex activations. A Kolmogorov--Arnold network (KAN) instead places trainable univariate functions on its edges, a flexibility explored in constitutive KANs~\cite{Abdolazizi2025}. The feature-composition argument holds only if these edge functions are convex and nondecreasing, as in monotone input-convex KANs (ICKANs)~\cite{thakolkaran2025}. A smooth ICNN and an integrated-hat ICKAN are compared under the same paired features and analytic completion. The comparison isolates how the scalar dependence is represented, while the directional information and the mechanical guarantees stay the same.

\subsection{Projection-based reduced-order models}
Projection-based reduction retains the RVE boundary-value problem but seeks its solution in a space learned from previous microscopic states. In an affine PROM, those states define a basis, commonly obtained through proper orthogonal decomposition (POD), and projected equilibrium determines the coordinates of each new displacement field. Reduced finite-strain homogenization has shown that this strategy can reuse information from earlier RVE solves without replacing the constituent law~\cite{yvonnet2007,Hernandez2014,Hernandez2020}. PROMs therefore provide a natural physics-based reference for the present study, since they preserve more of the microscopic problem than an effective-energy surrogate at the price of an online equilibrium solve.

The affordability of that solve depends first on the geometry of the solution family. A small number of loading parameters does not necessarily imply that the corresponding microscopic states lie close to a small global linear space. When their Kolmogorov $n$-width decays slowly, many basis vectors, and hence many equilibrium unknowns, are required to achieve adequate accuracy. Considerable effort has been devoted to mitigating this Kolmogorov barrier with richer reduced representations, from local linear spaces~\cite{Amsallem2012} to nonlinear manifolds built with autoencoders~\cite{Lee2020}, quadratic corrections~\cite{Barnett2022}, or learned closures in a POD coordinate space~\cite{barnett2023,sibuet2025discrete,aresdeparga2026nonlinear,Chmiel2024}. The last of these, PROM--ANN~\cite{barnett2023}, offers the compromise sought here, because it enriches the geometry of the reduced state without giving up either its POD description or projected equilibrium. Equilibrium determines only a small set of primary coordinates, while a regression model, here an artificial neural network (ANN), reconstructs the secondary coordinates that populate a much richer POD expansion~\cite{aresdeparga2026nonlinear}. The microscopic displacement can thereby draw on many modes while the online nonlinear system remains small, which motivates the primary--secondary construction adopted for the RVE in this work.

Reducing the number of unknowns does not make the reduced equations inexpensive, because assembling their residual and tangent may still require constitutive evaluations over the complete microscopic mesh at every iteration. While hardware acceleration performs this full-mesh work faster~\cite{Fritzen2016}, hyperreduction reduces how much of it must be performed, turning a PROM into a hyperreduced PROM (HPROM). Methods that act directly on the projected equations~\cite{Grimberg2021,AresDeParga2023}, instead of first reconstructing high-dimensional quantities as discrete empirical interpolation does~\cite{Chaturantabut2010}, include the linear-programming empirical quadrature procedure~\cite{patera2017lp,yano2019lp}, energy-conserving sampling and weighting (ECSW)~\cite{Farhat2014,farhat2015}, and the empirical cubature method (ECM)~\cite{hernandez2017,Hernandez2020}. Each replaces full assembly by a nonnegative weighted sum of selected element contributions while leaving the local constituent evaluations intact. This work follows ECM because previous developments extend its fixed rule, in which one support and one weight vector serve the whole training space, toward movable points~\cite{Hernandez2024}, subspace-adaptive weights~\cite{Bravo2024}, and finally the manifold-adaptive weights ECM (MAW--ECM)~\cite{hernandez2026dimensional}, which is the path needed to couple hyperreduction with PROM--ANN. In MAW--ECM, the weights become fields over the latent coordinates, so the same primary coordinates that locate the microscopic state on the PROM--ANN manifold also select the rule used to evaluate its equilibrium and stress.

Together, these choices yield three intrusive references for the comparison. The affine HPROM retains a linear POD state, solves projected equilibrium, and evaluates it with fixed ECM. HPROM--ANN replaces the linear state by the primary--secondary manifold and the fixed rule by MAW--ECM, but still determines its primary coordinates through equilibrium. Its direct variant, D-HPROM--ANN, removes that final online solve by evaluating the decoder at the imposed strain, as in the hyperelastic example of~\cite{hernandez2026dimensional}. Moving across the three models removes successive portions of the online microscopic work while preserving an approximate microscopic state and selected evaluations of the constituent law. They contextualize the accuracy and cost of the proposed energy surrogates but do not share their constitutive guarantees, because the approximate effective stress of a PROM need not derive from a single effective potential, even when the constituent law is retained and the cubature weights are nonnegative.

\subsection{Contribution and outline}
Against these intrusive physics-based references, the central route pursued in this work is an effective-energy surrogate whose anisotropy is learned without relinquishing polyconvexity or exact reference-state normalization. The construction rests on paired directional features with trainable orientations, stretch powers, and determinant exponents constrained by explicit joint-convexity bounds.

The pairing is designed so that the strain derivative of every feature at the undeformed state is proportional to the identity tensor, even though its response away from that state remains anisotropic. The complete learned contribution can consequently be normalized by a single scalar correction affine in the determinant. This correction cancels the reference stress without tying that cancellation to the independently adjustable volumetric stiffness, allowing the directional description to adapt to the RVE while the reference state remains under analytic control.

With the directional information in this form, two monotone convex cores provide alternative representations of its scalar energetic dependence: a smooth ICNN and an integrated-hat spline ICKAN. In both cases, the learned core, reference correction, and growth terms form one complete energy for which the analysis establishes objectivity, an exactly stress-free reference, two-dimensional polyconvexity, and the specified growth limits. Global nonnegative energy is treated separately through a sufficient lower bound evaluated for the trained parameters. The novelty lies in bringing established convex-network and affine-determinant normalization principles together through the paired anisotropic family and its explicit analytical conditions.

The assessment works at two levels. At the constitutive level, comparing the constrained energies on independent RVE responses with the Unconstrained energy, a network trained under the same protocol without the curvature and growth constraints, measures the accuracy that the guarantees cost. On the SC-RVE, the three HPROMs described above enter this comparison as physics-based references. The constitutive study also compares learned with fixed paired directions and locates where the Unconstrained energy loses rank-one convexity beyond the data, while checks of the tangent and of the sign of the energy probe behavior that a fit metric cannot reveal~\cite{Tac2024}. At the structural level, the learned energies replace the RVE query of the first-order two-level finite element formulation, denoted FE$^2$, in which an RVE problem supplies the material response at each macroscopic integration point~\cite{feyel2003multilevel}. In a tensile coupon, they are compared with a nested full-order model (FOM)--FE$^2$ reference, and the HPROMs serve again as physics-based accuracy--cost benchmarks. A confined compression test then drives the same energies beyond the data to ask which of them keep the structural problem solvable.

The conclusions are tied to the two periodic cells, plane-strain kinematics, and the two structural loadings. Agreement with FOM--FE$^2$ measures surrogate error within the adopted homogenization model, not error relative to a structure in which every microscopic cell is resolved explicitly, and the reported timings cover the online workflows only.

The remainder of the paper is organized as follows. Section~\ref{sec:fom} defines the homogenization problem and work-conjugate variables. Section~\ref{sec:learned} introduces the admissibility criteria, constructs the learned energies, and proves their properties, after which Section~\ref{sec:reduced} defines the HPROMs used in the comparison. Section~\ref{sec:cell} reports the numerical examples, first the multicavity RVE (MC-RVE) representation study and then the single-cavity RVE (SC-RVE) deployment study, which ends with the coupon FE$^2$ comparison and a confined compression test beyond the data. Section~\ref{sec:conclusions} closes the paper.

\FloatBarrier

\section{Finite-strain homogenization and the reference problem}
\label{sec:fom}
\subsection{Kinematics and work-conjugate variables}
Let $\F\in\GL$ denote the prescribed in-plane macroscopic deformation gradient. Its associated right Cauchy--Green tensor and Green--Lagrange strain are
\begin{equation}
 \C=\F^{T}\F,\qquad
 \E=\tfrac12(\C-\I),\qquad J=\det\F>0.
\end{equation}
The macroscopic input is written simply as $\F$ to match the constitutive notation used throughout the paper. The subscript $\mu$ marks microscopic quantities, whereas the unsubscripted $W$, $\bm P$, and $\Sg$ denote effective quantities obtained by homogenizing the cell. Plane strain is adopted, so the corresponding three-dimensional gradient is $\diag(\F,1)$, and the same embedding applies locally to $\F_\mu$. This constrains the out-of-plane deformation, not the out-of-plane stress.

The constitutive input used throughout is the three-component engineering representation of the Green--Lagrange strain. With $\Sg$ the effective second Piola--Kirchhoff stress, the work-conjugate convention is
\begin{equation}
 \ee=(E_{11},E_{22},2E_{12})^T,\qquad
 \ssv=(S_{11},S_{22},S_{12})^T,\qquad
 \Sg:\dd\E=\ssv\cdot\dd\ee.
 \label{eq:voigt}
\end{equation}
Thus $\gamma_{12}=2E_{12}$ is paired with $S_{12}$, with no additional factor of two in the stress vector. For a differentiable effective energy density $W(\ee)$, this convention gives the stress and the constitutive tangent
\begin{equation}
 \ssv=\frac{\partial W}{\partial\ee},\qquad
 \bm D=\frac{\partial\ssv}{\partial\ee}=\frac{\partial^2 W}{\partial\ee\,\partial\ee}\in\R^{3\times3},\qquad
 D_{ij}=\frac{\partial s_i}{\partial e_j}.
 \label{eq:tangent}
\end{equation}
This convention is used consistently in training, work integrals, and tangent checks.

An admissible strain input must satisfy $\I+2\E\succ0$. The RVE is queried with the representative
\begin{equation}
 \F(\ee)=\sqrt{\I+2\E},
 \label{eq:strain_representative}
\end{equation}
where the square root is symmetric positive definite. Objectivity allows a superposed macroscopic rotation to be removed without changing the energy or second Piola--Kirchhoff stress. The choice remains compatible with anisotropy, because material directions stay fixed in the reference configuration, and it places no symmetry requirement on the local microscopic gradients.

\subsection{Periodic microscopic equilibrium}
Let $Y$ denote the reference periodic cell and $Y_s\subset Y$ its solid part. The prescribed $\F$ determines the affine part of the microscopic deformation, whereas the fluctuation $\bm w$ is unknown,
\begin{equation}
 \bm\varphi(\bm X)=\F\bm X+\bm w(\bm X),\qquad
 \F_\mu=\F+\nabla\bm w,
 \label{eq:micro_deformation}
\end{equation}
with gradients taken in the reference configuration. Periodicity requires $\bm w(\bm X^+)=\bm w(\bm X^-)$ at paired points on opposite cell faces. Consequently, the microscopic displacement $\bm u_\mu=\bm\varphi-\bm X$ satisfies
\begin{equation}
 \bm u_\mu(\bm X^+)-\bm u_\mu(\bm X^-)
   =(\F-\I)(\bm X^+-\bm X^-).
 \label{eq:periodic_jump}
\end{equation}
The finite-strain displacement jump therefore depends on $\F-\I$, not directly on $\E$. Fixing the fluctuation at one node removes the arbitrary translation without clamping the cell boundary. The internal void surface is traction-free.

Let $\psi_\mu(\bm X,\F_\mu)$ be the solid's objective hyperelastic energy per unit reference volume and $\bm P_\mu=\partial_{\F_\mu}\psi_\mu$ its first Piola--Kirchhoff stress. In the absence of microscopic body forces, equilibrium requires
\begin{equation}
 \int_{Y_s}\bm P_\mu:\nabla\delta\bm w\,\dd V=0
 \label{eq:micro_weak}
\end{equation}
for every admissible periodic test fluctuation. Periodic kinematics and the associated traction balance provide the standard work-consistent cell problem~\cite{Miehe1999}. After finite element discretization and elimination of the periodic constraints, let $\bm a$ collect the independent fluctuation coordinates. The full-order equilibrium problem is
\begin{equation}
 \bm r(\bm a,\ee)=\partial_{\bm a}\Pi_\mu(\bm a,\ee)=\bm0,
 \qquad
 \Pi_\mu=\int_{Y_s}\psi_\mu(\bm X,\F_\mu)\,\dd V,
 \label{eq:microeq}
\end{equation}
where the strain dependence enters through Eq.~\eqref{eq:strain_representative}. Solving Eq.~\eqref{eq:microeq} gives $\bm a^*(\ee)$ and hence the microscopic fields. The microscopic deformation must also remain orientation-preserving, $\det\F_\mu>0$.

\subsection{Effective energy and stress}
On the selected equilibrium branch, the effective energy and stresses are
\begin{equation}
 W(\ee)=\frac{\Pi_\mu(\bm a^*(\ee),\ee)}{|Y|},\qquad
 \bm P=\frac{1}{|Y|}\int_{Y_s}\bm P_\mu\,\dd V,
 \qquad \Sg=\F^{-1}\bm P.
 \label{eq:homstress}
\end{equation}
The normalizing volume $|Y|$ includes the void, so these quantities describe the porous cell as a whole. For the two-dimensional discretization, the same reference thickness enters the integration weights and the cell volume, and cancels in their ratio. The stress conversion in Eq.~\eqref{eq:homstress} is performed \emph{after} averaging the first Piola--Kirchhoff stress. In general, averaging microscopic second Piola--Kirchhoff stresses does not produce $\Sg$, because $\F_\mu$ varies within the cell.

The connection between energy and averaged stress follows from equilibrium. For a macroscopic variation, $\delta\F_\mu=\delta\F+\nabla\delta\bm w$, and Eq.~\eqref{eq:micro_weak} removes the fluctuation contribution, which yields
\begin{equation}
 \frac1{|Y|}\int_{Y_s}\bm P_\mu:\delta\F_\mu\,\dd V
     =\bm P:\delta\F
     =\Sg:\delta\E
     =\ssv\cdot\delta\ee.
 \label{eq:hill_mandel}
\end{equation}
This is the Hill--Mandel work-consistency condition~\cite{Miehe1999}, which also establishes $\ssv=\partial_{\ee}W$ on a smooth branch.

The Newton solver follows an equilibrium branch, possibly through load continuation, and its convergence establishes stationarity rather than global minimization of $\Pi_\mu$ or uniqueness of the cell response. The resulting smooth branch serves as the reference constitutive response over the investigated states, without assuming that microscopic polyconvexity makes the homogenized energy polyconvex.

\subsection{Consistent constitutive tangent}
The tangent must account for the change in microscopic equilibrium as the input strain changes. To make this dependence explicit, write the stress evaluator before substitution of the equilibrium state as $\widehat{\ssv}(\bm a,\ee)$, so that $\ssv(\ee)=\widehat{\ssv}(\bm a^*(\ee),\ee)$. If $\bm r_{,\bm a}$ is nonsingular on the selected branch, implicit differentiation of Eq.~\eqref{eq:microeq} gives
\begin{equation}
 \frac{\dd\bm a^*}{\dd\ee}=-\bm r_{,\bm a}^{-1}\bm r_{,\ee},\qquad
 \bm D=\widehat{\ssv}_{,\ee}
       -\widehat{\ssv}_{,\bm a}\bm r_{,\bm a}^{-1}\bm r_{,\ee}.
 \label{eq:ift}
\end{equation}
All partial derivatives are evaluated at equilibrium, holding the other argument fixed. The first term differentiates the stress at fixed fluctuation coordinates, and the second accounts for microscopic relaxation. Omitting that correction generally does not give the derivative of the deployed constitutive response. For the smooth hyperelastic branch, the result is also $\partial^2_{\ee\ee}W$.

The sensitivities in Eq.~\eqref{eq:ift} follow from linear systems with the converged microscopic tangent, so no explicit matrix inverse is formed. The reference implementation uses one nonlinear equilibrium solution followed by three sensitivity right-hand sides, one per strain component. Independent finite-difference checks, which perturb the input and re-solve equilibrium, verify this linearization. The same consistency requirement applies to each surrogate, whose tangent must differentiate its own stress evaluation.

\subsection{Macroscopic equilibrium and the constitutive interface}
The pair $(\ssv,\bm D)$ is the common interface through which every constitutive evaluator enters the structural problem. Let $\bm U$ collect the free macroscopic nodal displacements, and let $\ee_g(\bm U)$ be the engineering Green strain at integration point $g$. With $\bm B_g=\partial\ee_g/\partial\bm U$, reference-volume integration weights $\Omega_g$, and dead external loads $\bm f_{\rm ext}$, equilibrium reads
\begin{equation}
 \bm R_M(\bm U)=
       \sum_g\Omega_g\bm B_g^T\ssv_g(\ee_g(\bm U))
       -\bm f_{\rm ext}=\bm0.
 \label{eq:macro_residual}
\end{equation}
Its tangent contains material and geometric contributions,
\begin{equation}
 \bm K_M=\sum_g\Omega_g\left[
      \bm B_g^T\bm D_g\bm B_g+
      \sum_{\alpha=1}^{3}(s_g)_\alpha
       \frac{\partial^2(e_g)_\alpha}{\partial\bm U\,\partial\bm U}
       \right].
 \label{eq:macro_tangent}
\end{equation}
The second term arises because the Green strain is nonlinear in the displacements. A structural Newton solver therefore needs both $\ssv_g$ and $\bm D_g$, irrespective of how the constitutive response is computed.

The nested FOM--FE$^2$ evaluation provides the reference instance of this interface, in which every constitutive query solves the full periodic RVE problem defined above. The learned energies of Section~\ref{sec:learned} return $\ssv$ and $\bm D$ by differentiation and remove the microscopic solve from the online stage. The HPROMs of Section~\ref{sec:reduced} instead keep an approximate microscopic state. The affine HPROM and HPROM--ANN obtain it by solving equilibrium in reduced coordinates with hyperreduced integration, whereas D-HPROM--ANN evaluates it without that solve. Despite these different routes, all models supply the same interface while the macroscopic mesh, loading, and equilibrium algorithm remain unchanged.

Holding that structural problem fixed makes the nested FOM--FE$^2$ solution the reference for surrogate errors at fixed macroscopic and microscopic discretizations, so homogenization, boundary-layer, and discretization errors remain outside the reported comparison. The cell itself is an idealized periodic material model rather than a statistically representative sample of a random microstructure.

\section{Learnable paired-feature polyconvex energies}
\label{sec:learned}
The non-intrusive route developed here replaces the repeated online solution of the microscopic problem by a learned effective stored energy. Its central constitutive relation is
\begin{equation}
 \ssv(\ee)=\nabla_{\ee}W(\ee),
 \label{eq:energy_stress_relation}
\end{equation}
where $\ssv$ is the stress work-conjugate to the engineering-strain vector $\ee$. All stress components then derive from a common potential, and the Hessian of $W$ supplies a major-symmetric constitutive tangent.

One could stop at this point and construct a flexible hyperelastic network without restricting the curvature of its learned potential. With objective inputs and a stress-free reference state, this option defines the \emph{Unconstrained energy} (Section~\ref{sec:tiers}), which shows what can be represented before stronger finite-strain restrictions are introduced.

Potential structure and reference normalization still leave the deformation-gradient curvature uncontrolled and permit finite energy as $J\to0^+$. This work instead pursues polyconvexity by constructing the energy as a convex function of independent deformation minors, together with analytic terms that control the reference state and growth. The learned contribution combines a common family of paired directional features with one of two monotone convex cores, an ICNN or an ICKAN. These cores provide alternative representations while preserving the same feature construction and mechanical guarantees.

Figure~\ref{fig:constitutive_pipeline} summarizes the construction, tracing the path from RVE data through the paired directional features and either convex core to the analytic completion of the stored energy, whose derivatives supply stress and tangent. Throughout, $\F\in\GL$, so that $J=\det\F>0$, and derivatives with respect to deformation are taken with the learned parameters fixed.

\subsection{Admissibility criteria and scope of the guarantees}
\label{sec:requirements}
A small prediction error is not sufficient to define an admissible hyperelastic surrogate. Six requirements, standard in constitutive PANN formulations~\cite{klein2022,linden2023} and labeled here for later reference, organize the construction:
\begin{enumerate}
 \item[(C1)] \emph{Potential structure:} stress is the derivative of a single-valued stored energy in work-conjugate variables.
 \item[(C2)] \emph{Objectivity:} a superposed rigid rotation does not change that energy.
 \item[(C3)] \emph{Stress symmetry and anisotropy:} the second Piola--Kirchhoff stress is symmetric, while directional dependence is allowed without prescribing a material symmetry group.
 \item[(C4)] \emph{Reference normalization:} the undeformed state has zero energy and zero stress.
 \item[(C5)] \emph{Polyconvexity:} the energy admits a convex representation in deformation minors, which is sufficient for rank-one convexity.
 \item[(C6)] \emph{Growth:} energy diverges under volume collapse, unbounded volume expansion, and unbounded distortion at bounded positive volume.
\end{enumerate}
The Unconstrained energy satisfies (C1)--(C4) by construction, whereas (C5) and (C6) are not imposed on it. The ICNN and ICKAN energies satisfy all six criteria by construction for every finite parameter set that obeys the stated architectural constraints.\footnote{Global nonnegative energy is not part of (C1)--(C6). It is checked separately for the trained parameters in Sections~\ref{sec:mc_guarantees} and~\ref{sec:constitutive_accuracy}, using the sufficient bound of Appendix~\ref{app:nonnegative}.} A property counts here as guaranteed only if it follows from the representation itself or from a sufficient bound under stated assumptions. Sampled numerical checks verify the implementation but do not replace such an argument.

\begin{figure}[p]
 \centering\input{figures/model_hierarchy}
 \caption{Overview of the proposed constitutive construction. Sampled strain states are evaluated with the periodic RVE model to supply energy and stress data. These responses train learnable paired directional features and a monotone convex ICNN or ICKAN core. Analytic reference and growth terms complete the stored energy. Differentiation then supplies the work-conjugate stress and constitutive tangent. Criteria (C1)--(C6), defined in Section~\ref{sec:requirements}, follow from the complete construction rather than from sampled numerical checks.}
 \label{fig:constitutive_pipeline}
\end{figure}

Potential structure (C1), supplied by Eq.~\eqref{eq:energy_stress_relation}, implies that the work around every closed strain path $\Gamma$ satisfies
\begin{equation}
 \oint_{\Gamma}\ssv\cdot\dd\ee=0 .
 \label{eq:conservative_requirement}
\end{equation}
For the isothermal purely elastic setting considered here, (C1) likewise gives $\ssv\cdot\dot{\ee}-\dot W=0$, so the mechanical dissipation vanishes. The stress Jacobian is then the Hessian of $W$ and obeys the cross-derivative identities $\partial s_i/\partial e_j=\partial s_j/\partial e_i$, that is, the major symmetry $\bm D=\bm D^T$ of the constitutive tangent in Eq.~\eqref{eq:tangent}.

Objectivity (C2) must remove dependence on the spatial observer without inadvertently imposing material isotropy. For a superposed proper rotation $\bm Q$, it requires
\begin{equation}
 W(\bm Q\F)=W(\F),\qquad
 \bm P(\bm Q\F)=\bm Q\bm P(\F).
\end{equation}
Using $\C=\F^T\F$ as an energy input enforces the first identity, because a superposed spatial rotation leaves $\C$ unchanged. Differentiating the objective energy then gives the stress transformation. The stress symmetry required by (C3), $\Sg=\Sg^T$, follows for the energy models considered here from differentiation with respect to symmetric strain. It implies a symmetric Cauchy stress $\bm\sigma=J^{-1}\F\Sg\F^T$, as required by angular-momentum balance in the present continuum.

Objectivity still permits anisotropy, because material symmetry is a different invariance. A right action $\F\mapsto\F\bm Q_m$ leaves the energy unchanged precisely when $\bm Q_m$ belongs to the material symmetry group. Section~\ref{sec:feature_order} implements this distinction with directions attached to the reference material. A superposed spatial rotation acts on $\F$ from the left and does not alter the directional stretch measured in that reference. The major symmetry of $\bm D$ is a third, separate property, which follows from (C1) rather than from the symmetry of the stress tensor.

Reference normalization (C4) makes the learned energy consistent with the undeformed state by requiring
\begin{equation}
 W(\I)=0,\qquad \ssv(\bm0)=\bm0,
\end{equation}
where $\ssv(\bm0)=\bm0$ is equivalent to $\bm P(\I)=\bm0$. The first condition fixes an arbitrary additive energy constant. The second is a substantive stress condition, and the admissible correction depends on the model class. For the Unconstrained energy, Section~\ref{sec:tiers} subtracts the raw reference value and a term linear in $\E$. That correction preserves its scalar potential, but an arbitrary linear function of $\E$ need not preserve polyconvexity in $\F$. The ICNN and ICKAN energies instead use a correction that is affine in the determinant, which preserves convexity in the independent minors (Section~\ref{sec:energy_admissibility}).

Polyconvexity (C5) supplies the curvature control that potential structure leaves open. Direct convexity in $\F$ is generally too restrictive for objective finite-strain energies with singular-volume growth, so convexity is imposed instead in independent deformation minors~\cite{ball1977,Ciarlet1988}. In two dimensions,\footnote{The standard three-dimensional definition uses $(\F,\cof\F,\det\F)$. The two-dimensional guarantee established here does not imply an admissible extension to arbitrary out-of-plane deformation.} this means that the energy admits the representation
\begin{equation}
 W(\F)=\mathcal P(\F,\det\F),\qquad
 \mathcal P\text{ convex in its independent arguments}.
 \label{eq:polyconvex_definition}
\end{equation}
Here $\mathcal P$ is defined on the convex set $\R^{2\times2}\times(0,\infty)$, treating its matrix and scalar arguments as independent. Only its evaluation in $W$ sets the second argument equal to $\det\F$. The cofactor is linear in the entries of a $2\times2$ matrix, so it need not be introduced as an additional independent argument. Convexity of $\mathcal P$ implies that $W$ cannot curve downward along an admissible rank-one perturbation of $\F$.

Convexity in $\C$ or $\E$, as imposed in strain-space formulations such as that of As'ad et al.~\cite{asad2022}, does not imply polyconvexity~\cite{GaoNeffRoventaThiel2017}, because $\E=\tfrac12(\F^T\F-\I)$ depends quadratically on $\F$. Appendix~\ref{app:finite_strain_curvature} derives the rank-one consequence of (C5), states its differential form, and gives an analytical counterexample for convexity in $\E$.

Growth (C6) guards against volume collapse and unbounded deformation, which a polyconvex energy may still allow. It requires $W\to\infty$ as $J\to0^+$, as $J\to\infty$, and as $\|\F\|\to\infty$ while $J$ remains bounded away from zero and infinity. These conditions contribute to coercivity. Existence of a minimizer also requires admissible boundary data and finite-energy competitors~\cite{Ciarlet1988}, and no constitutive barrier guarantees convergence of a full Newton iteration.

\subsection{Unconstrained energy reference}
\label{sec:tiers}
The Unconstrained energy is a smooth neural network $\widehat W_{\theta}$ acting on
\begin{equation}
 \bm x(\ee)=\big(C_{11}-1,C_{22}-1,C_{12},J-1\big)^T.
 \label{eq:unconstrained_inputs}
\end{equation}
Here $\C=\I+2\E$ and $J=\sqrt{\det\C}$, so the four inputs are functions of the same three engineering-strain components. Using components of $\C$ in a material reference basis ensures objectivity under superposed spatial rotations but does not impose isotropy, since the network may depend differently on $C_{11}$, $C_{22}$, and $C_{12}$. Denote the resulting raw energy by $\widetilde W(\ee)=\widehat W_{\theta}(\bm x(\ee))$. A generic network does not satisfy the reference conditions automatically, so the normalized energy is defined, following the consistency correction of As'ad et al.~\cite{asad2022}, as
\begin{equation}
 W_{\mathrm{unc}}(\ee)=\widetilde W(\ee)-\widetilde W(\bm0)
 -\partial_{\ee}\widetilde W(\bm0)\cdot\ee.
 \label{eq:unconstrained_energy}
\end{equation}
By construction, $W_{\mathrm{unc}}(\bm0)=0$ and $\partial_{\ee}W_{\mathrm{unc}}(\bm0)=\bm0$. Because the correction is affine in $\ee$, it leaves the Hessian of the raw network unchanged, so the reference energy and stress are fixed while the learned curvature remains unrestricted. The constructions that follow restrict this curvature while retaining a learned anisotropic response.

\subsection{Feature design: expressiveness and admissibility}
\label{sec:feature_order}
Classical invariants provide the natural starting point for describing deformation. In two dimensions, the principal invariants of $\C$ are $I_1=\tr\C$ and $\det\C=J^2$, and they determine all three invariants of the plane-strain embedding $\diag(\C,1)$. They describe principal stretches without recording their orientation relative to the material, so an energy depending only on them is isotropic.

To retain directional information, one can supplement the isotropic measures with a structural direction $\bm d$. The familiar anisotropic stretch invariant is
\begin{equation}
 I_4(\bm d)=\bm d^T\C\bm d
          =\C:(\bm d\otimes\bm d)=\|\F\bm d\|^2,
 \qquad \|\bm d\|=1.
 \label{eq:directional_invariant}
\end{equation}
It measures squared stretch along a material direction and remains unchanged by a superposed spatial rotation. Such directional invariants are standard ingredients of anisotropic hyperelastic energies~\cite{SchroderNeff2003,Gasser2006}. Here $\bm d$ need not represent a physical fiber. Learned directions serve as objective scalar descriptors of the RVE's effective anisotropy. In two dimensions, $\bm d$ is naturally accompanied by its perpendicular $\bm d^\perp$. For this orthonormal pair,
\begin{equation}
 I_4(\bm d)+I_4(\bm d^\perp)=I_1,\qquad
 \|\cof\F\,\bm d\|^2=I_4(\bm d^\perp).
 \label{eq:directional_invariant_2d}
\end{equation}
The first identity shows that the pair splits $I_1$ between two perpendicular directions, so that its two members retain the directional information that $I_1$ discards. The second shows that a cofactor-based directional measure reads the stretch in the perpendicular direction rather than introducing an independent third in-plane descriptor.

Objective directional descriptors solve only the first part of the design problem. Because the learned contribution $H(\bm z)$ is nondecreasing in each feature, the feature vector $\bm z(\F)$ also determines which deformation states it must order. For any two deformation states,
\begin{equation}
 \bm z(\F_B)\geq\bm z(\F_A)\text{ componentwise}
 \quad\Longrightarrow\quad H(\bm z(\F_B))\geq H(\bm z(\F_A)),
 \label{eq:feature_ordering}
\end{equation}
whatever the width of the network. As anticipated in Section~\ref{sec:discussion}, this ordering constrains the learned contribution but not the analytic normalization and growth terms, so the feature map must be designed together with them. Additional features remain functions of the same three engineering-strain components and introduce no new strain variable, but they embed the strain state into more variables on which the core is monotone and convex. They can enlarge the admissible response family without guaranteeing that it contains the target response.

The paired construction below is designed to retain directional stretch, reshape the componentwise ordering seen by the monotone core, and remain convex in the independent variables $(\F,J)$. It meets these three objectives by coupling directional stretch to volume through admissible determinant exponents. A feature can consequently decrease along a path when its denominator grows sufficiently, even if a numerator increases. Its reference derivative is also controlled analytically, which will allow the reference stress to be removed without sacrificing convexity.

\subsubsection{Paired directional features}
Let $m$ denote the number of scalar features. For each $i=1,\ldots,m$, let $\bm d_i=(\cos\theta_i,\sin\theta_i)^T$ and $\bm d_i^\perp=(-\sin\theta_i,\cos\theta_i)^T$. Define
\begin{equation}
 \begin{aligned}
 z_i(\F,J)
 &=\frac{I_4(\bm d_i)^{p_i}}{p_i J^{b_i}}
          +\frac{I_4(\bm d_i^\perp)^{q_i}}{q_i J^{c_i}}\\
 &=\frac{\|\F\bm d_i\|^{2p_i}}{p_i J^{b_i}}
          +\frac{\|\F\bm d_i^\perp\|^{2q_i}}{q_i J^{c_i}},
 \end{aligned}
 \label{eq:features}
\end{equation}
where $\theta_i$ sets the orientation of the reference directions, $p_i,q_i$ are the powers of the squared-stretch invariants, and $b_i,c_i$ control each contribution's coupling to the area ratio $J$. These roles provide expressiveness, but the exponents cannot be chosen arbitrarily if the feature is to remain convex. They satisfy
\begin{equation}
 p_i,q_i\geq\tfrac12,\qquad
 0\leq b_i\leq2p_i-1,\qquad 0\leq c_i\leq2q_i-1.
 \label{eq:featureconstraints}
\end{equation}
The lower bounds on $p_i$ and $q_i$ make the corresponding powers of the directional norms convex, while the upper bounds on $b_i$ and $c_i$ prevent their coupling with $J$ from introducing a negative-curvature direction. Appendix~\ref{app:convex} derives these conditions from the Hessian of a single stretch--determinant contribution.
In the learned-feature variants, $\theta_i$ and the four exponent parameters are optimized, with $p_i,q_i,b_i,c_i$ subject to Eq.~\eqref{eq:featureconstraints}. The fixed-feature controls of Section~\ref{sec:constitutive_training} hold the same initialized feature parameters constant while training the core. With unconstrained parameters $\widehat p_i,\widehat q_i,\widehat b_i,\widehat c_i\in\R$ and the functions $\operatorname{softplus}(x)=\log(1+e^x)$ and $\operatorname{sigmoid}(x)=(1+e^{-x})^{-1}$, whose ranges are $(0,\infty)$ and $(0,1)$, the exponents are
\begin{equation}
 \begin{aligned}
 p_i&=\tfrac12+\operatorname{softplus}(\widehat p_i),
 &b_i&=(2p_i-1)\operatorname{sigmoid}(\widehat b_i),\\
 q_i&=\tfrac12+\operatorname{softplus}(\widehat q_i),
 &c_i&=(2q_i-1)\operatorname{sigmoid}(\widehat c_i).
 \end{aligned}
 \label{eq:feature_parameterization}
\end{equation}
The angles require no admissibility bound, because the maps $\F\mapsto\F\bm d_i$ and $\F\mapsto\F\bm d_i^\perp$ remain linear in $\F$ for any fixed $\theta_i$. Equation~\eqref{eq:feature_parameterization} satisfies the bounds of Eq.~\eqref{eq:featureconstraints} strictly for all finite parameter values, approaching the boundary only as a limit, so no penalty is needed during training. Each $z_i$ is one scalar feature with two perpendicular contributions. Because their exponents need not be equal, the pairing does not impose invariance under their interchange.

The construction was also designed to simplify reference normalization. Its paired and scaled contributions make the reference strain gradient of every feature isotropic. Consequently, the learned contribution has an isotropic reference stress, which a single scalar determinant correction can cancel in Section~\ref{sec:energy_admissibility}. The following result gathers the objectivity, convexity, and reference-state properties needed there.

\begin{proposition}
\label{prop:paired_features}
Under the constraints in Eq.~\eqref{eq:featureconstraints}, each feature in Eq.~\eqref{eq:features} is objective and convex in the independent variables $(\F,J)$ on $J>0$. Its value and engineering-strain gradient at the identity, evaluated with $J=\det\F$, are
\begin{equation}
 z_i^0=\frac1{p_i}+\frac1{q_i},\qquad
 \partial_{\ee}z_i\big|_{\bm0}=\rho_i(1,1,0)^T,
 \qquad \rho_i=2-\frac{b_i}{p_i}-\frac{c_i}{q_i}.
 \label{eq:referencefeatures}
\end{equation}
\end{proposition}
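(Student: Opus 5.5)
The plan is to treat each feature $z_i$ as a sum of two terms of the same form,
\[
 g(\bm v,J)=\frac{\|\bm v\|^{2p}}{p\,J^{b}},\qquad \bm v=\F\bm d\ \text{or}\ \F\bm d^\perp ,
\]
and to establish each claimed property term by term. Linearity in $\F$ then transfers the properties to the pair, and summation transfers them to $z_i$.

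\emph{Objectivity.} For a proper rotation $\bm Q$ we have $\|\bm Q\F\bm d\|=\|\F\bm d\|$ and $\det(\bm Q\F)=\det\F$. Hence each term, and therefore $z_i$, is unchanged under $\F\mapsto\bm Q\F$. Equivalently, $z_i$ depends on $\F$ only through $I_4(\bm d_i)=\bm d_i^T\C\bm d_i$, $I_4(\bm d_i^\perp)$ and $J=\sqrt{\det\C}$.

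\emph{Convexity.} The map $(\F,J)\mapsto(\F\bm d,J)$ is linear. It therefore suffices to prove that $g$ is jointly convex on $\R^2\times(0,\infty)$. A further reduction makes this one-dimensional in the vector variable. Write $g=h(\|\bm v\|,J)$ with
\[
 h(r,J)=\frac{r^{2p}}{p\,J^{b}},\qquad r\ge0 .
\]
The norm is convex, and for $2p\ge1$ the function $h$ is nondecreasing in $r$. So joint convexity of $h$ on $[0,\infty)\times(0,\infty)$ gives joint convexity of $g$ by the standard composition rule. Strictly, one applies the rule to $(\bm v,J)\mapsto(\|\bm v\|,J)$, which is convex in its first component and affine in its second.

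For $h$, compute the Hessian of $r^{\alpha}J^{-b}$ with $\alpha=2p$:
\[
 h_{rr}\propto\alpha(\alpha-1),\qquad h_{JJ}\propto b(b+1),\qquad h_{rJ}\propto-\alpha b,
\]
all with the common positive factor $r^{\alpha-2}J^{-b-2}/p$ and the appropriate powers of $r$ and $J$. The diagonal entries are nonnegative because $\alpha\ge1$ and $b\ge0$. The determinant is proportional to
\[
 \alpha b\bigl[(\alpha-1)(b+1)-\alpha b\bigr]=\alpha b(\alpha-1-b),
\]
which is nonnegative exactly when $b\le2p-1$. This is precisely Eq.~\eqref{eq:featureconstraints}, and it is presumably the computation in Appendix~\ref{app:convex}.

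The main obstacle is the boundary $r=0$ when $1\le\alpha<2$, where $h$ is not twice differentiable. I would handle it by proving convexity on the open set $r>0$ and extending by continuity. One must also check that convexity along segments through $r=0$ survives; this follows because $h(0,J)=0$ and $h\ge0$ there, so the continuous extension of a convex function to the closure remains convex.

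\emph{Reference values.} At $\F=\I$ we have $I_4=1$ and $J=1$, so $z_i^0=1/p_i+1/q_i$. For the gradient, use $I_4(\bm d)=1+2\bm d^T\E\bm d$. This gives
\[
 \partial_{\ee}I_4(\bm d)=2\,(d_1^2,\ d_2^2,\ d_1d_2)^T,
\]
where the factor on the shear entry reflects $e_3=2E_{12}$. Together with $J=\sqrt{\det(\I+2\E)}$, this gives $\partial_{\ee}J|_{\bm0}=(1,1,0)^T$.

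The chain rule at the reference state then yields
\[
 \partial_{\ee}z_i\big|_{\bm0}
 =2(d_1^2,d_2^2,d_1d_2)^T+2(d_2^2,d_1^2,-d_1d_2)^T-\Bigl(\tfrac{b_i}{p_i}+\tfrac{c_i}{q_i}\Bigr)(1,1,0)^T .
\]
Since $d_1^2+d_2^2=1$ and the shear contributions cancel, the first two terms sum to $2(1,1,0)^T$. This gives $\rho_i=2-b_i/p_i-c_i/q_i$, as claimed in Eq.~\eqref{eq:referencefeatures}.
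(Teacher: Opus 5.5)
Your proof is correct and follows the paper's proof in substance: the same objectivity observation, the same $2\times2$ radial--determinant Hessian block with determinant proportional to $\alpha b(\alpha-1-b)$, and the same chain-rule evaluation of the reference gradient. The differences are minor. First, you handle the directions transverse to $\bm v$ by composing the jointly convex, $r$-nondecreasing $h(r,J)$ with the norm, whereas the paper computes the transverse Hessian eigenvalues directly. This is a slight gain, since your Hessian argument lives on the convex set $(0,\infty)^2$, so the extension to $r=0$ needs only continuity of $h$ and not the remark that $h(0,J)=0$. Second, you compute the gradient in engineering-strain components, where the paper uses $\bm d_i\otimes\bm d_i+\bm d_i^\perp\otimes\bm d_i^\perp=\I$.
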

The proof is given in Appendix~\ref{app:convex}. The mechanism behind the reference gradient is that $1/p_i$ and $1/q_i$ cancel the stretch-power multipliers upon differentiation. The two perpendicular directional contributions then sum to $2\I$, while the determinant terms contribute $-(b_i/p_i+c_i/q_i)\I$. Hence $\rho_i$ is the amplitude of an isotropic reference gradient, with equal normal components and zero shear. This reference property does not imply isotropy of the response away from the reference state.

The raw features are now mechanically admissible, but their numerical magnitudes may differ. Before the features enter the core, they are centered and scaled as $y_i=(z_i-z_i^{\rm cen})/\tau_i$, with a centering value $z_i^{\rm cen}$ and a scale $\tau_i>0$ that are independent of deformation for any fixed parameter set. This positive affine conditioning preserves both convexity in independent $(\F,J)$ and the componentwise ordering required by the monotone core. The choices used in the constitutive assessments are specified in Section~\ref{sec:training_protocol}.

\subsection{Monotone ICNN and integrated-hat ICKAN cores}
\label{sec:convex_cores}
Composition with the convex feature map preserves convexity only when the core, acting on the conditioned vector $\bm y=(y_1,\ldots,y_m)^T$, is convex and componentwise nondecreasing~\cite{klein2022,linden2023}. The two cores enforce these properties differently. The ICNN learns nonnegative linear maps and applies fixed softplus activations at its hidden nodes, whereas the ICKAN learns a convex, nondecreasing scalar function on every edge and sums the incoming contributions at each node (Fig.~\ref{fig:convex_core_architectures}).

\begin{figure}[tbp]
 \centering
\definecolor{icnncolor}{HTML}{2AA780}%
\definecolor{ickancolor}{HTML}{2B5DAA}%
\begin{tikzpicture}[x=1cm,y=1cm,font=\small,
  weight/.style={draw=icnncolor!90!black,line width=.75pt},
  connection/.style={draw=black!60,line width=.6pt},
  inputmap/.style={draw=violet!65!black,dashed,line width=.65pt},
  maparrow/.style={inputmap,-{Latex[length=1.4mm]}},
  dot/.style={circle,fill=black,inner sep=0pt,minimum size=4pt},
  callout/.style={-{Latex[length=1.8mm]},draw=black,line width=.75pt},
  annotation/.style={font=\small\itshape,align=center,text width=3.15cm,inner sep=2pt},
  pics/softplus/.style={code={
    \draw[draw=black!55,fill=white,line width=.45pt]
      (-.44,-.32) rectangle (.44,.32);
    \draw[black!25,line width=.2pt]
      (-.33,-.22) -- (.34,-.22) (-.33,-.22) -- (-.33,.24);
    \draw[black!70,line width=.7pt]
      plot[domain=-3:3,samples=35]
      ({.11*\x},{-.22+.42*(ln(1+exp(\x))-ln(1+exp(-3)))/3});
  }},
  pics/hatcurve/.style args={#1/#2}{code={
    \draw[draw=ickancolor,fill=white,line width=.4pt]
      (-.44,-.32) rectangle (.44,.32);
    \draw[black!25,line width=.2pt]
      (-.33,-.22) -- (.34,-.22) (-.33,-.22) -- (-.33,.24);
    \draw[ickancolor!85!black,line width=.6pt]
      plot[domain=-1.5:1.5,samples=41]
      ({.22*\x},{-.22+.42*(#1*(\x+1.5)
        +#2*(max(\x+1,0)^3-2*max(\x,0)^3+max(\x-1,0)^3)/6)
        /(3*#1+1.5*#2)});
  }}
]
 \path[use as bounding box] (0,.45) rectangle (16.2,9.45);
 \fill[icnncolor!7] (1.6,8.75) rectangle (8.9,9.45);
 \fill[ickancolor!7] (8.9,8.75) rectangle (16.2,9.45);
 \fill[black!2] (0,1.1) rectangle (1.6,9.45);
 \draw[black!65,line width=.5pt] (0,.45) rectangle (16.2,9.45);
 \foreach \tablex in {1.6,8.9}
   \draw[black!65,line width=.5pt] (\tablex,1.1) -- (\tablex,9.45);
 \foreach \tabley in {1.1,7.2,8.75}
   \draw[black!65,line width=.5pt] (0,\tabley) -- (16.2,\tabley);
 \node[font=\small\bfseries] at (.8,9.1) {Core};
 \node[font=\small\bfseries] at (5.25,9.1) {Monotone ICNN};
 \node[font=\small\bfseries] at (12.55,9.1) {Monotone integrated-hat ICKAN};
 \node[align=center,font=\footnotesize] at (.8,7.98) {Layer\\operation};
 \node[align=center,font=\footnotesize] at (.8,4.15) {Multilayer\\core};
 \node[align=center,text width=6.6cm] at (5.25,7.98)
   {\textcolor{icnncolor!80!black}{Learned nonnegative linear weights} + biases\\[3pt]
    $\longrightarrow$ \textcolor{black!70}{fixed softplus at hidden nodes}};
 \node[align=center,text width=6.6cm] at (12.55,7.98)
   {\textcolor{ickancolor}{One learned scalar function per edge}\\[2pt]
    \textcolor{ickancolor}{(convex and nondecreasing)}\\[2pt]
    $\longrightarrow$ sum at nodes + biases};

 \begin{scope}[xshift=1.6cm,yshift=1.35cm]
  \node[anchor=west,font=\small\bfseries] at (.12,5.55) {(a)};
  \node[font=\footnotesize,text=violet!65!black] (icinputvector) at (1.6,5.62)
    {input $\bm y$};
  \foreach \level in {4.05,1.70}
    \filldraw[black!45,fill=black!3,line width=.75pt]
      (.02,\level-.25) rectangle (3.18,\level+.25);
  \node[dot] (icinput1) at (.9,5.10) {};
  \node[dot] (icinputm) at (2.3,5.10) {};
  \node[font=\footnotesize] at (.9,5.34) {$y_1$};
  \node[font=\footnotesize] at (2.3,5.34) {$y_m$};
  \node[font=\footnotesize] at (1.6,5.10) {$\cdots$};
  \node[dot] (icoutput) at (1.6,.55) {};
  \node[font=\footnotesize] at (1.6,.12) {$\widehat H(\bm y)$};
  \foreach \j in {0,...,2} {
    \pgfmathsetmacro{\hiddenx}{.35+1.25*\j}
    \draw[weight] (icinput1) -- (\hiddenx,4.05);
    \draw[weight] (icinputm) -- (\hiddenx,4.05);
    \draw[weight] (\hiddenx,1.70) -- (icoutput);
    \foreach \k in {0,...,2} {
      \pgfmathsetmacro{\nextx}{.35+1.25*\k}
      \draw[weight] (\hiddenx,4.05) -- (\nextx,1.70);
    }
  }
  \fill[white] (.02,2.70) rectangle (3.23,3.13);
  \foreach \hiddenx in {.35,1.6,2.85}
    \node at (\hiddenx,2.91) {$\vdots$};

  \draw[inputmap,rounded corners=2pt] (icinputvector.east)
    -- (3.7,5.62) -- (3.7,.55);
  \draw[maparrow] (3.7,.55) -- (icoutput.east);
  \draw[inputmap] (3.7,2.22) -- (.35,2.22);
  \foreach \hiddenx in {.35,1.6,2.85}
    \draw[maparrow] (\hiddenx,2.22) -- (\hiddenx,1.95);
  \node[fill=white,inner sep=1pt,text=violet!65!black] at (3.7,2.91) {$\vdots$};
  \foreach \j in {0,...,2} {
    \pgfmathsetmacro{\hiddenx}{.35+1.25*\j}
    \pic[scale=.58] at (\hiddenx,4.05) {softplus};
    \pic[scale=.58] at (\hiddenx,1.70) {softplus};
  }
  \node[annotation,text=black!70] (icfixed) at (5.65,4.15)
    {Nonlinear, fixed\\softplus activations};
  \node[annotation,text=icnncolor!80!black] (icweights) at (5.65,3.05)
    {Learned nonnegative\\linear weights};
  \node[annotation,text=violet!65!black] (icmaps) at (5.65,1.25)
    {Original input to later\\hidden layers and output};
  \draw[callout] (3.21,4.05) to[out=0,in=195] (icfixed.west);
  \draw[callout] (3.16,3.18) to[out=0,in=180] (icweights.west);
  \draw[callout] (3.77,1.25) -- (icmaps.west);
 \end{scope}

 \begin{scope}[xshift=8.9cm,yshift=1.35cm]
  \node[anchor=west,font=\small\bfseries] at (.12,5.55) {(b)};
  \node[font=\footnotesize,text=violet!65!black] (kaninputvector) at (1.6,5.62)
    {input $\bm y$};
  \foreach \level in {4.55,2.25,1.05}
    \filldraw[ickancolor,fill=ickancolor!4,line width=.75pt]
      (0,\level-.22) rectangle (3.26,\level+.22);
  \node[dot] (kaninput1) at (.9,5.10) {};
  \node[dot] (kaninputm) at (2.3,5.10) {};
  \node[font=\footnotesize] at (.9,5.34) {$y_1$};
  \node[font=\footnotesize] at (2.3,5.34) {$y_m$};
  \node[font=\footnotesize] at (1.6,5.10) {$\cdots$};
  \node[dot] (kanoutput) at (1.6,.55) {};
  \node[font=\footnotesize] at (1.6,.12) {$\widehat H(\bm y)$};
  \foreach \j in {0,...,2} {
    \pgfmathsetmacro{\hiddenx}{.35+1.25*\j}
    \node[dot] (kanfirst\j) at (\hiddenx,3.95) {};
    \node[dot] (kanlast\j) at (\hiddenx,1.70) {};
  }
  \foreach \j in {0,...,2} {
    \pgfmathsetmacro{\hiddenx}{.35+1.25*\j}
    \draw[connection] (kanlast\j) -- (\hiddenx,1.05) -- (kanoutput);
    \foreach \r in {0,1} {
      \pgfmathsetmacro{\edgex}{.27+.53*(2*\j+\r)}
      \ifnum\r=0
        \draw[connection] (kaninput1) -- (\edgex,4.55) -- (kanfirst\j);
      \else
        \draw[connection] (kaninputm) -- (\edgex,4.55) -- (kanfirst\j);
      \fi
    }
    \foreach \k in {0,...,2} {
      \pgfmathsetmacro{\edgex}{.16+.36*(3*\j+\k)}
      \draw[connection] (kanfirst\k) -- (\edgex,2.25) -- (kanlast\j);
    }
  }
  \fill[white] (.02,2.78) rectangle (3.27,3.20);
  \foreach \hiddenx in {.35,1.6,2.85}
    \node at (\hiddenx,2.99) {$\vdots$};
  \foreach \j in {0,...,2} {
    \pgfmathsetmacro{\hiddenx}{.35+1.25*\j}
    \pgfmathsetmacro{\slope}{.08+.25*\j}
    \pgfmathsetmacro{\curvature}{1.5-.45*\j}
    \pic[scale=.45] at (\hiddenx,1.05) {hatcurve=\slope/\curvature};
    \foreach \r in {0,1} {
      \pgfmathsetmacro{\edgex}{.27+.53*(2*\j+\r)}
      \pgfmathsetmacro{\slope}{.08+.15*\j+.2*\r}
      \pgfmathsetmacro{\curvature}{1.4-.3*\j-.2*\r}
      \pic[scale=.39] at (\edgex,4.55) {hatcurve=\slope/\curvature};
    }
    \foreach \k in {0,...,2} {
      \pgfmathsetmacro{\edgex}{.16+.36*(3*\j+\k)}
      \pgfmathsetmacro{\slope}{.08+.11*\j+.14*\k}
      \pgfmathsetmacro{\curvature}{1.5-.23*\j-.25*\k}
      \pic[scale=.30] at (\edgex,2.25) {hatcurve=\slope/\curvature};
    }
  }
  \draw[inputmap,rounded corners=2pt] (kaninputvector.east)
    -- (3.7,5.62) -- (3.7,.55);
  \draw[maparrow] (3.7,.55) -- (kanoutput.east);
  \node[annotation,text=ickancolor] (kanfunctions) at (5.65,4.42)
    {One learned convex,\\nondecreasing function\\per edge};
  \node[annotation] (kansums) at (5.65,3.13) {Sum operation\\at nodes};
  \node[annotation,text=violet!65!black] (kanmap) at (5.65,1.25)
    {Linear map from\\input to output};
  \draw[callout] (3.29,4.55) to[out=0,in=170] (kanfunctions.west);
  \draw[callout] (3.0,3.95) to[out=0,in=130] (kansums.west);
  \draw[callout] (3.77,1.25) -- (kanmap.west);
 \end{scope}

 \draw[weight,line width=1pt] (.35,.78) -- (.85,.78);
 \node[anchor=west,font=\footnotesize] at (.98,.78) {Learned linear weights};
 \pic[scale=.40] at (5.05,.78) {softplus};
 \node[anchor=west,font=\footnotesize] at (5.42,.78) {Fixed softplus};
 \pic[scale=.40] at (9.05,.78) {hatcurve=.1/1.3};
 \node[anchor=west,font=\footnotesize] at (9.42,.78) {Learned spline functions};
 \draw[inputmap,line width=1pt] (13.35,.78) -- (13.85,.78);
 \node[anchor=west,font=\footnotesize] at (13.98,.78) {Input maps};
\end{tikzpicture}
 \caption{Multilayer monotone cores mapping the conditioned paired features $\bm y$ to the scalar contribution $\widehat H(\bm y)$. (a) The ICNN learns nonnegative linear maps and applies fixed softplus activations at hidden nodes. (b) The ICKAN learns one convex, nondecreasing scalar function per edge and sums them at nodes. Purple dashed paths denote nonnegative maps of the complete input, and ellipses denote intervening layers. Glyph counts indicate where each operation acts, not deployed widths or parameter counts. Unrestricted biases are omitted.}
 \label{fig:convex_core_architectures}
\end{figure}

\subsubsection{Monotone ICNN}
\label{sec:icnn_core}
Let $L$ denote the number of hidden layers and $\bm a^{(\ell)}$ the vector of activations in layer $\ell$, for $\ell=1,\ldots,L$. Here $\widehat H(\bm y)$ denotes the scalar neural core acting on the conditioned features $\bm y$, to which the complete energy of Section~\ref{sec:energy_admissibility} adds the reference-normalization and growth terms. The deployed ICNN has the form
\begin{align}
 \bm a^{(1)}&=\phi(\bm U_1\bm y+\bm b_1),\nonumber\\
 \bm a^{(\ell)}&=\phi(\bm W_\ell\bm a^{(\ell-1)}+\bm U_\ell\bm y+\bm b_\ell),
 \qquad \ell=2,\ldots,L,\nonumber\\
 \widehat H(\bm y)&=\bm v^T\bm a^{(L)}+\bm w^T\bm y+b_{\rm out},
 \label{eq:monotone_icnn}
\end{align}
where $\phi=\operatorname{softplus}$ is applied componentwise. Its derivatives satisfy $\phi'(x)=\operatorname{sigmoid}(x)>0$ and $\phi''(x)=\operatorname{sigmoid}(x)[1-\operatorname{sigmoid}(x)]>0$. The activation is smooth, increasing, and convex, and its smoothness makes the core $C^2$, as required for a continuous constitutive tangent. The matrices $\bm U_\ell$ and $\bm W_\ell$ contain the weights from $\bm y$ and the preceding hidden layer, respectively, and $\bm b_\ell$ is the layer bias vector. The output weight vectors $\bm v$ and $\bm w$ act on $\bm a^{(L)}$ and $\bm y$, respectively, and $b_{\rm out}$ is an additive scalar output bias.

Every entry of $\bm W_\ell,\bm U_\ell,\bm v,\bm w$ is nonnegative, enforced by softplus transformations of unconstrained weight parameters. These signs play two related but distinct roles. The matrices $\bm W_\ell$ and the output vector $\bm v$ multiply convex hidden activations, so their nonnegativity preserves convexity as well as monotonicity. The maps $\bm U_\ell\bm y$ and $\bm w^T\bm y$ are affine, so they leave convexity unchanged for either sign, but nonnegative $\bm U_\ell$ and $\bm w$ are required for componentwise monotonicity. Biases are unrestricted because adding constants to a preactivation preserves both properties. Hidden biases can change the core's derivatives with respect to $\bm y$, whereas $b_{\rm out}$ only shifts $\widehat H$ and leaves its derivatives unchanged.

The preservation argument proceeds layer by layer. The first hidden layer is convex because softplus is convex and its argument is affine, and it is nondecreasing because $\bm U_1$ is nonnegative. Assuming that $\bm a^{(\ell-1)}$ is convex and nondecreasing, the nonnegative map $\bm W_\ell\bm a^{(\ell-1)}$, the nonnegative affine input map, and composition with $\phi$ preserve both properties in layer $\ell$. The same argument applies to the output, proving by induction that $\widehat H$ is convex and componentwise nondecreasing. This is the ICNN composition argument of Amos et al.~\cite{amos2017}, supplemented here by nonnegative direct-input maps to guarantee monotonicity.

\subsubsection{Monotone integrated-hat ICKAN}
\label{sec:ickan_core}
With $\bm h^{(0)}=\bm y$ and $n_\ell$ the width of layer $\ell$, where $n_0=m$, the hidden layers and scalar output of the ICKAN are
\begin{align}
 h_j^{(\ell)}
 &=\sum_{i=1}^{n_{\ell-1}}f_{ji}^{(\ell)}\!\left(h_i^{(\ell-1)}\right)
   +b_j^{(\ell)},
 &&j=1,\ldots,n_\ell,\quad \ell=1,\ldots,L,\nonumber\\
 \widehat H(\bm y)
 &=\sum_{i=1}^{n_L}f_i^{\rm out}\!\left(h_i^{(L)}\right)
   +\bm w^T\bm y+b_{\rm out},
 &&\bm w\geq\bm0.
 \label{eq:monotone_ickan}
\end{align}
Here $f_{ji}^{(\ell)}$ transforms the scalar carried by one edge and $h_j^{(\ell)}$ is the sum formed at a node. The nonnegative linear skip $\bm w^T\bm y$ connects the conditioned input directly to the scalar output.
\begin{samepage}
It remains to specify the functions appearing on the edges. For any $f_{ji}^{(\ell)}$ or $f_i^{\rm out}$ in Eq.~\eqref{eq:monotone_ickan}, let $x$ denote its scalar input. Each edge uses the common form
\begin{equation}
 f(x)=a x+\sum_k c_k B_k(x),\qquad a,c_k\geq0,
 \label{eq:splineedge}
\end{equation}
where $a$ is the learned linear slope, $c_k$ are learned weights of the spline bases $B_k$, and $k$ indexes these bases. Softplus transformations of unconstrained parameters impose the signs of $a$ and $c_k$. Additive constants are represented once, through the unrestricted node biases in Eq.~\eqref{eq:monotone_ickan}, rather than redundantly on every incoming edge.
\end{samepage}

To exclude negative slope and curvature by construction, the curvature of each edge is parameterized directly. Each $B_k''$ is a nonnegative triangular hat centered at a knot $\kappa_k$. The knots form a fixed uniform grid with spacing $h>0$ that is not optimized. Writing $s=(x-\kappa_k)/h$, one curvature basis is
\begin{equation}
 B_k''(x)=\max(1-|s|,0).
 \label{eq:hat_curvature}
\end{equation}
Integrating twice from the left conditions $B_k(x)=B_k'(x)=0$ for $x\leq\kappa_k-h$ gives
\begin{equation}
 B_k(x)=h^2\begin{cases}
 0,&s\leq-1,\\
 (s+1)^3/6,&-1<s\leq0,\\
 (-s^3+3s^2+3s+1)/6,&0<s<1,\\
 s,&s\geq1.
 \end{cases}
 \label{eq:hat}
\end{equation}
Figure~\ref{fig:integrated_hat_construction} shows both levels of the construction: integration of one curvature basis and assembly of a learned edge from several bases.
\begin{figure}[tbp]
 \centering
\begin{tikzpicture}[x=1cm,y=1cm,font=\small,
  axis/.style={draw=black!45,line width=.45pt},
  guide/.style={draw=black!30,dashed,line width=.4pt},
  curvature/.style={draw=orange!85!black,line width=1.25pt},
  slope/.style={draw=teal!75!black,line width=1.25pt},
  basis/.style={draw=blue!65!black,line width=1.25pt},
  total/.style={draw=black,line width=1.35pt},
  integrate/.style={-{Latex[length=1.5mm]},draw=black!60,line width=.55pt}
]
 \path[use as bounding box] (0,0) rectangle (16.2,6.35);
 \draw[black!55,line width=.5pt] (0,0) rectangle (7.95,6.35);
 \draw[black!55,line width=.5pt] (8.25,0) rectangle (16.2,6.35);

 \node[font=\small\bfseries] at (3.98,5.96) {(a) One integrated-hat basis};
 \node[font=\small\bfseries] at (12.23,5.96) {(b) One learned ICKAN edge};

 \begin{scope}[xshift=3.78cm,yshift=.62cm,x=1.40cm,y=.80cm]
  \foreach \yy/\lab in {4.65/{$B_k''$},2.70/{$B_k'$},.55/{$B_k$}} {
    \draw[axis] (-1.45,\yy) -- (1.55,\yy);
    \node[anchor=east] at (-1.60,\yy) {\lab};
  }
  \foreach \xx in {-1,0,1}
    \draw[guide] (\xx,.42) -- (\xx,5.55);

  \draw[curvature] (-1.45,4.65) -- (-1,4.65) -- (0,5.55)
    -- (1,4.65) -- (1.55,4.65);

  \draw[slope] (-1.45,2.70) -- (-1,2.70);
  \draw[slope] plot[domain=-1:0,samples=35]
    (\x,{2.70+.72*(\x+1)^2/2});
  \draw[slope] plot[domain=0:1,samples=35]
    (\x,{2.70+.72*(.5+\x-.5*\x^2)});
  \draw[slope] (1,3.42) -- (1.55,3.42);

  \draw[basis] (-1.45,.55) -- (-1,.55);
  \draw[basis] plot[domain=-1:0,samples=35]
    (\x,{.55+.72*(\x+1)^3/6});
  \draw[basis] plot[domain=0:1,samples=35]
    (\x,{.55+.72*(-\x^3+3*\x^2+3*\x+1)/6});
  \draw[basis] plot[domain=1:1.55,samples=15]
    (\x,{.55+.72*\x});

  \draw[integrate] (1.78,4.15) -- node[right,font=\footnotesize] {$\int\!\dd x$} (1.78,3.35);
  \draw[integrate] (1.78,2.18) -- node[right,font=\footnotesize] {$\int\!\dd x$} (1.78,1.38);
  \node[font=\footnotesize,anchor=north] at (-1,.34) {$\kappa_k-h$};
  \node[font=\footnotesize,anchor=north] at (0,.34) {$\kappa_k$};
  \node[font=\footnotesize,anchor=north] at (1,.34) {$\kappa_k+h$};
  \node[font=\footnotesize,anchor=west] at (1.58,.55) {$x$};
 \end{scope}

 \begin{scope}[xshift=12.11cm,yshift=.62cm,x=1.05cm,y=.80cm]
  \foreach \yy/\lab in {4.65/{$f''$},2.70/{$f'$},.55/{$f$}} {
    \draw[axis] (-2.05,\yy) -- (2.05,\yy);
    \node[anchor=east] at (-2.20,\yy) {\lab};
  }

  \draw[orange!85!black,line width=.6pt]
    (-1.75,4.65) -- (-1.05,4.97) -- (-.35,4.65);
  \draw[orange!85!black,line width=.6pt]
    (-.80,4.65) -- (0,5.55) -- (.80,4.65);
  \draw[orange!85!black,line width=.6pt]
    (.15,4.65) -- (.90,5.15) -- (1.65,4.65);
  \draw[total] plot[domain=-1.85:1.75,samples=90]
    (\x,{4.65+.88*(.35*max(1-abs((\x+1.05)/.7),0)
      +max(1-abs(\x/.8),0)+.55*max(1-abs((\x-.9)/.75),0))});
  \node[font=\scriptsize,text=orange!85!black] at (-1.05,5.15) {$c_1$};
  \node[font=\scriptsize,text=orange!85!black] at (0,5.74) {$c_2$};
  \node[font=\scriptsize,text=orange!85!black] at (.90,5.35) {$c_3$};

  \draw[slope] plot[smooth,tension=.65] coordinates {
    (-2.05,2.82) (-1.55,2.83) (-1.10,2.96) (-.65,3.08)
    (-.25,3.28) (.15,3.66) (.55,3.88) (.95,4.00)
    (1.35,4.16) (1.75,4.20) (2.05,4.20)};
  \draw[basis] plot[smooth,tension=.55] coordinates {
    (-2.05,.55) (-1.55,.57) (-1.05,.62) (-.55,.72)
    (-.05,.90) (.45,1.16) (.95,1.50) (1.45,1.85)
    (2.05,2.15)};

  \draw[integrate] (2.25,4.15) -- node[right,font=\footnotesize] {$\int\!\dd x$} (2.25,3.35);
  \draw[integrate] (2.25,2.18) -- node[right,font=\footnotesize] {$\int\!\dd x$} (2.25,1.38);
  \node[font=\footnotesize,anchor=west] at (2.08,.55) {$x$};
  \node[font=\scriptsize,align=center] at (0,-.03)
    {$a\geq0$: slope,\quad $c_k\geq0$: curvature};
 \end{scope}
\end{tikzpicture}
 \caption{Integrated-hat parameterization of an ICKAN edge. (a) A localized nonnegative curvature hat integrates to a nonnegative slope and a convex $C^2$ basis with a zero left tail and a linear right tail. (b) Nonnegative learned coefficients combine several fixed bases into a nonnegative edge curvature, and integration gives a nondecreasing slope and a convex, nondecreasing connection. The curves illustrate the analytic construction, not the deployed grid.}
 \label{fig:integrated_hat_construction}
\end{figure}

The value and first two derivatives in Eq.~\eqref{eq:hat} match at every join, and $B_k''$ vanishes at the outer joins, so $B_k$ is globally $C^2$. Its curvature is localized to $[\kappa_k-h,\kappa_k+h]$, whereas the integrated basis has a zero left tail and a linear right tail. Moreover, $B_k'(-\infty)=0$ and $B_k''\geq0$ imply $B_k'\geq0$. The complete edge therefore satisfies
\begin{equation}
 f''(x)=\sum_k c_kB_k''(x)\geq0,
 \qquad
 f'(x)=a+\sum_k c_kB_k'(x)\geq0,
 \label{eq:splineedge_guarantees}
\end{equation}
for every learned parameter set with $a,c_k\geq0$, which makes each edge convex and nondecreasing by construction.

Because every edge is convex and nondecreasing, induction across the layers of Eq.~\eqref{eq:monotone_ickan} shows that every $h_j^{(\ell)}$, and finally $\widehat H$, is convex and componentwise nondecreasing in $\bm y$. The nonnegative input-to-output skip preserves both properties. The fixed knot grid and finite number of bases limit where an individual edge can learn curvature, since outside their combined support the edge has linear tails. These tails are linear in the scalar input $x$ only, and the complete energy remains nonlinear in $\F$ through the features and the analytic growth terms. The grid used in the constitutive assessment is reported in Section~\ref{sec:training_protocol}. This core is inspired by the monotonic input-convex KAN framework~\cite{thakolkaran2025}, with the integrated-hat bases and analytic tails specified here.

\subsection{The complete energy and its admissibility guarantees}
\label{sec:energy_admissibility}
The physical feature vector $\bm z$, its conditioned representation $\bm y(\bm z)$, and the scalar neural core $\widehat H(\bm y)$ together define the learned contribution
\begin{equation}
 H(\bm z)=\widehat H\bigl(\bm y(\bm z)\bigr).
 \label{eq:conditioned_neural_contribution}
\end{equation}
This term represents the anisotropic response. The energy and stress references and the growth toward the boundary of the admissible deformation domain are set by analytic terms.

The residual stress produced by $H$ at the undeformed state is quantified first. With $\bm z^0=\bm z(\I,1)$ and the feature-gradient factors $\rho_i$ of Eq.~\eqref{eq:referencefeatures}, the core sensitivities and the scalar amplitude of this residual stress are
\begin{equation}
 h_i=\partial_{z_i}H(\bm z^0)\geq0,
 \qquad
 r=\sum_i h_i\rho_i.
\end{equation}
For a fixed parameter set, $H(\bm z^0)$ and $r$ are constants with respect to deformation. The complete stored energy then combines the learned contribution with analytic reference-normalization and growth terms,
\begin{align}
 W(\F)={}&\underbrace{H(\bm z)-H(\bm z^0)}_{\text{(i) }W_{\rm learn}}
 +\underbrace{\big[-r(J-1)\big]}_{\text{(ii) }W_{\rm corr}}\nonumber\\
 &+\underbrace{\alpha(J-1-\log J)+\frac\beta2(J-1)^2}_{\text{(iii) }W_{\rm vol}}\nonumber\\
 &+\underbrace{\varepsilon\left(\frac{\tr\C}{2}-1-\log J\right)}_{\text{(iv) }W_{\rm dist}},
 \qquad \alpha,\beta,\varepsilon>0.
 \label{eq:energy}
\end{align}
Here $J=\det\F$, $\C=\F^T\F$, and $\bm z=\bm z(\F,J)$. The positive energy-density coefficients $\alpha$, $\beta$, and $\varepsilon$ weight the analytic growth terms. For either core, the composition in Eq.~\eqref{eq:conditioned_neural_contribution} is twice continuously differentiable, convex, and componentwise nondecreasing in $\bm z$.

Each of the four contributions named by the underbraces solves a distinct constitutive problem:
\begin{itemize}
 \item[(i)] The term $W_{\rm learn}$ carries the learned response and fixes the energy reference. At $\F=\I$, the features equal $\bm z^0$, and hence
 \begin{equation}
  W(\I)=H(\bm z^0)-H(\bm z^0)=0.
  \label{eq:energy_reference_value}
 \end{equation}
 The remaining terms also vanish because $J=1$ and $\tr\C=2$. Since $H(\bm z^0)$ is constant with respect to deformation, its subtraction fixes the energy reference without changing the stress or tangent.

 \item[(ii)] The correction $W_{\rm corr}$ removes the reference stress, which a constant subtraction cannot change. By the chain rule and Eq.~\eqref{eq:referencefeatures}, the learned contribution has the reference value
 \begin{equation}
  \left.\nabla_{\ee}H(\bm z)\right|_0
  =\sum_i h_i\left.\nabla_{\ee}z_i\right|_0
  =\sum_i h_i\rho_i(1,1,0)^T
  =r(1,1,0)^T.
  \label{eq:neural_reference_stress}
 \end{equation}
 The identity $J=\sqrt{(1+2E_{11})(1+2E_{22})-(2E_{12})^2}$ gives $\nabla_{\ee}J|_0=(1,1,0)^T$. Holding $r$ fixed when differentiating yields
 \begin{equation}
  \left.\nabla_{\ee}\big[H(\bm z)-r(J-1)\big]\right|_0
  =r(1,1,0)^T-r(1,1,0)^T=\bm0.
  \label{eq:energy_reference_stress}
 \end{equation}
 A single scalar $r$ suffices because the paired features make the reference stress isotropic (Proposition~\ref{prop:paired_features}). The correction is zero in value at $J=1$ and, being affine in the independent determinant variable, preserves convexity for either sign of $r$. Linden et al.~\cite{linden2023} use the same affine determinant term to normalize isotropic energies but need additional invariant terms for anisotropic ones, whereas Eq.~\eqref{eq:referencefeatures} makes the isotropic form sufficient for the paired features.

 \item[(iii)] The volumetric term $W_{\rm vol}$ controls growth in $J$. Its logarithmic part diverges as $J\to0^+$, where the quadratic part alone would remain finite with limit $\beta/2$. The quadratic part dominates the linear correction $-r(J-1)$ as $J\to\infty$, so the energy diverges under expansion for either sign of $r$. At the reference,
 \begin{equation}
  W_{\rm vol}(1)=0,\qquad
  W_{\rm vol}'(J)=\alpha(1-J^{-1})+\beta(J-1),\qquad
  W_{\rm vol}'(1)=0,
 \end{equation}
 so this term leaves the reference energy and stress unchanged.

 \item[(iv)] The distortional term $W_{\rm dist}$ supplies growth under area-preserving changes of shape, which a function of $J$ alone cannot detect. For example, with $\lambda>0$,
 \begin{equation}
  \F=\diag(\lambda,\lambda^{-1}),\qquad
  J=1,\qquad
  \varepsilon\left(\frac{\tr\C}{2}-1-\log J\right)
  =\frac\varepsilon2\left(\lambda^2+\lambda^{-2}-2\right).
 \end{equation}
 This contribution diverges as $\lambda\to\infty$, although the area remains constant. More generally, $\tr\C=\|\F\|_F^2$ supplies growth under unbounded deformation at bounded positive $J$. At the reference, the constant $-1$ sets its value to zero. Using $\C=\I+2\E$ and $\partial_{\E}J|_0=\I$, its stress contribution is
 \begin{equation}
  \left.\partial_{\E}\left[\varepsilon\left(\frac{\tr\C}{2}-1-\log J\right)\right]\right|_0
  =\varepsilon\left[\I-\left.J^{-1}\partial_{\E}J\right|_0\right]
  =\bm0.
 \end{equation}
\end{itemize}
Together, $W_{\rm learn}$ and $W_{\rm corr}$ set the learned response and its reference state, while $W_{\rm vol}$ and $W_{\rm dist}$ prescribe growth toward the boundary of the admissible domain. Although the analytic terms vanish with their first derivatives at the reference, they still contribute to the reference stiffness, because $J$ is nonlinear in $\ee$. The following proposition gathers the guarantees of their complete sum.

\begin{proposition}
\label{prop:complete_energy}
Let the features satisfy Eq.~\eqref{eq:featureconstraints}, let the core be either of the $C^2$ convex, componentwise nondecreasing constructions in Section~\ref{sec:convex_cores}, and let $\alpha,\beta,\varepsilon>0$. Then Eq.~\eqref{eq:energy} defines a $C^2$, objective, two-dimensional polyconvex energy on $\GL$. Its associated second Piola--Kirchhoff stress $\Sg$ is symmetric, its work-conjugate representation is $\ssv=\nabla_{\ee}W$, $W(\I)=0$, $\ssv(\bm0)=\bm0$, and $W$ diverges under collapse, unbounded expansion, and unbounded distortion at bounded positive area. Thus the construction satisfies (C1)--(C6).
\end{proposition}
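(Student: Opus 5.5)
We will assemble Proposition~\ref{prop:complete_energy} term by term from Eq.~\eqref{eq:energy}, writing $W(\F)=\mathcal P(\F,\det\F)$ with
\begin{equation*}
\mathcal P(\F,\delta)=\widehat H\bigl(\bm y(\bm z(\F,\delta))\bigr)-H(\bm z^0)-r(\delta-1)+\alpha(\delta-1-\log\delta)+\tfrac\beta2(\delta-1)^2+\varepsilon\bigl(\tfrac12\|\F\|_F^2-1-\log\delta\bigr)
\end{equation*}
on the convex set $\R^{2\times2}\times(0,\infty)$.

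\textbf{Regularity and polyconvexity.} Smoothness is checked first. Each $z_i$ is $C^\infty$ in $(\F,\delta)$ wherever $\F\bm d_i\neq\bm0$ and $\F\bm d_i^\perp\neq\bm0$, which holds on $\GL$. The conditioning map is affine, and $\widehat H$ is $C^2$: the ICNN by the smoothness of softplus, and the ICKAN because each integrated-hat basis is globally $C^2$. The analytic terms are smooth on $\delta>0$, so $W$ is $C^2$ on $\GL$.

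For convexity of $\mathcal P$, Proposition~\ref{prop:paired_features} gives convexity of each $z_i$ in $(\F,\delta)$. Positive affine conditioning preserves convexity, and composing with a convex, componentwise nondecreasing $\widehat H$ (Section~\ref{sec:convex_cores}) yields a convex function. This uses the standard fact that $g\circ\bm y$ is convex when $g$ is convex and nondecreasing and each $y_i$ is convex. The remaining terms of $\mathcal P$ are handled directly:
\begin{itemize}
\item $-r(\delta-1)$ is affine;
\item $\delta-1-\log\delta$ and $(\delta-1)^2$ are convex in $\delta$, with $\alpha,\beta>0$;
\item $\tfrac12\|\F\|_F^2$ is convex in $\F$, and $-\log\delta$ is convex in $\delta$.
\end{itemize}
A sum of convex functions is convex, which proves (C5).

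\textbf{Objectivity, potential structure, symmetry, and normalization.} Objectivity (C2) follows because each ingredient depends on $\F$ only through $\C$ and $J$. Indeed, $\|\F\bm d\|^2=\bm d^T\C\bm d$, $\tr\C$ is invariant, and $\det(\bm Q\F)=\det\F$ for a proper rotation $\bm Q$. Hence $W$ is a function $\widetilde W(\ee)$ of the engineering strain through $\C=\I+2\E$ and $J=\sqrt{\det\C}$, and we define $\ssv=\nabla_{\ee}\widetilde W$. By the convention of Eq.~\eqref{eq:voigt}, this is the work conjugate of $\Sg=\partial_{\E}W$. That gives (C1). Since $\Sg$ is a derivative with respect to a symmetric tensor, it is symmetric, which gives (C3).

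Reference normalization (C4) was already computed in items (i)--(iv): Eq.~\eqref{eq:energy_reference_value} gives $W(\I)=0$, and Eq.~\eqref{eq:energy_reference_stress} together with the vanishing reference derivatives of $W_{\rm vol}$ and $W_{\rm dist}$ gives $\ssv(\bm0)=\bm0$. Both steps rest on Proposition~\ref{prop:paired_features}.

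\textbf{Growth.} Growth (C6) is the step requiring the most care, because $W_{\rm learn}$ and $W_{\rm corr}$ must be bounded below so that they cannot offset the divergent analytic terms. The lower bound for $W_{\rm learn}$ comes from convexity and monotonicity. Since $\widehat H$ is convex with nonnegative gradient at $\bm y^0$, we have
\begin{equation*}
H(\bm z)-H(\bm z^0)\geq\sum_i h_i(z_i-z_i^0)\geq-\sum_i h_iz_i^0,
\end{equation*}
where the last step uses $z_i\geq0$. So $W_{\rm learn}$ is bounded below by a constant. The correction is the only term that can be negative without bound, and it is handled separately in each regime:
\begin{itemize}
\item As $J\to0^+$, $-r(J-1)$ stays bounded, while $-\alpha\log J-\varepsilon\log J\to\infty$.
\item As $J\to\infty$, $\tfrac\beta2(J-1)^2$ dominates $-r(J-1)$. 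The other terms are either bounded below or also grow: $-\log J$ is dominated, and $\tfrac12\tr\C\geq J$ by AM--GM on the singular values.
\item As $\|\F\|\to\infty$ with $J\in[J_-,J_+]\subset(0,\infty)$, the terms in $J$ alone are bounded, while $\tfrac\varepsilon2\|\F\|_F^2\to\infty$.
\end{itemize}
In every case a bounded-below sum plus a divergent term diverges, which completes (C6).
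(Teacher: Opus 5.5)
Your proposal is correct and follows essentially the same route as the paper's proof. You use the same convex-composition argument for $\mathcal P(\F,\det\F)$, the same appeal to items (i)--(iv) for normalization, and the same supporting-plane lower bound on $W_{\rm learn}$ for growth. Your use of $\tfrac12\tr\C\geq J$ to show $W_{\rm dist}\geq0$ in the expansion regime just makes explicit the paper's remark that the quadratic term dominates the negative linear and logarithmic terms.
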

The proof is given in Appendix~\ref{app:complete_energy_proof}. In brief, the convex, componentwise nondecreasing core preserves feature convexity, the remaining terms are constant, affine, or convex in independent $(\F,J)$, and the reference identities and growth limits complete the argument.

Table~\ref{tab:tiers} summarizes the three constitutive constructions carried into the numerical assessment, whose ICNN and ICKAN columns include both fixed- and learned-feature variants. Importantly, learning the feature parameters changes the representation while leaving the guarantees above intact.
\begin{table}[htbp]
\centering\small
\caption{Constitutive constructions carried into the numerical assessment. Criteria (C1)--(C6) are defined in Section~\ref{sec:requirements}, and the guarantees refer to the constructions of this section.}
\label{tab:tiers}
\begin{tabularx}{\linewidth}{lYYY}
\toprule
 & Unconstrained energy & ICNN & ICKAN\\
\midrule
Role & Flexible hyperelastic reference & Guaranteed paired-feature energy & Guaranteed paired-feature energy\\
\midrule
Inputs & Objective measures from $(\C,J)$ & Paired features\newline (fixed or learned) & Paired features\newline (fixed or learned)\\
\midrule
Learned scalar map & Smooth energy network & Nonnegative weights,\newline fixed activations & Convex, nondecreasing\newline edge functions\\
\midrule
(C1)--(C4) & By construction & By construction & By construction\\
\midrule
(C5)--(C6) & Not imposed & By construction & By construction\\
\bottomrule
\end{tabularx}
\end{table}
\FloatBarrier

\section{Projection-based reduced-order models}
\label{sec:reduced}
The cost of the homogenization procedure in Section~\ref{sec:fom} lies in repeatedly resolving microscopic equilibrium as the macroscopic strain changes. Previously computed solutions offer a way to reduce this work by revealing displacement patterns that can represent subsequent states. Projection-based model order reduction builds on this information, restricting the search for equilibrium to a low-dimensional space learned from those solutions~\cite{benner2015survey,yvonnet2007,Hernandez2014}. Combined with hyperreduction to limit element evaluations, this approach retains the constituent law while reducing the cost of the microscopic problem. The resulting HPROMs provide physics-based references for assessing the accuracy and cost of the learned energies in Sections~\ref{sec:constitutive_accuracy} and~\ref{sec:coupon}.

\subsection{Periodic microscopic reference}
The reduction starts from a displacement representation that satisfies the periodic boundary conditions by construction. Let $\bm d\in\R^{N_{\rm dof}}$ collect the physical nodal displacements and $\bm u\in\R^N$ their independent coordinates after enforcing periodicity and removing rigid translation. In these coordinates, the full-order reconstruction and equilibrium equations are
\begin{equation}
 \bm d(\bm u,\ee)=\bm T\bm u+\bm g_{\rm per}(\ee),
 \qquad \bm R(\bm u;\ee)=\bm T^T\bm f_{\rm int}
                 \bigl(\bm d(\bm u,\ee)\bigr)=\bm0.
 \label{eq:hdm_lifting}
\end{equation}
Here $\bm T$ identifies periodic degrees of freedom, and the lifting $\bm g_{\rm per}$ supplies the strain-dependent displacement jumps and anchor values of Section~\ref{sec:fom}, so any independent state $\bm u$ produces a physical field that satisfies the prescribed kinematic constraints. With $\bm K_{\rm phys}=\partial_{\bm d}\bm f_{\rm int}$ the assembled physical stiffness, the FOM tangent is $\bm K=\partial_{\bm u}\bm R=\bm T^T\bm K_{\rm phys}\bm T$, and all residuals follow the internal-force sign convention.

The reduced bases are built from snapshots of this affine-retaining state, which relates to the fluctuation coordinates of Eq.~\eqref{eq:microeq} through $\bm a=\bm u-\bm u_{\rm aff}(\ee)$, with $\bm u_{\rm aff}$ the affine displacement at the independent nodes. This choice leaves the periodic boundary conditions unchanged and keeps the load-correlated information used by the strain-informed identification below, so the reported POD dimension characterizes this representation and its truncation tolerance.

\subsection{POD approximation and the affine PROM}
FOM solutions $\bm u^s$ at sampled strains $\ee^s$, $s=1,\ldots,N_s$, are centered about a reference state $\bm u_{\rm ref}$, and the singular value decomposition (SVD) of the snapshot matrix
\begin{equation}
 \bm S_{\rm snap}=
 [\,\bm u^1-\bm u_{\rm ref},\ldots,\bm u^{N_s}-\bm u_{\rm ref}\,]
 =\bm U_S\bm\Sigma_S\bm Y_S^T.
 \label{eq:snapshot_svd}
\end{equation}
supplies the POD modes. The $n_{\rm tra}$ leading left singular vectors form the orthonormal reduced-order basis (ROB) $\bm V_{\rm tra}$ of the affine approximation
\begin{equation}
 \widetilde{\bm u}=\bm u_{\rm ref}+\bm V_{\rm tra}\qq_{\rm tra},
 \qquad \qq_{\rm tra}\in\R^{n_{\rm tra}},
 \qquad \bm V_{\rm tra}^T\bm V_{\rm tra}=\bm I.
 \label{eq:linear_prom}
\end{equation}
which replaces the $N$ independent displacement coordinates by $n_{\rm tra}\ll N$ modal amplitudes. The subscript ``tra'' marks this traditional affine approximation, and the term affine refers only to the approximation space, since the constitutive law and the reduced equilibrium equations remain nonlinear.

The retained dimension is selected from the relative squared reconstruction error, which for a Euclidean POD is
\begin{equation}
 \epsilon_{\rm POD}^2(n_{\rm tra})=
 \frac{\sum_{j>n_{\rm tra}}\sigma_j^2}{\sum_j\sigma_j^2}.
 \label{eq:pod_tail}
\end{equation}
This criterion measures how well the snapshots are compressed but does not bound the stress error at unsampled strains, because force evaluation, constitutive differentiation, and equilibrium conditioning can make small displacement components important.

Galerkin projection onto the admissible variations $\delta\widetilde{\bm u}=\bm V_{\rm tra}\delta\qq_{\rm tra}$ gives the reduced equations and tangent at a prescribed strain $\ee$,
\begin{equation}
 \bm r_{\rm tra}(\qq_{\rm tra};\ee)
 =\bm V_{\rm tra}^T
  \bm R(\bm u_{\rm ref}+\bm V_{\rm tra}\qq_{\rm tra};\ee)=\bm0,
 \quad
 \bm K_{\rm tra}=\bm V_{\rm tra}^T\bm K\bm V_{\rm tra}.
 \label{eq:linear_galerkin}
\end{equation}
The system now has $n_{\rm tra}$ unknowns, but evaluating $\bm R$ and $\bm K$ still visits every microscopic element, which the hyperreduction of Section~\ref{sec:hyperreduction} addresses.

\subsection{Nonlinear reduction through latent-space closure}
\label{sec:latent_closure}
The POD dimension required to reconstruct a displacement field can exceed the number of coordinates needed to parametrize the loading family. Latent-space closure exploits this distinction by retaining a rich linear span while learning dependencies among its modal amplitudes, so that only a smaller subset remains unknown in equilibrium~\cite{barnett2023,aresdeparga2026nonlinear}. Because the leading POD coefficients need not provide a well-conditioned primary-to-secondary map, the input-informed construction of Hern\'andez et al.~\cite{hernandez2026dimensional} instead combines the retained modal amplitudes linearly into $n=3$ primary coordinates $\bm\xi$, approximately aligned with the three macroscopic strain components. It splits the retained $n_{\rm tot}$-dimensional span into orthonormal primary and secondary blocks $\bm V\in\R^{N\times n}$ and $\overline{\bm V}\in\R^{N\times\bar n}$, with $n_{\rm tot}=n+\bar n$, and an ANN closure $\widehat{\mathcal N}:\R^n\to\R^{\bar n}$ predicts the secondary coordinates $\overline{\qq}=\overline{\bm V}^T(\bm u-\bm u_{\rm ref})$. The deployed decoder and its tangent are
\begin{equation}
 \widetilde{\bm u}(\bm\xi)=\bm u_{\rm ref}
   +\bm V\bm A_m\bm\xi
   +\overline{\bm V}\widehat{\mathcal N}(\bm\xi),
 \qquad
 \bm B_\xi=\partial_{\bm\xi}\widetilde{\bm u},
 \label{eq:deployed_decoder}
\end{equation}
where $\bm A_m$ is the nonsingular coordinate transformation of the construction. The ANN predicts modal amplitudes of the microscopic displacement, not stress or energy, and the secondary modes remain in that field without appearing as independent equilibrium unknowns.

The closure is trained on the snapshot pairs $(\bm\xi^s,\overline{\qq}^{\,s})$ by minimizing
\begin{equation}
 \mathcal L_{\rm cl}(\bm\eta)=
 \frac{1}{|\mathcal I_{\rm fit}|\bar n}
 \sum_{s\in\mathcal I_{\rm fit}}
 \|\widehat{\mathcal N}(\bm\xi^s;\bm\eta)-\overline{\qq}^{\,s}\|_2^2.
 \label{eq:closure_loss}
\end{equation}
The closure must be smooth, because projection requires the derivatives of the decoder. This coordinate loss measures displacement reconstruction within the secondary space.

Because the alignment $\bm\xi^s\approx\ee^s$ is only fitted from the snapshots, the iterative PROM--ANN solves at fixed $\ee$ for the state satisfying
\begin{equation}
 \bm r(\bm\xi;\ee)=\bm B_\xi^T
       \bm R(\widetilde{\bm u}(\bm\xi);\ee)=\bm0,
 \label{eq:manifold_galerkin}
\end{equation}
rather than imposing $\bm\xi=\ee$. The exact Newton derivative then includes the curvature of the decoder manifold, since $\bm B_\xi$ depends on the state. Appendix~\ref{app:latent_closure} gives the coordinate construction, the reconstruction-error split, and this tangent explicitly.

\subsection{Hyperreduction of the projected equations}
\label{sec:hyperreduction}
Hyperreduction replaces each full element sum in the projected equations with a sparse weighted sum, a second approximation introduced after the displacement representation has been chosen. Two integration tasks must be distinguished: assembling the projected equilibrium equations and recovering the homogenized stress. Hernández et al.~\cite{hernandez2026dimensional} formulate both tasks and discuss the use of either a common or a dedicated stress rule. Separate elementwise rules are used here because the two integrands and their accuracy requirements differ.

\subsubsection{Fixed elementwise cubature}
Let $\mathcal E$ be the microscopic element set and $\bm L_e$ extract element nodal values from $\bm d$. At fixed macroscopic input, define the reconstructed element state and its reduced tangent by
\begin{equation}
 \bm d_e=\bm L_e(\bm T\widetilde{\bm u}+\bm g_{\rm per}),
 \qquad \bm B_e=\bm L_e\bm T\bm B .
\end{equation}
The full reduced residual is $\sum_{e\in\mathcal E}\bm h_e$, where
$\bm h_e=\bm B_e^T\bm f_{{\rm int},e}$. Fixed ECM approximates it as
\begin{equation}
 \bm r_{\rm ECM}(\qq;\ee)=
       \sum_{e\in\mathcal Z_r}w_e\bm h_e(\qq;\ee)=\bm0,
 \quad \mathcal Z_r\subset\mathcal E,\quad w_e\geq0.
 \label{eq:ecm}
\end{equation}
Here $\qq$ denotes the equilibrium unknowns and $\bm B=\partial_{\qq}\widetilde{\bm u}$ the decoder tangent, namely $\qq_{\rm tra}$ and $\bm V_{\rm tra}$ for the affine HPROM and $\bm\xi$ and $\bm B_\xi$ for HPROM--ANN. A selected element retains its ordinary within-element quadrature, so selected elements, selected Gauss points, and independent coordinates are distinct counts.

Offline, the element contributions are collected over $N_c$ training states and compressed by an SVD truncated, as in Eq.~\eqref{eq:pod_tail}, at a tolerance $\varepsilon_{\rm SVD}$. ECM then selects a sparse support and nonnegative multipliers that reproduce the retained integration modes~\cite{hernandez2017}. If $\bm G_{\rm cub}$ denotes those modes augmented by the constant-integration constraint (Appendix~\ref{app:cubature}) and $\bm b_{\rm cub}=\bm G_{\rm cub}\bm1$, selection stops when
\begin{equation}
 \frac{\|\bm G_{\rm cub}\bm w-\bm b_{\rm cub}\|_2}
      {\|\bm b_{\rm cub}\|_2}\leq\varepsilon_{\rm ECM}.
 \label{eq:ecm_tolerance}
\end{equation}
The residual rule integrates the element contributions $\bm h_e$, and the stress rule those to the average first Piola--Kirchhoff stress (Appendix~\ref{app:cubature}). Their supports and weights remain fixed online.

\subsubsection{Manifold-adaptive weights}
\label{sec:adaptive_weights}
A fixed rule must represent every sampled state with the same multipliers. MAW--ECM instead retains one common support but lets its weights vary over the reduced manifold~\cite{hernandez2026dimensional}. Applied to the residual, this gives
\begin{equation}
 \bm r_{\rm MAW}(\qq;\ee)=
 \sum_{e\in\mathcal Z_r}w_e(\qq)\bm B_e(\qq)^T
                   \bm f_{{\rm int},e}(\bm d_e)=\bm0.
 \label{eq:maw}
\end{equation}
Starting from a fixed ECM rule, the MAW--ECM pruning algorithm removes elements while redistributing nonnegative statewise weights, through the local and graph-regularized stages described by Hernández et al.~\cite{hernandez2026dimensional}. Pruning identifies the support but not the continuous online map, and neural fields are used here for that final regression stage. Because the equilibrium and stress rules are separate, two maps, $\mathcal N_r$ and $\mathcal N_s$, are trained on independently pruned supports $\mathcal Z_r$ and $\mathcal Z_s$,
\begin{equation}
 \begin{aligned}
  \bm w(\bm\xi)&=N_{\rm el}\operatorname{softmax}
                 \bigl(\mathcal N_r(\bm\xi;\bm\theta_r)\bigr),
   &\sum_{e\in\mathcal Z_r}w_e&=N_{\rm el},\\
  \bm v(\bm\xi)&=N_{\rm el}\operatorname{softmax}
                 \bigl(\mathcal N_s(\bm\xi;\bm\theta_s)\bigr),
   &\sum_{e\in\mathcal Z_s}v_e&=N_{\rm el}.
 \end{aligned}
 \label{eq:softmax_weights}
\end{equation}
The scaled softmax guarantees positivity and normalization. Unlike the closure, these networks predict integration weights, fitted against full-mesh element sums with the loss of Appendix~\ref{app:cubature}.

Online, the current reduced iterate is passed to both the closure and $\mathcal N_r$, and $\mathcal N_s$ supplies the stress weights, with the supports and all network parameters fixed. Differentiating the adaptive residual produces three contributions: element stiffness, decoder curvature, and weight variation. Only the first enters the microscopic modified-Newton iteration, which changes the convergence rate but not the converged state. For the affine HPROM, the decoder curvature and weight-variation terms vanish, so its projected stiffness is the exact residual Jacobian.

To translate sparse cubature into wall-clock savings, each optimized reduced mesh contains only the selected elements and the nodes they require. The corresponding basis operators and periodic lifting are restricted to that support before online evaluation. Online cost therefore depends on the support size, the retained secondary dimension, the equilibrium iterations, and the derivative evaluation, and not only on the number of primary coordinates.

\subsection{Constitutive output}
\label{sec:reduced_outputs}
Each HPROM must return the same work-conjugate stress and tangent required by the FE$^2$ problem. The stress rule reconstructs the average first Piola--Kirchhoff stress as
\begin{equation}
 \bm P_h(\qq,\ee)=
 \frac1{|Y|}\sum_{e\in\mathcal Z_s}v_e(\qq)
        \sum_{g\in e}\omega_g\bm P_{\mu,g}(\bm d_e),\qquad
 \Sg_h=\F^{-1}\bm P_h ,
 \label{eq:hyperreduced_stress}
\end{equation}
where $v_e$ is fixed for HPROM and state dependent for HPROM--ANN. Because this independently fitted rule need not give an exactly symmetric $\Sg_h$, its symmetric part supplies the returned components $\ssv_h(\qq,\ee)$.

Reduced equilibrium defines an implicit state $\qq^*(\ee)$ and hence the constitutive map $\ssv_h(\qq^*(\ee),\ee)$. If the residual Jacobian is nonsingular, differentiating equilibrium gives
\begin{equation}
 \bm D_h=\ssv_{h,\ee}
       -\ssv_{h,\qq}\bm r_{h,\qq}^{-1}\bm r_{h,\ee},
 \label{eq:reduced_ift}
\end{equation}
whose partial derivatives, evaluated at the converged state, include the decoder curvature and, for adaptive rules, the weight variation. This tangent, rather than the modified-Newton matrix of Section~\ref{sec:adaptive_weights}, is returned to the structural solver, and Appendix~\ref{app:cubature} records how its derivatives are evaluated.

When input-informed coordinates closely approximate the loading parameters, the decoder can also be evaluated without solving reduced equilibrium, as discussed by Hern\'andez et al.~\cite{hernandez2026dimensional} for their hyperelastic example. Prescribing $\bm\xi=\ee$ in Eq.~\eqref{eq:deployed_decoder} gives the direct decoder $\widetilde{\bm u}_{D}(\ee)=\widetilde{\bm u}(\bm\xi=\ee)$ of D-HPROM--ANN, whose state enters the same adaptive stress rule, with the same closure and stress weights, and whose tangent is the derivative of this direct stress map.

Figure~\ref{fig:reduced_model_hierarchy} traces the three resulting routes, from the offline RVE snapshots through the affine and nonlinear state representations to this common stress--tangent output.

\begin{figure}[p]
 \centering\input{figures/reduced_model_hierarchy}
 \caption{Intrusive HPROMs used in the comparison. FOM snapshots provide either an affine state representation or a primary--secondary partition in which the ANN predicts the secondary coordinates. Online, HPROM and HPROM--ANN solve projected equilibrium with fixed and adaptive cubature, respectively, whereas D-HPROM--ANN evaluates the nonlinear decoder directly. All three retain microscopic constitutive evaluations and return the same stress--tangent interface.}
\label{fig:reduced_model_hierarchy}
\end{figure}

\subsection{Mechanical structure retained by the HPROMs}
Every HPROM, unlike a direct constitutive surrogate, evaluates the original microscopic hyperelastic law, and the iterative variants also enforce projected equilibrium. Even so, the deployed macroscopic stress need not be the derivative of an effective potential. For fixed weights and consistent kinematics, Galerkin internal forces derive from a weighted reduced energy, as in the structure-preservation results for ECSW~\cite{farhat2015}. Here, however, equilibrium and stress use separate cubature rules, and state-dependent weights add the term $\sum_e\Pi_e\nabla_{\qq}w_e$ to the gradient of the weighted energy $\sum_e w_e(\qq)\Pi_e(\qq,\ee)$, a term that Eq.~\eqref{eq:maw} omits. Consequently, no exact effective potential or polyconvexity certificate is claimed for the HPROMs, whose consistent tangent is the derivative of the implemented stress map and need not be symmetric.

Table~\ref{tab:reduction_levels} summarizes the three HPROMs and their FOM reference.

\begin{table}[htbp]
 \centering\small
 \caption{HPROMs used in the accuracy--cost comparison and their FOM reference. All HPROMs retain microscopic constitutive evaluations.}
 \label{tab:reduction_levels}
 \setlength{\tabcolsep}{4pt}
 \renewcommand{\arraystretch}{1.2}
 \begin{tabularx}{\linewidth}{@{}llcYY@{}}
 \toprule
 Model & State representation & \begin{tabular}[t]{@{}c@{}}Equilibrium\\unknowns\end{tabular} & Integration & Learned component\\
 \midrule
 FOM & $\bm u$ & $N$ & Full mesh & None\\
 \addlinespace[.3em]
 HPROM & $\widetilde{\bm u}=\bm u_{\rm ref}+\bm V_{\rm tra}\qq_{\rm tra}$ & $n_{\rm tra}$ & Fixed ECM (residual and stress) & None\\
 \addlinespace[.3em]
 HPROM--ANN & $\widetilde{\bm u}(\bm\xi)=\bm u_{\rm ref}+\bm V\bm A_m\bm\xi+\overline{\bm V}\widehat{\mathcal N}(\bm\xi)$ & $n$ & Adaptive ECM (residual and stress) & Secondary coordinates; residual and stress weights\\
 \addlinespace[.3em]
 D-HPROM--ANN & $\widetilde{\bm u}_{D}(\ee)=\bm u_{\rm ref}+\bm V\bm A_m\ee+\overline{\bm V}\widehat{\mathcal N}(\ee)$ & None & Adaptive ECM (stress only) & Same closure and stress weights as HPROM--ANN\\
 \bottomrule
 \end{tabularx}
\end{table}
\FloatBarrier

\section{Numerical examples}
\label{sec:cell}
For the numerical assessment of the proposed construction, two square periodic cells are considered, both with total void fraction $0.20$ and the same compressible Neo-Hookean matrix in plane strain.  With $\C_\mu=\F_\mu^T\F_\mu$ and $J_\mu=\det\F_\mu>0$, the local energy density of the matrix is
\begin{equation}
 \psi_\mu=\frac\mu2(\tr\C_\mu-2)-\mu\log J_\mu
              +\frac\lambda2(\log J_\mu)^2,
\end{equation}
with Young's modulus $1.628$ GPa and Poisson ratio $0.4$, where $\mu$ and $\lambda$ are the Lam\'e constants.  Because the two cells share this law, their responses differ only through the arrangement of their cavities, and each cell receives its own independently trained models, so that the assessment does not rest on a single microstructure.

Figure~\ref{fig:common_rve_fields} shows how differently the two cells respond to the same multiaxial deformation, $(E_{11},E_{22},2E_{12})=(18,-4,8)\%$.  In the MC-RVE, four nonaligned cavities create several interacting localization regions, which makes it a demanding setting for determining how many paired features are needed, what learning them adds, and what the constraints secure beyond the data.  In the SC-RVE, a single inclined cavity concentrates the response around one dominant geometric direction.  This cell is the microscopic model propagated through the structural FE$^2$ calculations and shared with the HPROMs of Section~\ref{sec:reduced}.  On it, the learned energies are compared with the HPROMs, first at the constitutive level and then in a tensile coupon, before a confined compression test drives them beyond the data.

\begin{figure}[t]
 \centering\includegraphics[width=.7\linewidth]{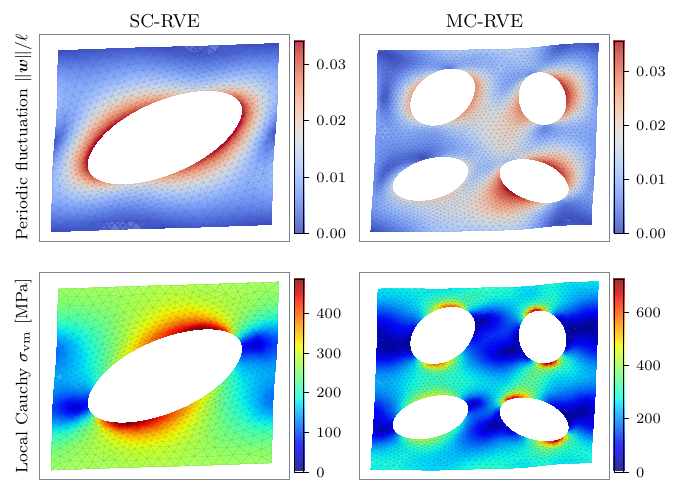}
 \caption{FOM microscopic fields at the common diagnostic state $(E_{11},E_{22},2E_{12})=(18,-4,8)\%$, shown on the actual deformed cells without displacement amplification.  Top: magnitude of the periodic displacement fluctuation $\bm w$ of Eq.~\eqref{eq:micro_deformation}, normalized by the cell side $\ell$.  Bottom: local three-dimensional Cauchy von Mises stress, including the plane-strain out-of-plane stress, with Gauss-point values averaged to the nodes for visualization.  Each panel uses its own color limits to resolve the pattern within its cell, so colors do not compare magnitudes across cells.}
 \label{fig:common_rve_fields}
\end{figure}

\subsection{Training protocol}
\label{sec:training_protocol}
The energies of both benchmarks are trained and selected in the same way, so that differences within each benchmark reflect the constitutive representation rather than the fitting procedure.  Every model is a scalar energy $W(\ee;\bm\theta)$, whose first derivative gives the stress and whose second derivative gives the tangent $\bm D$ of Eq.~\eqref{eq:tangent}.  It is fitted to FOM energies and stresses at the fitting states and to the effective tangent $\bm D_0=\bm D(\bm 0)$ of the periodic cell at the undeformed state.

The ICNN of Eq.~\eqref{eq:monotone_icnn} has two hidden layers of width 24 and the ICKAN of Eq.~\eqref{eq:monotone_ickan} two of width 12, and each ICKAN connection combines six integrated-hat bases on fixed knots from $-1.2$ to $1.2$ ($h=0.48$).  With six fixed paired features, the ICNN has 945 trainable parameters and the ICKAN 1629, and learning the features adds only five parameters per feature.  The initial features are selected from a finite bank of admissible candidates using only the fitting response and the reference tangent.  Before entering either core, each feature is centered at its undeformed value, $z_i^{\rm cen}=z_i^0=1/p_i+1/q_i$, and $\tau_i$ is its largest deviation from that value over the fitting states at initialization.  The conditioned inputs therefore start in $[-1,1]$, which the ICKAN knots cover with a small margin.  The Unconstrained energy uses three softplus hidden layers of widths 128, 128, and 64 on the inputs of Eq.~\eqref{eq:unconstrained_inputs}, for 25\,473 parameters.

All trainable parameters, including the feature parameters of the learned variants, minimize
\begin{equation}
 \mathcal L(\bm\theta)=
 \frac{\sum_k\|\ssv(\ee_k;\bm\theta)-\ssv_k\|_2^2}{\sum_k\|\ssv_k\|_2^2}
 +0.2\,\frac{\sum_k\bigl[W(\ee_k;\bm\theta)-W_k\bigr]^2}{\sum_k W_k^2}
 +0.02\,\frac{\|\bm D(\bm0;\bm\theta)-\bm D_0\|_F^2}{\|\bm D_0\|_F^2},
 \label{eq:training_loss}
\end{equation}
where the sums run over the fitting states $\ee_k$, with FOM energies $W_k$ and stresses $\ssv_k$.  Stress carries the largest weight because it is queried throughout the FE$^2$ calculation.  The energy term keeps the potential itself, and not only its gradient, close to the homogenized energy.  The small tangent term anchors the stiffness at the undeformed state, where the vanishing stresses contribute little to the first term.

Optimization is full-batch and in double precision.  The validation error is the first term of Eq.~\eqref{eq:training_loss} evaluated on the validation states.  Adam, with initial learning rates of $5\times10^{-3}$ for the constrained energies and $4\times10^{-4}$ for the Unconstrained energy, runs until this error stops improving, or for at most 200\,000 steps, and L-BFGS then refines the best Adam state.  The retained checkpoint is the one with the lowest validation error, and each model is trained from three seeds.  Independent test states are evaluated only after every checkpoint has been frozen.
\FloatBarrier

\subsection{Multicavity RVE: learning the paired features}
\label{sec:constitutive_training}
The paired-feature construction guarantees admissibility for every allowed parameter set, but it does not determine how much directional information the convex core needs.  This question is particularly relevant for the MC-RVE, whose nonaligned cavities introduce several geometric directions.  A small fixed feature set may then fail to distinguish deformation states with different directional responses.  One remedy is to supply more features, and another is to let their directions and exponents adapt to the RVE response.  Both effects are examined together by asking how accuracy changes with the number $m$ of paired features and, for each $m$, what is gained by learning rather than fixing their parameters.

\subsubsection{Cell, data, and controlled comparison}
\label{sec:material_b}
The MC-RVE distributes the void space among four unequal, off-center ellipses with different aspect ratios and orientations (Fig.~\ref{fig:rve_b}), and its working mesh has 4621 quadratic triangles.

\begin{figure}[htbp]
 \centering\includegraphics[width=.65\linewidth]{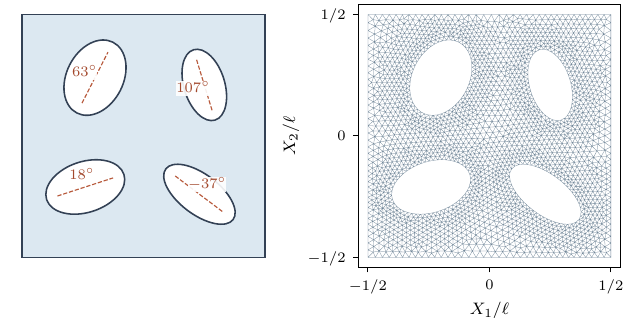}
 \caption{MC-RVE cavity geometry (left) and quadratic mesh (right).  Dashed lines identify cavity major axes.}
 \label{fig:rve_b}
\end{figure}

Refining the mesh to 8961 elements changes the homogenized energy, stress, and tangent by less than $0.1\%$ over 64 audited states that include the corners of the fitting box defined below.  The sampled maximum microscopic first Piola--Kirchhoff stress norm changes by at most $4.06\%$, which supports the working mesh for the representation study.

The fitting box spans $E_{11},E_{22}\in[-0.04,0.20]$ and $2E_{12}\in[-0.08,0.08]$ (Fig.~\ref{fig:constitutive_domains}, left).  The 4200 fitting states combine 4096 scrambled Sobol points in the interior with 96 points on the faces and the eight corners, so that the boundary of the box is sampled rather than reached by extrapolation.  Independent Sobol sequences supply 512 validation and 512 independent-test states.  For each $m\in\{2,4,6,8,16,24,32\}$, ICNN and ICKAN are trained in two variants that start from the same $m$ initial features.  The fixed variant holds their directions and exponents constant while training the convex core, whereas the learned variant optimizes those parameters subject to the constraints of Section~\ref{sec:learned}.  All other ingredients of Section~\ref{sec:training_protocol} are shared within each core, so the difference between the two variants isolates the effect of adapting the features.

\begin{figure}[htbp]
 \centering\includegraphics[width=.8\linewidth]{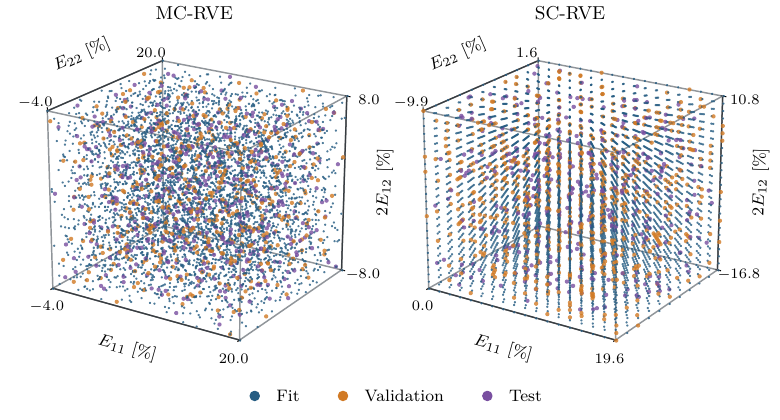}
 \caption{Constitutive sampling domains, in percent.  Left: MC-RVE representation benchmark.  Right: SC-RVE deployment benchmark, with fitting and validation states on a Cartesian grid and independent-test states off it.  Wireframes outline the fitting boxes, and colors identify fitting, validation, and independent-test states.}
 \label{fig:constitutive_domains}
\end{figure}

\subsubsection{Number of paired features}
Figure~\ref{fig:feature_count_sensitivity} shows the validation error of both convex cores as a function of $m$.  Two paired features leave the response under-resolved, and both variants improve as $m$ increases, but in different ways.  The learned variants gain most up to $m=6$ and change by less than a factor of two beyond it, whereas the fixed variants continue to improve gradually up to $m=32$.  The independent evaluation uses $m=6$, the first count at which both learned cores enter their low-error regime.

\begin{figure}[htbp]
 \centering\includegraphics[width=.98\linewidth]{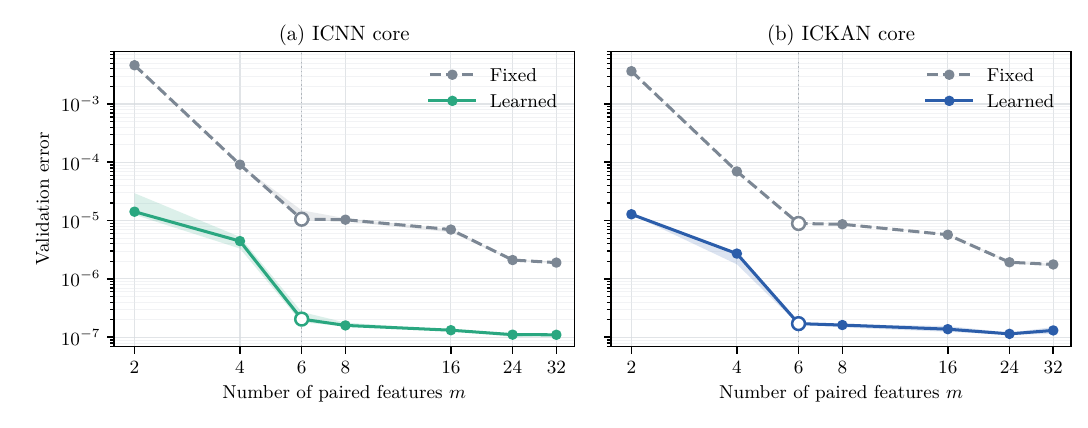}
 \caption{MC-RVE validation error, the relative squared stress error of Section~\ref{sec:training_protocol}, versus the number $m$ of paired directional features.  Lines show the median over three seeds and bands span their minimum--maximum range.  The vertical guide marks $m=6$, used for the independent comparison.}
 \label{fig:feature_count_sensitivity}
\end{figure}

\subsubsection{Effect of learning the features}
Table~\ref{tab:mc_rve_m06} compares the fixed and learned variants on the independent test states.  At $m=6$, learning the paired features lowers the ICNN median stress error from $0.328\%$ to $0.0445\%$ and the ICKAN error from $0.301\%$ to $0.0406\%$, a reduction by a factor of approximately $7.4$ in either core.  The energy and the tangent, which is never fitted away from the undeformed state, improve by factors of about 14 and 5, respectively.  Adding fixed directions does not close this gap.  In the validation sweep, even 32 fixed features, with more than twice the trainable parameters of the learned six-feature models, remain about three times less accurate in relative stress error, a factor of about ten in the squared validation error of Fig.~\ref{fig:feature_count_sensitivity}.  The similar gain for ICNN and ICKAN shows that the central improvement comes from adapting the directional representation rather than from one particular convex core.

The Unconstrained energy places these values in context.  Without convexity or growth conditions, it reaches $0.0073\%$ on the same states, about six times below either learned constrained model.  Its network is also considerably larger.  At matched capacity, the controls of Appendix~\ref{app:capacity} show that, inside the fitting box, the constraints cost a factor of 3.2 to 5.5 in stress error on this cell.

\begin{table}[htbp]
 \centering\small
 \caption{MC-RVE independent-test errors at $m=6$.  Each entry is the relative error norm of the stress $\ssv$, the energy $W$, or the tangent $\bm D$ of Eq.~\eqref{eq:tangent} over the 512 test states, as the median over three seeds.  For every quantity, each learned seed is more accurate than each fixed seed.  Adam stopped at a validation plateau for the fixed variants, whereas five of the six learned runs and the Unconstrained energy, which has no paired features, reached the 200\,000-step cap.}
 \label{tab:mc_rve_m06}
 \begin{tabular}{lccc}
\toprule
Model & Stress [\%] & Energy [\%] & Tangent [\%]\\
\midrule
ICNN, fixed & 0.3282 & 0.1096 & 1.9247\\
ICNN, learned & 0.0445 & 0.0079 & 0.3851\\
\addlinespace
ICKAN, fixed & 0.3014 & 0.0943 & 1.7600\\
ICKAN, learned & 0.0406 & 0.0066 & 0.3622\\
\midrule
Unconstrained energy & 0.0073 & 0.0011 & 0.0857\\
\bottomrule
\end{tabular}

\end{table}

\subsubsection{Admissibility beyond the data}
\label{sec:mc_guarantees}
For the ICNN and ICKAN energies, conditions (C1)--(C6) hold for every admissible parameter set by construction, whereas the Unconstrained energy is built to satisfy (C1)--(C4) only.  Inside the box the difference does not show, since a bounded rank-one search over 520 states and 32 directions finds curvature above 236~MPa for all fifteen fits of Table~\ref{tab:mc_rve_m06}.  Farther out the two classes separate.  Figure~\ref{fig:mc_beyond_data}(a) follows equibiaxial compression, $\F=\sqrt J\,\I$, beyond the box.  The minimum rank-one curvature of every Unconstrained fit changes sign between $J=0.71$ and $0.73$, whereas the learned constrained fits remain above 220~MPa.  A directed search over principal stretches between 0.55 and 1.5, in arbitrary orientations, shows that this loss is not tied to one path.  It finds negative curvature for all three Unconstrained fits at more than a third of its 4096 sampled states, confirmed by energy finite differences, and only positive curvature for the twelve constrained fits, as their construction requires.

Figure~\ref{fig:mc_beyond_data}(b) continues the same path toward volume collapse.  The constrained energies diverge, as their logarithmic barrier guarantees, whereas the Unconstrained energy levels off at finite limits between 397 and 440~MPa.  This contrast reflects the growth terms rather than polyconvexity, and an unconstrained network could receive the same barrier.  No data constrain either energy at these states, so only the construction can guarantee (C5) and (C6) there.

\begin{figure}[htbp]
 \centering\includegraphics[width=.98\linewidth]{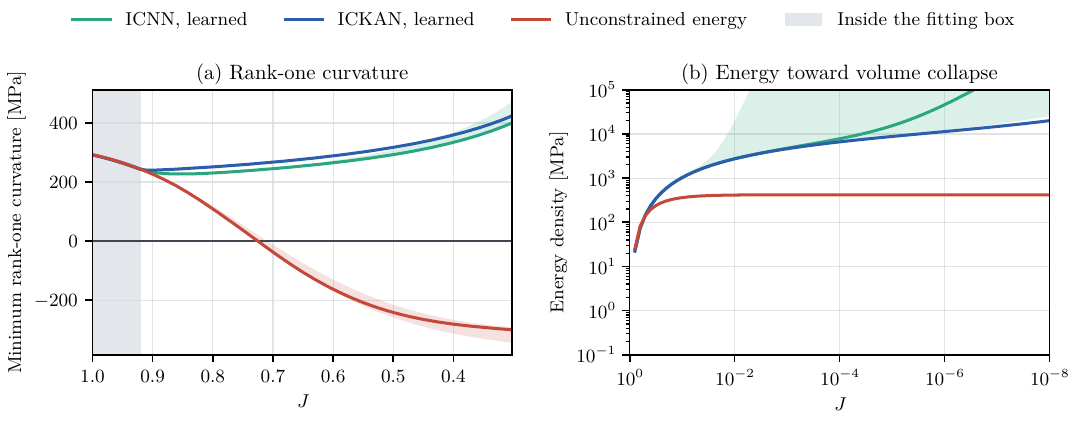}
 \caption{MC-RVE energies beyond the data along equibiaxial compression, $\F=\sqrt J\,\I$, for the learned ICNN and ICKAN energies and the Unconstrained energy.  Lines show the median of three seeds and shaded bands their range.  (a) Minimum rank-one curvature over unit directions.  (b) Energy density toward volume collapse.}
 \label{fig:mc_beyond_data}
\end{figure}

Global nonnegative energy, which the construction leaves to the trained parameters, is certified by the sufficient bound of Appendix~\ref{app:nonnegative} for all six learned fits.  For all fifteen fits, the reference energy and stress vanish to round-off, and centered differences reproduce the stress and the tangent to better than $10^{-9}$ of their largest entries.

Taken together, the MC-RVE supports the choices carried into the SC-RVE study.  Six paired features give a compact representation, and adapting their directional parameters supplies the main accuracy gain for either convex core.  The learned constrained energies then keep the median stress error below $0.05\%$ while securing, beyond the data, the rank-one convexity and growth that the Unconstrained energy lacks there.
\FloatBarrier

\subsection{Single-cavity RVE: constitutive accuracy and FE$^2$ deployment}
\subsubsection{Cell, data, and learned energies}
\label{sec:material_a}
The second benchmark, the SC-RVE, contains one centered elliptical void with aspect ratio $2$ and major-axis angle $30^\circ$ relative to the cell axes (Fig.~\ref{fig:rve}).  Its working mesh has 1546 quadratic triangles and 6320 independent microscopic displacement coordinates.

\begin{figure}[htbp]
 \centering\includegraphics[width=.65\linewidth]{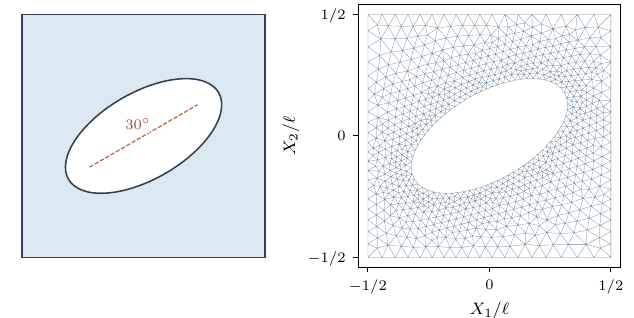}
 \caption{SC-RVE cavity geometry (left) and quadratic mesh (right).  The dashed line identifies the cavity major axis.}
 \label{fig:rve}
\end{figure}

At a uniaxial-stress state with $E_{11}=0.20$, refining the mesh to 3095 elements changes the maximum microscopic first Piola--Kirchhoff stress norm by $0.486\%$ and the average axial stress by less than $0.001\%$.  This state lies on the path followed by the narrow section of the coupon in Section~\ref{sec:coupon}, beyond the largest axial strain reached there, so the check supports the mesh used for deployment.

Unlike the MC-RVE, whose box is prescribed, the SC-RVE box must cover the strains that the macroscopic quadrature points visit during the coupon FE$^2$ calculation of Section~\ref{sec:coupon}.  A displacement-controlled coupon prepass with a St Venant--Kirchhoff approximation based on the cell reference tangent first recorded the strain envelope.  This envelope guides only the sampling, whereas RVE stresses provide the training labels.  A second prepass with a softer response shifted the componentwise extrema by at most $5.1\%$, so the spans were enlarged by $40\%$.  Because the coupon is tensile, the lower $E_{11}$ bound is zero.  In the engineering-strain coordinates $\ee=(E_{11},E_{22},2E_{12})^T$, the rounded bounds are
\begin{equation}
 \ee_{\min}=(0,-0.099,-0.168)^T,\qquad
 \ee_{\max}=(0.196,0.016,0.108)^T.
 \label{eq:box}
\end{equation}
A Cartesian grid supplies 4950 nonreference states, of which 4208 are used for fitting and 742 for validation, and an independent set of 400 off-grid states tests interpolation (Fig.~\ref{fig:constitutive_domains}, right).

Whether the count $m=6$ selected on the MC-RVE also suffices for the SC-RVE deployment is tested without transferring feature parameters between the two geometries.  Six initial features are selected anew from the SC-RVE fitting response and reference tangent, and the learned ICNN and ICKAN are trained as described in Section~\ref{sec:training_protocol}, as is the Unconstrained energy.  For each model, the seed with the lowest validation error supplies the deployed model.  The validation errors of the three seeds of each constrained energy lie within $15\%$ of one another.

\subsubsection{Construction of the HPROMs}
\label{sec:material_a_reduction}
To build the HPROMs, the FOM solutions at the 4950 grid states of Section~\ref{sec:material_a}, the same states that label the learned energies, are first collected as snapshots. The snapshot matrix of Eq.~\eqref{eq:snapshot_svd} is assembled from their displacement fields, and its SVD truncated at $\epsilon_{\rm POD}=10^{-6}$ in Eq.~\eqref{eq:pod_tail} retains $n_{\rm tra}=39$ modes, which form the traditional ROB $\bm V_{\rm tra}$ (Fig.~\ref{fig:pod}).

\begin{figure}[htbp]
 \centering\includegraphics[width=.6\linewidth]{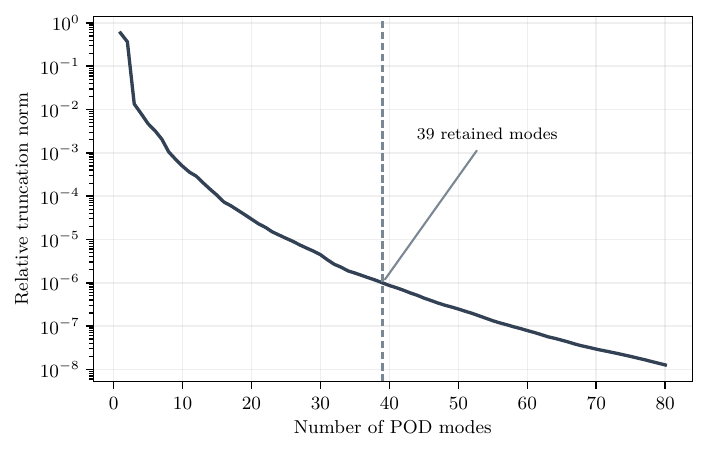}
 \caption{Relative truncation norm $\epsilon_{\rm POD}$ of Eq.~\eqref{eq:pod_tail} for the 4950 SC-RVE displacement snapshots versus the number of modes. The 39 retained modes meet the tolerance of $10^{-6}$, and their rotation into primary and secondary coordinates leaves this span unchanged.}
 \label{fig:pod}
\end{figure}

The affine HPROM uses this ROB directly. Specifically, its ECM training~\cite{hernandez2017} takes every tenth snapshot, $N_c=495$, and is executed separately for the residual and the stress. The element contributions are compressed by SVD with tolerances $\varepsilon_{\rm SVD}=10^{-3}$ for the residual and $10^{-5}$ for the stress, and the sparse fit of Eq.~\eqref{eq:ecm_tolerance} uses $\varepsilon_{\rm ECM}=10^{-6}$ for both. This results in a reduced mesh of 183 elements, $11.8\%$ of the FOM mesh, of which 135 support the residual rule and 73 the stress rule.

For the nonlinear HPROMs, the same 39-dimensional span is rotated into $n=3$ primary coordinates $\bm\xi$, approximately aligned with the macroscopic strain~\cite{hernandez2026dimensional}, and $\bar n=36$ secondary ones. A closure predicts the secondary coordinates from the primary ones, so the iterative HPROM--ANN solves equilibrium for only three unknowns. Following Section~\ref{sec:latent_closure}, the closure $\widehat{\mathcal N}$ is trained on the snapshot pairs $(\bm\xi^s,\overline{\qq}^{\,s})$ with Adam and the loss of Eq.~\eqref{eq:closure_loss}. In this case, the network has input and output sizes $n=3$ and $\bar n=36$ and two hidden layers of 64 neurons with $\tanh$ activations, and it standardizes its inputs internally while leaving its outputs unscaled.

In contrast with the affine case, HPROM--ANN uses the MAW--ECM of Hern\'andez et al.~\cite{hernandez2026dimensional}. Fixed ECM rules, obtained with $\varepsilon_{\rm SVD}=10^{-4}$ and $\varepsilon_{\rm ECM}=10^{-6}$, are pruned to ten elements each, resulting in a considerably smaller reduced mesh of 19 elements, $1.2\%$ of the FOM mesh. On this support, the weights are functions of $\bm\xi$, given by networks with three hidden layers of 128 neurons, GELU activations, and the scaled softmax output of Eq.~\eqref{eq:softmax_weights}. They are fitted after the closure and the pruning, with internally standardized inputs. These networks are larger than the closure, so the online cost does not follow from the three unknowns or the element count alone.

Finally, D-HPROM--ANN uses the same closure and stress rule but prescribes $\bm\xi=\ee$ instead of solving reduced equilibrium, as in the hyperelastic example of Hern\'andez et al.~\cite{hernandez2026dimensional}, so only the ten elements of the stress rule, $0.6\%$ of the FOM mesh, are evaluated. The substitution is accurate for this cell, since the primary coordinates of the snapshots differ from their macroscopic strains by less than $10^{-4}$ in relative terms, although it remains an additional approximation. All these element counts refer to reduced meshes on which only the selected elements are assembled (Fig.~\ref{fig:supports}). The three HPROMs are compared with the learned energies first on the independent test states of Section~\ref{sec:constitutive_accuracy} and then in the coupon FE$^2$ calculation of Section~\ref{sec:coupon}.

\begin{figure}[htbp]
 \centering\includegraphics[width=.99\linewidth]{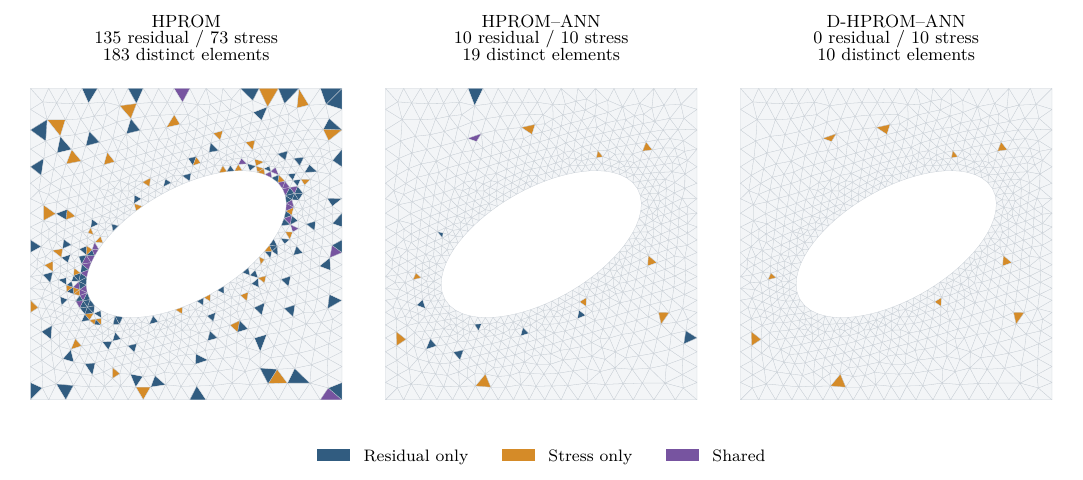}
 \caption{Element supports of the three HPROMs. The pale full-cell mesh provides location context only. Colors distinguish residual-only, stress-only, and shared elements.}
 \label{fig:supports}
\end{figure}

\subsubsection{Constitutive accuracy}
\label{sec:constitutive_accuracy}
On the 400 off-grid states, the compact six-feature ICNN and ICKAN keep the aggregate stress error below one percent, at $0.621\%$ and $0.636\%$, whereas the Unconstrained energy reaches $0.0052\%$, about 120 times lower (Table~\ref{tab:constitutive}).\footnote{Outside the fitting box, up to twice its size, the Unconstrained energy also remains more accurate than the constrained energies.}  On this cell the construction costs far more accuracy than on the MC-RVE, and capacity is not the reason.  In the controls of Appendix~\ref{app:capacity}, cores widened to about 25\,000 parameters change the constrained errors by less than $3\%$, whereas an Unconstrained energy with only 981 parameters remains about 75 to 80 times more accurate than either wide core.  Thirty-two learned features lower the errors only to $0.537\%$ and $0.618\%$.  This behavior is consistent with the ordering argument of Section~\ref{sec:feature_order}, since a monotone core cannot reverse the componentwise ordering of its features, whatever its width.  Because the 32-feature models gain little and lose the nonnegative-energy certificate of Appendix~\ref{app:nonnegative}, the coupon keeps the $m=6$ models.

The price of this restriction depends on the cell, a factor of 3.2 to 5.5 on the MC-RVE and of 76 to 126 on the SC-RVE at matched capacity.  Comparable costs are reported for other constrained constructions.  Linden et al.~\cite{linden2023} found that imposing all common conditions on a small invariant-based network raised its stress error by about one order of magnitude for transverse isotropy, and still recommended the constrained model for its extrapolation.  On homogenization data of a matrix with a stiff spherical inclusion, the polyconvex models of Klein et al.~\cite{klein2026} reached root-mean-square stress errors three to five times those of their non-polyconvex counterparts, and Kalina et al.~\cite{Kalina2024} dropped polyconvexity after finding it too restrictive for their fits.  The MC-RVE factors are of this size, whereas the SC-RVE shows that the price can be much larger for some microstructures.

As physics-based references, the HPROMs attain their accuracy without learning the stress response, since they evaluate the constituent law on their reduced meshes. The affine HPROM, which contains no trained network, reaches $0.0169\%$ with $11.8\%$ of the elements, within a factor of about three of the Unconstrained energy, and D-HPROM--ANN and HPROM--ANN reach $0.0959\%$ and $0.218\%$ with $0.6\%$ and $1.2\%$ of them.

Beyond the data, the comparison of Section~\ref{sec:mc_guarantees} carries over to this cell.  All three energies keep a rank-one curvature above 180~MPa on the test and 12\,000 further in-box states.  A radial expansion of the box to four times its size finds negative curvature for the Unconstrained energy at about 13\% of its states, first at 1.55 times the box, whereas the ICNN and ICKAN remain above 176~MPa.  For the two deployed models, the sufficient bound of Appendix~\ref{app:nonnegative} certifies nonnegative energy, and derivative checks and the vanishing work of Eq.~\eqref{eq:conservative_requirement} around a closed strain cycle confirm the potential structure of all three energies.

\begin{table}[htbp]
 \centering\small
 \caption{SC-RVE independent-test constitutive errors of the HPROMs and the selected energies, in percent.}
 \label{tab:constitutive}
 \begin{tabular}{lrr}
\toprule
Model & $e_S^{\rm test}$ [\%] & $e_W^{\rm test}$ [\%] \\
\midrule
HPROM & 0.0169 & -- \\
HPROM--ANN & 0.2177 & -- \\
D-HPROM--ANN & 0.0959 & -- \\
ICNN & 0.6214 & 0.1929 \\
ICKAN & 0.6359 & 0.1900 \\
Unconstrained energy & 0.0052 & 0.0010 \\
\bottomrule
\end{tabular}

\par\smallskip{\footnotesize All values are percentages. Stress and energy errors are aggregate array norms. SC-RVE: 400 independent test states. The HPROMs return no energy.}

\end{table}
\FloatBarrier

\subsubsection{Coupon FE$^2$: accuracy and online cost}
\label{sec:coupon}
The coupon has an ASTM-inspired dog-bone profile, analyzed in plane strain, with overall length 165 mm, grip width 19 mm, narrow-section width 13 mm, narrow-section length 57 mm, and fillet radius 76 mm.

\begin{figure}[htb]
 \centering\includegraphics[width=.99\linewidth]{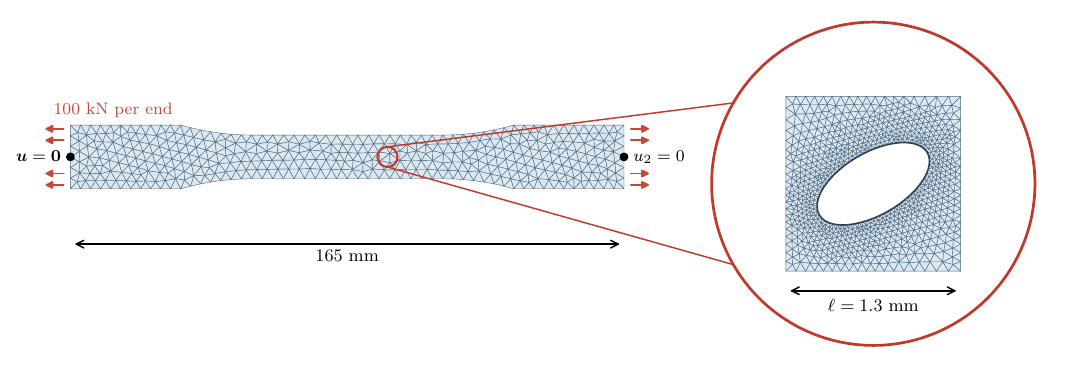}
 \caption{Coupon problem.  Undeformed macro mesh with equal-and-opposite resultants of 100~kN per end, which balance each other, so the three point constraints only remove rigid-body motion.  Every integration point queries the SC-RVE, solved by the FOM or replaced by an HPROM or a learned energy.}
 \label{fig:coupon_setup}
\end{figure}

The two end faces receive equal-and-opposite axial resultants of 100~kN per end, applied in 20 equal increments (Fig.~\ref{fig:coupon_setup}). Three point constraints remove the rigid modes, fixing the axial and transverse displacements at the left mid-edge node and the transverse displacement at the right one, so the end faces are not clamped and the anisotropic response may couple axial extension to shear.

The macro mesh contains 690 quadratic triangles, 1501 nodes, and 2070 integration points. A thickness of 50 mm converts the two-dimensional stress integrals into the 100~kN resultants, and a cell size of 1.3 mm, ten cells across the narrow width, is assumed for scale separation. The converged integration-point strains of every model remain inside the fitting box. Figure~\ref{fig:coupon} shows the reference fields at the final load. The equivalent stress is nearly uniform in the narrow section, peaks slightly where the fillets begin, and decreases toward the grips.

\begin{figure}[htb]
 \centering\includegraphics[width=.99\linewidth]{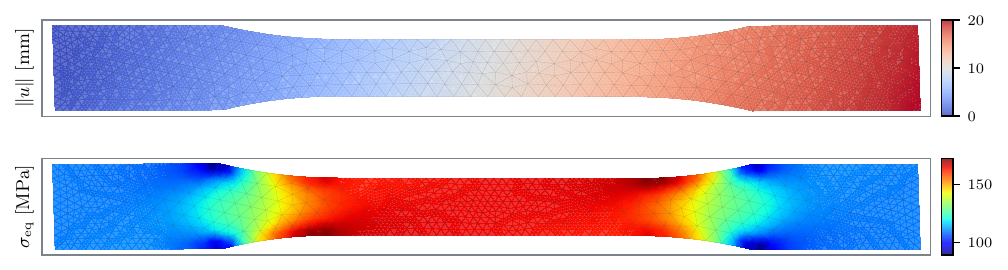}
 \caption{Nested FOM--FE$^2$ reference at 100~kN per end, shown on the deformed mesh without displacement amplification.  Top: displacement magnitude.  Bottom: in-plane von Mises equivalent of the macroscopic Cauchy stress, $\sigma_{\rm eq}=(\sigma_{11}^2-\sigma_{11}\sigma_{22}+\sigma_{22}^2+3\sigma_{12}^2)^{1/2}$. Gauss-point values are averaged to the nodes for visualization.}
 \label{fig:coupon}
\end{figure}

For a nodal or integration-point array $A$, the reported relative error is
\begin{equation}
 e_A=100\frac{\|\operatorname{vec}(A-A_0)\|_2}{\|\operatorname{vec}(A_0)\|_2},
 \label{eq:error}
\end{equation}
where $A_0$ is the nested reference on the same meshes. These unweighted array norms, with engineering shear in the strain array and tensor shear in the stress array, are the same for every model.

Online wall time covers the complete macroscopic solve, including every microscopic or surrogate evaluation, and excludes offline data generation, training, and reduced-mesh construction. The FOM and the HPROMs distribute the integration points over 20 processes, and the learned energies are evaluated in one batched process. Reduced-model times are medians of three runs, whereas the FOM and learned-energy times come from single runs.

The macroscopic solve stops when the free-degree-of-freedom residual norm falls below $10^{-5}$ N or its ratio to the initial increment residual falls below $10^{-7}$. The periodic FOM uses a relative microscopic displacement-increment tolerance of $10^{-10}$.
The affine HPROM uses its exact reduced residual Jacobian and HPROM--ANN the modified-Newton matrix of Section~\ref{sec:adaptive_weights}, both with a microscopic increment tolerance of $10^{-10}$, so the speed-up between them also reflects their different iteration matrices.

\begin{table}[htbp]
 \centering\small
 \caption{Coupon simulations: final-field errors of Eq.~\eqref{eq:error} with respect to the nested FOM, online wall time, and speed-up over the FOM.}
 \label{tab:fe2}
 \begin{tabular}{lrrrrr}
\toprule
Model & $e_u$ [\%] & $e_E$ [\%] & $e_S$ [\%] & Time [s] & Speed-up \\
\midrule
FOM & -- & -- & -- & 4268.68 & -- \\
HPROM & 0.0057 & 0.0165 & 0.0010 & 67.35 & 63.38 \\
HPROM--ANN & 0.0520 & 0.1497 & 0.0373 & 30.14 & 141.61 \\
D-HPROM--ANN & 0.0232 & 0.0485 & 0.0097 & 3.79 & 1126.90 \\
ICNN & 0.4066 & 0.6785 & 0.0739 & 1.04 & 4105.14 \\
ICKAN & 0.4197 & 0.6914 & 0.0732 & 1.95 & 2187.66 \\
Unconstrained energy & 0.0022 & 0.0050 & 0.0006 & 1.48 & 2884.31 \\
\bottomrule
\end{tabular}

\end{table}

Table~\ref{tab:fe2} shows distinct operating points rather than a single universally best model. The three HPROMs reproduce the nested reference to within $0.06\%$ in displacement, with speed-ups from 63 to 1127. Unlike the learned energies, the HPROMs can also reconstruct an approximate microscopic displacement field at every integration point. Among the HPROMs, the affine HPROM gives the smallest displacement error, $0.0057\%$, at a median online time of 67.35 s. The iterative HPROM--ANN reduces this time to 30.14 s, 2.23 times less, with a displacement error of $0.0520\%$. The direct D-HPROM--ANN reduces it to 3.79 s and has a displacement error of $0.0232\%$ on this path. Its lower error than the iterative model is possible because solving projected equilibrium corrects the primary state only within the chosen manifold and cubature approximation, so it need not give a smaller error than direct evaluation.

The six-feature ICNN and ICKAN take 1.04 s and 1.95 s, with displacement errors of $0.407\%$ and $0.420\%$, below their constitutive test errors but above those of the HPROMs on this benchmark. The Unconstrained energy takes 1.48 s and gives the smallest displacement error in Table~\ref{tab:fe2}, $0.0022\%$. In displacement error, the six models rank exactly as in Table~\ref{tab:constitutive}. Every coupon state lies inside the box, where all three energies have positive rank-one curvature at every sampled state, so this tensile path does not exercise the guarantees that separate them. On this path the constrained energies keep the displacement error near $0.4\%$ and add admissibility beyond the data, at a cost of about two orders of magnitude in structural accuracy against the Unconstrained energy.

All seven models, the nested FOM included, require four macroscopic iterations in every increment, so the tangents of Eq.~\eqref{eq:reduced_ift} and those of the learned energies preserve the convergence of the reference. Inside the box, the constrained energies thus run in the common structural solver like the Unconstrained energy, at comparable cost. Their guarantees matter at states that the coupon never reaches.

\subsubsection{Confined compression beyond the data}
\label{sec:confined}
As an illustrative case built to reach such states, consider the plane-strain counterpart of an oedometer test, a loading for which the tensile fitting box was never designed. Its only purpose is to expose what the guarantees secure once compaction carries the integration points beyond the data.  A square block of side 10~mm, meshed with 128 quadratic triangles, is compressed by a uniform traction on one face while rigid, frictionless walls fix the normal displacement of the other three (Fig.~\ref{fig:confined}a).  The three SC-RVE energies, the assembly, and the Newton iteration are those of the coupon, and the load is applied in 20 equal increments up to 500~MPa, in $x$ or in $y$.  Compression in $x$ leaves the fitting box with the first increment, since the box contains no axial compression, and compression in $y$ leaves it beyond about $10\%$ shortening.  At every converged increment, the rank-one curvature of Eq.~\eqref{eq:lh_background}, computed from the constitutive tangent and the stress through Eq.~\eqref{eq:strain_vs_F_curvature}, is minimized over unit directions at the 384 Gauss points of the block.  Where it is positive, the incremental equilibrium problem is strongly elliptic.

\begin{figure}[t]
 \centering\includegraphics[width=.86\linewidth]{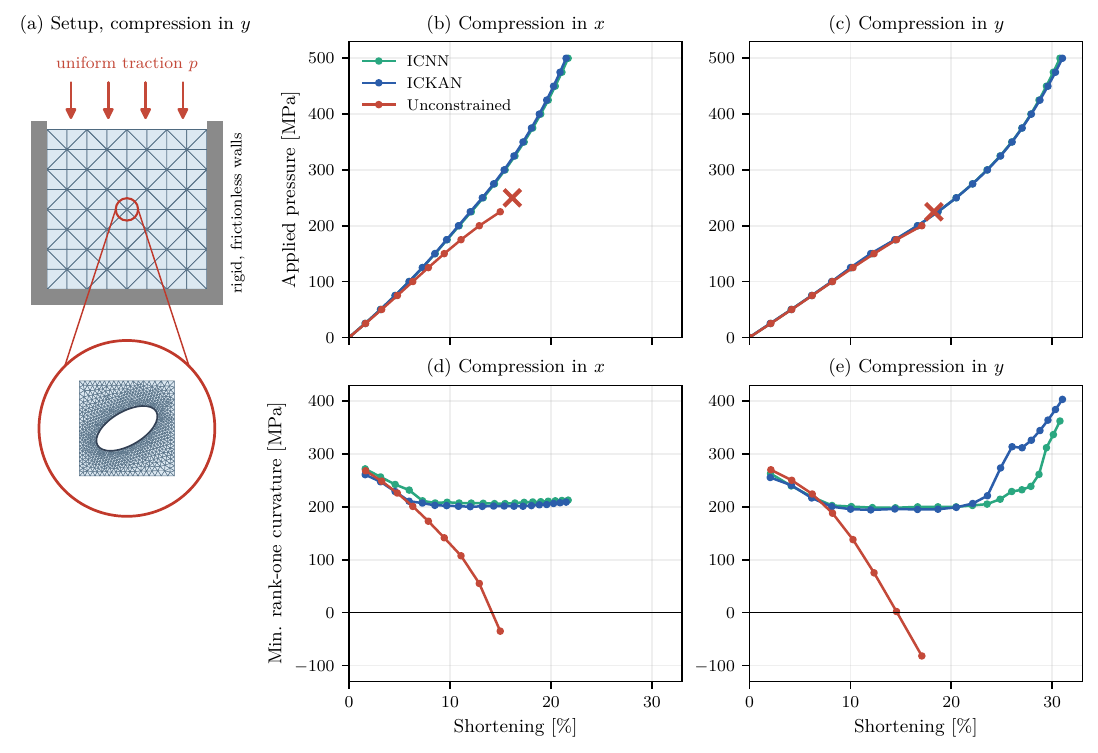}
 \caption{Confined compression beyond the data with the SC-RVE energies.  (a) Setup for compression in $y$, with the homogenized block of 128 quadratic triangles evaluating the SC-RVE energy at every integration point. For compression in $x$, the traction acts on a vertical face.  (b, c) Applied pressure versus shortening, with one marker per increment of 25~MPa and a cross at the increment in which Newton does not converge.  (d, e) Minimum rank-one curvature over the 384 Gauss points of the block at each converged increment.}
 \label{fig:confined}
\end{figure}

Both constrained energies reach 500~MPa in both directions, with four to nine Newton iterations per increment and final shortenings of about $22\%$ in $x$ and $31\%$ in $y$ (Fig.~\ref{fig:confined}b,c).  The Unconstrained energy follows a softer curve in $x$ and nearly the same curve in $y$, until Newton does not converge within 50 iterations in the increment to 250~MPa in $x$ and to 225~MPa in $y$.  This stop coincides with a loss of rank-one convexity in the block (Fig.~\ref{fig:confined}d,e).  The minimum curvature of the Unconstrained energy falls steadily with the load and turns negative at the last converged increment, at one of the 384 Gauss points in $x$ and at two in $y$.  These points lie in the elements at the corner where the left and bottom walls meet, where the local volume ratio has fallen to $J$ between 0.61 and 0.72.  The constrained energies, rank-one convex by construction, keep the minimum curvature above 194~MPa at every Gauss point and increment.  The test shows which energies keep the structural problem solvable once it leaves the data.

\section{Conclusions}
\label{sec:conclusions}
This work develops polyconvex energies for anisotropic periodic cells by composing paired directional features, with trainable orientations, stretch powers, and determinant exponents, with convex and componentwise nondecreasing cores.  Objectivity, an exactly stress-free reference, two-dimensional polyconvexity, and growth then hold by construction for every parameter set, and an affine determinant correction normalizes the stress without tying the normalization coefficient to the volumetric barrier stiffness.  Admissibility thus becomes a property of the architecture rather than of a successful fit.  Only nonnegative energy is left to the trained parameters, and a sufficient bound certifies it for every six-feature model of this study.

The numerical examples test this construction on two cells with complementary roles.  On the MC-RVE, learning six paired features lowers the held-out stress error sevenfold, to about $0.04\%$, whereas adding fixed directions does not close the gap.  On the SC-RVE, the selected ICNN and ICKAN keep the stress error near $0.6\%$ and the coupon displacement error near $0.4\%$.  The three HPROMs, which retain an approximate microscopic state, reproduce the test stresses within about $0.2\%$ and the nested FOM--FE$^2$ displacements within $0.06\%$, at speed-ups of about 60 to 1100.  Under confined compression beyond the data, the constrained energies carry the block to 500~MPa, whereas with the Unconstrained energy Newton stops below half that load, when its rank-one curvature turns negative.

Beyond these figures, the examples suggest how such energies should be built.  The anisotropy of both cells is captured better by adapting a few directions than by prescribing many, and six learned features already suffice on both.  This count need not generalize to other microstructures, but it is encouraging, because it keeps the energies compact and certifiable.  The accuracy that admissibility costs depends on the microstructure rather than on the network, from about six times the Unconstrained error on the MC-RVE to about 120 times on the SC-RVE.  Neither wider cores nor more features remove it, which is consistent with the ordering restriction of monotone cores and identifies where the construction can still improve.

The comparison also clarifies when the guarantees matter.  Inside the data, accuracy decides, and the Unconstrained energy is the most accurate of the six models.  Beyond the data, admissibility decides, and the constrained energies keep the structural problem solvable where the Unconstrained energy does not.  The HPROMs and the learned energies emerge as complementary tools, since the HPROMs retain the microscopic state and the constituent law at a higher online cost, whereas the learned energies reduce each query to a direct evaluation and, when constrained, carry guarantees that the HPROMs do not provide by construction.  Because the six models rank identically at the constitutive level and in the coupon displacement error, and all preserve the Newton convergence of the nested reference, constitutive tests offer an inexpensive guide to their structural performance on this benchmark.

In summary, the accuracy given up by the constrained energies buys a structural problem that remains solvable beyond the data, which is the property the construction is designed to secure.  Future work will seek to lower this price on cells such as the SC-RVE, for instance by passing the determinant to the core as an input that requires convexity but not monotonicity, and will extend the construction to three-dimensional cells.  The HPROMs could also generate additional training states at a fraction of the FOM cost.

\section*{Acknowledgements}
Joaqu\'in A. Hern\'andez acknowledges the support of Grant PID2024-158878OB-C21 funded by MICIU/AEI/10.13039/501100011033 and by ERDF/EU, which also supported S. Ares de Parga while this work was carried out at CIMNE. S. Ares de Parga also acknowledges partial support from the Department of Aeronautics and Astronautics at Stanford University. Alejandro Cornejo is a Serra H\'unter fellow.

\section*{Data availability}
The code, the trained models, and the data that support the findings of this study will be made openly available in a public repository upon publication.

\FloatBarrier
\appendix
\section{Rank-one paths and finite-strain energy curvature}
\label{app:finite_strain_curvature}
For a twice differentiable energy, nonnegative curvature along every rank-one perturbation is expressed by the Legendre--Hadamard condition
\begin{equation}
 \partial^2_{\F\F}W(\F)[\bm a\otimes\bm b,\bm a\otimes\bm b]\geq0
 \quad\hbox{for all }\bm a,\bm b\in\R^2 .
 \label{eq:lh_background}
\end{equation}
The polyconvex representation in Eq.~\eqref{eq:polyconvex_definition} implies this condition because the determinant is affine along rank-one paths. For $2\times2$ matrices,
\begin{equation}
 \det(\F+t\bm H)=\det\F+t\,\cof\F:\bm H+t^2\det\bm H,
 \qquad
 \cof\F=\begin{pmatrix}F_{22}&-F_{21}\\-F_{12}&F_{11}\end{pmatrix},
 \label{eq:rankone_minor_line}
\end{equation}
and $\det(\bm a\otimes\bm b)=0$. On a segment $\F_s=(1-s)\F_A+s\F_B$ with $\F_B-\F_A=\bm a\otimes\bm b$, the pair $(\F_s,\det\F_s)$ is therefore the corresponding convex combination of its endpoint values, and it stays admissible when both endpoint determinants are positive. Convexity of $\mathcal P$ then gives $W(\F_s)\leq(1-s)W(\F_A)+sW(\F_B)$, so a twice differentiable $W$ has nonnegative second derivative along the segment, which is Eq.~\eqref{eq:lh_background}~\cite{ball1977,Ciarlet1988}.

For an energy expressed in strain, the curvature along $\F+t\bm H$ follows from the chain rule. Let $\bm H\in\R^{2\times2}$ be an arbitrary perturbation and let $t$ parametrize $\F(t)=\F+t\bm H$. Dots denote derivatives with respect to $t$, and $\operatorname{sym}(\bm A)=(\bm A+\bm A^T)/2$ denotes the symmetric part of a matrix. Direct multiplication gives
\begin{align}
 \E(t)
 &=\tfrac12\bigl[\F^T\F-\I
 +t(\F^T\bm H+\bm H^T\F)+t^2\bm H^T\bm H\bigr]\nonumber\\
 &=\E(0)+t\operatorname{sym}(\F^T\bm H)
 +\tfrac12t^2\bm H^T\bm H.
\end{align}
It follows that $\dot\E(0)=\operatorname{sym}(\F^T\bm H)$ and $\ddot\E(0)=\bm H^T\bm H$. Using the engineering-strain convention of Eq.~\eqref{eq:voigt}, the chain and product rules give
\begin{align}
 \frac{\dd W}{\dd t}&=\ssv\cdot\dot\ee,\nonumber\\
 \frac{\dd^2 W}{\dd t^2}
 &=\sum_{i,j}\frac{\partial s_i}{\partial e_j}\dot e_j\dot e_i
 +\sum_i s_i\ddot e_i
 =\dot\ee^{\,T}\bm D\dot\ee+\Sg:\ddot\E.
\end{align}
Substituting $\ddot\E(0)$ gives
\begin{equation}
 \left.\frac{\dd^2 W(\F+t\bm H)}{\dd t^2}\right|_{t=0}
 =\dot{\ee}(0)^T\bm D\dot{\ee}(0)+\Sg:(\bm H^T\bm H).
 \label{eq:strain_vs_F_curvature}
\end{equation}
Here $\bm D=\partial\ssv/\partial\ee$ is the constitutive tangent and $\Sg$ is the second Piola--Kirchhoff stress, both evaluated at $t=0$. The first term accounts for the change of stress with strain, and the second for the curvature of the strain path, weighted by the current stress. Even if $\bm D$ is positive semidefinite, the stress term can be negative in compression.

A simple strain-convex energy shows that the stress term can make this curvature negative. Consider $W(\E)=\mu\E:\E$, where $\mu>0$ is a constant with units of energy density. One has $\Sg=2\mu\E$. Since
\begin{equation}
 W(\ee)=\mu(e_1^2+e_2^2+e_3^2/2),
 \qquad \bm D=\diag(2\mu,2\mu,\mu),
\end{equation}
the energy is strictly convex in engineering strain. At $\F=\lambda\I$, where $\lambda>0$ is the common in-plane stretch, choose the rank-one perturbation $\bm H=\diag(1,0)$, which keeps $\det(\lambda\I+t\bm H)=\lambda(\lambda+t)$ positive for $t>-\lambda$. Then
\begin{equation}
 \dot\ee=(\lambda,0,0)^T,\qquad
 \Sg=\mu(\lambda^2-1)\I,\qquad
 \bm H^T\bm H=\diag(1,0).
\end{equation}
Hence the tangent term is $2\mu\lambda^2$, and the stress term is $\mu(\lambda^2-1)$, giving
\begin{equation}
 \left.\frac{\dd^2 W(\lambda\I+t\bm H)}{\dd t^2}\right|_{t=0}
 =2\mu\lambda^2+\mu(\lambda^2-1)
 =\mu(3\lambda^2-1).
 \label{eq:strain_convex_counterexample}
\end{equation}
At $\lambda=1/2$, the curvature is $-\mu/4$, violating Eq.~\eqref{eq:lh_background} despite convexity in $\E$. The curvature is negative only for $\lambda<1/\sqrt3$, so the counterexample requires sufficiently strong compression.

\section{Proofs for the paired features and complete energy}
\label{app:convex}
Propositions~\ref{prop:paired_features} and~\ref{prop:complete_energy} both rest on the convexity of a single stretch--determinant contribution, established first.

\subsection{Joint convexity of the directional building block}
Every contribution in Eq.~\eqref{eq:features} has the form
\begin{equation}
 f(\bm x,J)=\|\bm x\|^aJ^{-b},
 \qquad \bm x=\F\bm d,
\end{equation}
up to a positive constant, with $a=2p$ or $2q$. It suffices to determine when this building block is jointly convex in $(\bm x,J)$. Let $t=\|\bm x\|>0$. Because $f$ depends on $\bm x$ only through $t$, its Hessian separates into a block coupling the radial variable $t$ to $J$ and transverse directions orthogonal to $\bm x$. The radial--determinant block is
\begin{equation}
 \begin{pmatrix}
 a(a-1)t^{a-2}J^{-b} & -ab t^{a-1}J^{-b-1}\\
 -ab t^{a-1}J^{-b-1} & b(b+1)t^aJ^{-b-2}
 \end{pmatrix}.
\end{equation}
For this symmetric $2\times2$ block to be positive semidefinite, its diagonal entries and determinant must be nonnegative. The diagonal entries are nonnegative for $a\geq1$ and $b\geq0$, while the determinant is
\begin{align}
 \det &=ab\bigl[(a-1)(b+1)-ab\bigr]t^{2a-2}J^{-2b-2}\nonumber\\
 &=ab(a-b-1)t^{2a-2}J^{-2b-2}\geq0.
\end{align}
The factor $a-b-1$ gives the upper bound $b\leq a-1$. If it is violated, the determinant is negative and the block has curvatures of opposite sign. The remaining, transverse Hessian eigenvalues are $a t^{a-2}J^{-b}\geq0$. Hence $a\geq1$ and $0\leq b\leq a-1$ make the complete Hessian positive semidefinite. Continuous extension gives convexity at $\bm x=\bm0$. That point is never reached on $\GL$, because $\F\bm d\neq\bm0$ for a nonzero direction, so the lack of smoothness of $\|\bm x\|^a$ there does not affect the stress or tangent. Substituting $a=2p$ gives $p\geq\tfrac12$ and $0\leq b\leq2p-1$, and the same argument with $(q,c)$ establishes all bounds in Eq.~\eqref{eq:featureconstraints}.

\subsection{Proof of Proposition~\ref{prop:paired_features}}
\begin{proof}
\emph{Convexity and objectivity.}
For any fixed learned parameter set, the directions $\bm d_i$ and $\bm d_i^\perp$ are independent of deformation, so the maps $\F\mapsto\F\bm d_i$ and $\F\mapsto\F\bm d_i^\perp$ are linear. Composing the convex building block above with these maps, multiplying by the positive factors $1/p_i$ and $1/q_i$, and adding the two contributions proves convexity of $z_i$ in independent $(\F,J)$. A left proper rotation changes $\F\bm d$ to $\bm Q\F\bm d$ without changing its norm, so each contribution is objective.

\emph{Reference value and derivative.}
The physical relation $J=\det\F$ is now restored, with every learned parameter still fixed. At $\E=\bm0$, both directional invariants and $J$ equal one, so the two contributions give $z_i^0=1/p_i+1/q_i$. Moreover, $I_4(\bm d_i)=\bm d_i^T(\I+2\E)\bm d_i$ and $J^2=(1+2E_{11})(1+2E_{22})-4E_{12}^2$ imply $\partial_{\E}I_4(\bm d_i)=2\bm d_i\otimes\bm d_i$ and $\partial_{\E}J|_0=\I$. Applying the product rule to the first contribution gives
\begin{equation}
 \left.\partial_{\E}\frac{I_4(\bm d_i)^{p_i}}{p_iJ^{b_i}}\right|_0
 =\frac1{p_i}\bigl[2p_i\bm d_i\otimes\bm d_i-b_i\I\bigr]
 =2\bm d_i\otimes\bm d_i-\frac{b_i}{p_i}\I.
\end{equation}
The perpendicular contribution has the same form with $(\bm d_i,p_i,b_i)$ replaced by $(\bm d_i^\perp,q_i,c_i)$. Since $\bm d_i\otimes\bm d_i+\bm d_i^\perp\otimes\bm d_i^\perp=\I$, adding the two contributions removes the directional dependence at the reference and gives
\begin{equation}
 \left.\partial_{\E}z_i\right|_0
 =2(\bm d_i\otimes\bm d_i+\bm d_i^\perp\otimes\bm d_i^\perp)
 -\left(\frac{b_i}{p_i}+\frac{c_i}{q_i}\right)\I
 =\rho_i\I.
\end{equation}
In the work-conjugate engineering-strain convention of Eq.~\eqref{eq:voigt}, this tensor gradient corresponds to $\partial_{\ee}z_i|_0=\rho_i(1,1,0)^T$, proving Eq.~\eqref{eq:referencefeatures}.
\end{proof}

\subsection{Proof of Proposition~\ref{prop:complete_energy}}
\label{app:complete_energy_proof}

\begin{proof}
\emph{Potential structure, objectivity, and stress symmetry (C1)--(C3).}
Every directional norm, $J$, and $\tr\C$ are unchanged by left proper rotations, since all of them can be expressed through $\bm d_i^T\C\bm d_i$, $\tr\C$, and $\sqrt{\det\C}$. Hence the complete energy is objective, and differentiation with respect to symmetric strain gives a symmetric material stress. Differentiating this scalar energy supplies (C1) and the major symmetry of its constitutive tangent. Smoothness holds on $\GL$ because $J>0$, nonzero material directions have nonzero deformed length, and both cores are $C^2$. Because this argument uses left rotations only, it does not impose a material symmetry group.

\emph{Reference normalization (C4).}
Equation~\eqref{eq:energy_reference_value} shows that $W_{\rm learn}$ sets the reference energy, and Eq.~\eqref{eq:energy_reference_stress} shows that $W_{\rm corr}$ cancels the learned reference stress. The calculations for $W_{\rm vol}$ and $W_{\rm dist}$ in Section~\ref{sec:energy_admissibility} show that both growth terms have zero value and zero strain gradient at the reference, so the complete energy satisfies both reference conditions.

\emph{Polyconvexity (C5).} Let $\bm G\in\R^{2\times2}$ and $j>0$ be independent. Define $\mathcal P(\bm G,j)$ by the right-hand side of Eq.~\eqref{eq:energy}, replacing $\bm z$ by $\bm z(\bm G,j)$, $J$ by $j$, and $\tr\C$ by $\|\bm G\|_F^2$. For two such pairs $\bm X_1=(\bm G_1,j_1)$, $\bm X_2=(\bm G_2,j_2)$ and $0\leq t\leq1$,
\begin{align}
 H\!\left(\bm z(t\bm X_1+(1-t)\bm X_2)\right)
 &\leq H\!\left(t\bm z(\bm X_1)+(1-t)\bm z(\bm X_2)\right)\nonumber\\
 &\leq tH(\bm z(\bm X_1))+(1-t)H(\bm z(\bm X_2)).
 \label{eq:convex_composition_proof}
\end{align}
The first inequality uses feature convexity and core monotonicity, and the second uses core convexity, so $W_{\rm learn}$ is convex up to its constant reference subtraction. The correction $W_{\rm corr}=-r(j-1)$ is affine for either sign of $r$, while $W_{\rm vol}$ and $W_{\rm dist}$ are convex in independent $(\bm G,j)$ because their nonaffine parts are positive multiples of $-\log j$, $(j-1)^2$, and $\|\bm G\|_F^2$. Consequently, $\mathcal P$ is convex and $W(\F)=\mathcal P(\F,\det\F)$, establishing Eq.~\eqref{eq:polyconvex_definition}.

\emph{Growth (C6).}
Convexity of $H$ gives its supporting-plane inequality at $\bm z^0$,
\begin{equation}
 H(\bm z)\geq H(\bm z^0)+\sum_i h_i(z_i-z_i^0)
 \geq H(\bm z^0)-\sum_i h_i z_i^0.
 \label{eq:supporting_plane}
\end{equation}
The second inequality uses $h_i,z_i\geq0$. Hence $W_{\rm learn}$ has a deformation-independent lower bound and cannot cancel the analytic growth terms. As $J\to0^+$, the logarithmic part of $W_{\rm vol}+W_{\rm dist}$ diverges while $W_{\rm corr}$ remains bounded. As $J\to\infty$, the positive quadratic part of $W_{\rm vol}$ dominates every potentially negative linear or logarithmic term. Finally, if $J$ remains in a compact subset of $(0,\infty)$ while $\|\F\|_F\to\infty$, the term $\varepsilon\|\F\|_F^2/2$ in $W_{\rm dist}$ diverges and all determinant-only terms remain bounded. These are the three stated growth limits.
\end{proof}

\section{Sufficient lower bound for nonnegative energy}
\label{app:nonnegative}
Define $k_i=1/p_i+1/q_i$ and $\eta_i=\rho_i/k_i$. Write $t=\|\F\bm d_i\|^2$, $u=\|\F\bm d_i^\perp\|^2$. Hadamard's determinant inequality gives $tu\geq J^2$. Since each feature is increasing in $t,u$, its minimum at fixed $J$ occurs on $tu=J^2$. Minimizing
\begin{equation}
 \frac{t^{p_i}}{p_iJ^{b_i}}+\frac{J^{2q_i-c_i}t^{-q_i}}{q_i}
\end{equation}
gives $t^{p_i+q_i}=J^{2q_i+b_i-c_i}$ and hence $z_i\geq k_iJ^{\eta_i}$. The bound holds separately for each feature, without assuming that the minima for different directions are attained simultaneously, and $h_i\geq0$ allows it to be inserted term by term in Eq.~\eqref{eq:supporting_plane}.

Inserting the feature bound and using $\tr\C/2\geq J$ in $W_{\rm dist}$ gives
\begin{equation}
 W\geq g(J):=\sum_i h_i k_i[J^{\eta_i}-1-\eta_i(J-1)]
 +(\alpha+\varepsilon)(J-1-\log J)+\frac\beta2(J-1)^2.
 \label{eq:lowerbound}
\end{equation}
Because $g(1)=g'(1)=0$, it remains to establish $g''(J)\geq0$ for every $J>0$. Differentiation gives
\begin{equation}
 J^2g''(J)=\alpha+\varepsilon+\beta J^2+\sum_i c_i^*J^{\eta_i},
 \qquad c_i^*=h_i k_i\eta_i(\eta_i-1).
\end{equation}
Only terms with $0<\eta_i<1$ can have negative coefficients. If $c_-=\sum_{c_i^*<0}|c_i^*|$, the two inequalities
\begin{align}
 \alpha+\varepsilon+\sum_{\eta_i<0}c_i^*-c_-&\geq0,\\
 \beta+\sum_{\eta_i\geq1}c_i^*-c_-&\geq0
\end{align}
are sufficient on $0<J\leq1$ and $J\geq1$, respectively. For the first region, $J^{\eta_i}\leq1$ for negative-coefficient terms and $J^{\eta_i}\geq1$ when $\eta_i<0$. For the second region, dividing $J^2g''$ by $J$ makes every negative-coefficient power satisfy $J^{\eta_i-1}\leq1$, while $J^{\eta_i-1}\geq1$ for $\eta_i\geq1$ and $\beta J\geq\beta$. Dropping the remaining nonnegative contributions gives the stated bounds. Their failure is inconclusive, so the trained parameters are checked with a sharper interval bound.

The audit partitions $[10^{-4},10^4]$ into 4000 logarithmically spaced intervals. On each interval $[J_L,J_R]$, a positive power term is bounded below by the smaller endpoint value, whereas a negative-coefficient term is bounded below by $c_i^*J_R^{\eta_i}$, since its exponent lies in $(0,1)$. The constant and quadratic terms are bounded below by $\alpha+\varepsilon$ and $\beta J_L^2$. Below the finite interval, the nonnegative negative-exponent terms increase and the magnitudes of the negative-coefficient terms decrease toward $J=0$. Above it, factoring out $J^{\eta_-}$ with $\eta_-=\max_{c_i^*<0}\eta_i<1$ yields a lower bound whose retained positive powers increase. These endpoint and tail inequalities cover every $J>0$, not only the grid points. Positive margins establish the sufficient bound for the trained parameters, up to floating-point accuracy, which random energy samples could not do.

\section{Latent-space closure details}
\label{app:latent_closure}
Let $\bm V_{\rm tra}$ be the retained orthonormal POD basis of Eq.~\eqref{eq:linear_prom}, with $n_{\rm tra}=n_{\rm tot}$ columns, let $\bm S_{\rm snap}$ collect the centered displacement snapshots, and collect the corresponding engineering-strain inputs in $\bm M=[\,\ee^1,\ldots,\ee^{N_s}\,]$. Their POD coefficients are
\begin{equation}
 \bm Q_{\rm POD}=\bm V_{\rm tra}^T\bm S_{\rm snap}.
\end{equation}
The input-informed construction first identifies the least-squares map from these coefficients to strain,
\begin{equation}
 \bm T_m=\bm M\bm Q_{\rm POD}^{\dagger}
 \in\arg\min_{\bm H}\|\bm H\bm Q_{\rm POD}-\bm M\|_F^2,
\end{equation}
where $\dagger$ denotes the Moore--Penrose pseudoinverse. Assuming rank three, the compact factorization
\begin{equation}
 \bm V_{\rm tra}\bm T_m^T=\bm V\bm\Sigma_m\bm Z_m^T,
 \qquad
 \bm A_m=\bm\Sigma_m^{-1}\bm Z_m^T
\end{equation}
defines the three-column orthonormal primary block $\bm V$ and the coordinate transformation $\bm A_m$, with $\bm\xi^s=\bm T_m\bm Q_{\rm POD}^{\,s}$ and primary modal amplitudes $\bm V^T(\bm u^s-\bm u_{\rm ref})=\bm A_m\bm\xi^s$. The secondary block $\overline{\bm V}$ is an orthonormal complement within the retained POD span, ordered by snapshot energy, so that $\bm V_{\rm tot}=[\,\bm V,\overline{\bm V}\,]$ spans the same subspace as $\bm V_{\rm tra}$ and differs from it only by an orthogonal change of basis. The two blocks give the deployed decoder of Eq.~\eqref{eq:deployed_decoder}.

The displacement error separates the two approximation mechanisms. For the snapshot coordinates $(\bm\xi^s,\overline{\bm q}^{\,s})$,
\begin{equation}
 \|\bm u^s-\widetilde{\bm u}(\bm\xi^s)\|_2^2
 =\|\overline{\bm q}^{\,s}-\widehat{\mathcal N}(\bm\xi^s)\|_2^2
 +\|(\bm I-\bm V_{\rm tot}\bm V_{\rm tot}^T)
       (\bm u^s-\bm u_{\rm ref})\|_2^2.
\end{equation}
The first term is the closure error within the retained span, and the second is the POD truncation error, which no increase in network capacity can recover.

Finally, let $\bm B_{\xi}=\partial_{\bm\xi}\widetilde{\bm u}$, $\bm R$ be the FOM residual, and $\bm K=\partial_{\bm u}\bm R$. Differentiating the projected residual $\bm r=\bm B_{\xi}^T\bm R$ gives
\begin{equation}
 \frac{\partial\bm r}{\partial\bm\xi}
 =\bm B_{\xi}^T\bm K\bm B_{\xi}
 +\sum_{j=1}^{N}R_j
   \frac{\partial^2\widetilde u_j}
        {\partial\bm\xi\,\partial\bm\xi}.
\end{equation}
The second term is the curvature contribution of the decoder manifold. It need not vanish at reduced equilibrium because $\bm B_{\xi}^T\bm R=\bm0$ does not imply $\bm R=\bm0$.

\section{Elementwise cubature conventions}
\label{app:cubature}
The general ECM construction and MAW--ECM pruning algorithm follow Hernández et al.~\cite{hernandez2017,hernandez2026dimensional}, and the expressions below specify how the deployed residual and stress rules are assembled.

For the fixed rule, let $\bm c_e^s\in\R^d$ contain the contribution of element $e$ at training state $s=1,\ldots,N_c$. Residual cubature uses $\bm c_e^s=\bm h_e^s$. Stress cubature uses the four components of
\begin{equation}
 \bm p_e^s=\frac{1}{|Y|}\sum_{g\in e}\omega_g\bm P_{\mu,g}^s,
 \qquad
 \bm c_e^s=\operatorname{vec}\bm p_e^s
 =\begin{bmatrix}p_{e,11}^s&p_{e,12}^s&p_{e,21}^s&p_{e,22}^s\end{bmatrix}^{T}.
\end{equation}
One element occupies each row of the sampled integrand matrix,
\begin{equation}
 \bm M_{\rm int}=
 \begin{bmatrix}
  (\bm c_1^1)^T&\cdots&(\bm c_1^{N_c})^T\\
  \vdots&&\vdots\\
  (\bm c_{N_{\rm el}}^1)^T&\cdots&(\bm c_{N_{\rm el}}^{N_c})^T
 \end{bmatrix}
 \in\R^{N_{\rm el}\times(dN_c)}.
\end{equation}
A truncated SVD, with the tolerance $\varepsilon_{\rm SVD}$ of Section~\ref{sec:hyperreduction}, gives $\bm M_{\rm int}\simeq\bm U_{\rm cub}\bm\Sigma_{\rm cub}\bm V_{\rm cub}^T$. To enforce the multiplier sum without duplicating a constant component already represented by the retained integration modes, the fitting system uses
\begin{equation}
 \bm p_\perp=(\bm I-\bm U_{\rm cub}\bm U_{\rm cub}^T)\bm1,
 \qquad
 \bm G_{\rm cub}=
 \begin{bmatrix}
  \bm U_{\rm cub}^T\\
  \bm p_\perp^T/\|\bm p_\perp\|_2
 \end{bmatrix},
 \qquad
 \bm b_{\rm cub}=\bm G_{\rm cub}\bm1.
\end{equation}
If the constant already lies in the retained span, the last row is redundant. Otherwise its exact reproduction gives $\sum_e w_e=N_{\rm el}$. The reference quadrature measures are already contained in $\bm c_e^s$, so the unknowns are element multipliers.

After MAW--ECM pruning selects a support $\mathcal Z=\{e_1,\ldots,e_{|\mathcal Z|}\}$, neural fitting returns to the uncompressed contributions at each decoded state and targets the full-mesh sum,
\begin{equation}
 \bm A^s=
 \begin{bmatrix}
  \bm c_{e_1}^s&\cdots&\bm c_{e_{|\mathcal Z|}}^s\\
  1&\cdots&1
 \end{bmatrix},
 \qquad
 \bm b^s=
 \begin{bmatrix}
  \displaystyle\sum_{e\in\mathcal E}\bm c_e^s\\
  N_{\rm el}
 \end{bmatrix}.
 \label{eq:statewise_cubature_appendix}
\end{equation}
Residual fitting uses $d=n$ and $\mathcal Z_r$, and stress fitting uses $d=4$ and $\mathcal Z_s$. With the scaled-softmax map of Eq.~\eqref{eq:softmax_weights}, the residual network minimizes
\begin{equation}
 \mathcal L_{\rm cub}(\bm\theta_r)=
 \frac{1}{|\mathcal I_{\rm fit}|}
 \sum_{s\in\mathcal I_{\rm fit}}
 \frac{\|\bm A^s\bm w(\bm\xi^s)-\bm b^s\|_2^2}
      {\max(\|\bm b^s\|_2^2,\delta_{\rm floor})},
\end{equation}
and the stress network uses the analogous loss with $\bm v$. A preliminary fit to the sampled MAW--ECM weights initializes each network, but the final objective is integration accuracy, because a reproducing weight vector is not unique.

For the tangent, the two iterative implementations use Eq.~\eqref{eq:reduced_ift} but evaluate its partial derivatives differently. For the affine HPROM, the assembled reduced stiffness and the analytic lifting derivative provide $\dd\qq_{\rm tra}^*/\dd\ee$ through three linear right-hand sides, and central differences are applied only to the stress along those linearized equilibrium directions. For HPROM--ANN, central differences at fixed $\qq$ or fixed $\ee$ evaluate all four partial-derivative blocks in Eq.~\eqref{eq:reduced_ift}, including decoder curvature and weight variation. The resulting small linear system does not require a new nonlinear microscopic solve for every perturbed strain. For D-HPROM--ANN, which solves no equilibrium, central differences of the complete direct stress map update the decoder, weights, lifting, and stress conversion at each perturbed input. Numerical differentiation is an implementation choice, and analytic or automatic differentiation would give the same formal tangent up to their respective numerical tolerances.

\section{Capacity and feature-count controls}
\label{app:capacity}
The comparisons of Section~\ref{sec:cell} pair compact constrained cores with a larger Unconstrained network.  To separate the effect of the constraints from that of capacity, both benchmarks complete a two-by-two design.  The constrained cores are widened to about 25\,000 parameters, comparable with the Unconstrained energy, and an Unconstrained energy with 981 parameters matches the compact ICNN.  Every new fit follows the protocol of Section~\ref{sec:training_protocol} with the same three seeds.

Table~\ref{tab:capacity_mc} reports the MC-RVE cells as medians over three seeds, as in Table~\ref{tab:mc_rve_m06}.  Widening changes the ICNN error from $0.0445\%$ to $0.0398\%$, while the Unconstrained energy improves from $0.0126\%$ to $0.0073\%$.  At matched capacity the constraints cost a factor of 3.2 to 5.5 in stress error.

Table~\ref{tab:capacity_sc} reports the SC-RVE cells for the validation-selected seed, as in Table~\ref{tab:constitutive}, together with their displacement errors in the coupon of Section~\ref{sec:coupon}.  Widening changes the constrained test errors by less than $3\%$, and the Unconstrained energy with 981 parameters remains about 75 to 80 times more accurate than either wide core.  In the coupon, the corresponding factor between matched cells lies between about 115 and 200.  Thirty-two learned features, trained with the same protocol, lower the test errors to $0.537\%$ and $0.618\%$ and the coupon errors to $0.343\%$ and $0.347\%$.  The sufficient bound of Appendix~\ref{app:nonnegative}, however, fails for both 32-feature parameter sets, so the six-feature models remain the deployed choice.

\begin{table}[htbp]
 \centering\small
 \caption{MC-RVE capacity controls: independent-test errors, in percent.}
 \label{tab:capacity_mc}
 \begin{tabular}{lrrrr}
\toprule
Model & Parameters & Stress [\%] & Energy [\%] & Tangent [\%] \\
\midrule
ICNN, $m=6$ & 975 & 0.0445 & 0.0079 & 0.3851 \\
ICNN, wide & 26\,919 & 0.0398 & 0.0069 & 0.2962 \\
ICKAN, $m=6$ & 1\,659 & 0.0406 & 0.0066 & 0.3622 \\
ICKAN, wide$^\ast$ & 24\,847 & 0.0744 & 0.0104 & 1.8311 \\
Unconstrained, small & 981 & 0.0126 & 0.0019 & 0.1579 \\
Unconstrained & 25\,473 & 0.0073 & 0.0011 & 0.0857 \\
\bottomrule
\end{tabular}

\par\smallskip{\footnotesize Relative error norm over the 512 test states, median over three seeds. $^\ast$Two of the three seeds stopped early at the validation plateau.}

\end{table}

\begin{table}[htbp]
 \centering\small
 \caption{SC-RVE capacity and feature-count controls, in percent.}
 \label{tab:capacity_sc}
 \begin{tabular}{lrrr}
\toprule
Model & Parameters & $e_S^{\rm test}$ [\%] & $e_u$ [\%] \\
\midrule
ICNN, $m=6$ & 975 & 0.6214 & 0.4066 \\
ICNN, wide & 26\,919 & 0.6066 & 0.4162 \\
ICNN, $m=32$ & 2\,379 & 0.5370 & 0.3433 \\
ICKAN, $m=6$ & 1\,659 & 0.6359 & 0.4197 \\
ICKAN, wide & 24\,847 & 0.6540 & 0.4322 \\
ICKAN, $m=32$ & 3\,999 & 0.6177 & 0.3469 \\
Unconstrained, small & 981 & 0.0081 & 0.0035 \\
Unconstrained & 25\,473 & 0.0052 & 0.0022 \\
\bottomrule
\end{tabular}

\par\smallskip{\footnotesize Validation-selected seed of each cell. Test: 400 states; $e_u$: final nodal displacement error of the coupon of Section~\ref{sec:coupon}.}

\end{table}


\begin{thebibliography}{99}

\bibitem{Miehe1999}
C. Miehe, J. Schr\"oder, J. Schotte, Computational homogenization
analysis in finite plasticity simulation of texture development in
polycrystalline materials, Computer Methods in Applied Mechanics
and Engineering 171(3--4) (1999) 387--418.
\doi{10.1016/S0045-7825(98)00218-7}

\bibitem{kouznetsova2001approach}
V.~G. Kouznetsova, W.~A.~M. Brekelmans, F.~P.~T. Baaijens,
An approach to micro--macro modeling of heterogeneous materials,
Computational Mechanics 27(1) (2001) 37--48.
\doi{10.1007/s004660000212}

\bibitem{geers2010multi}
M.~G.~D. Geers, V.~G. Kouznetsova, W.~A.~M. Brekelmans,
Multi-scale computational homogenization: Trends and challenges,
Journal of Computational and Applied Mathematics 234(7) (2010) 2175--2182.
\doi{10.1016/j.cam.2009.08.077}

\bibitem{feyel2003multilevel}
F. Feyel, A multilevel finite element method (FE$^2$) to describe the
response of highly non-linear structures using generalized continua,
Computer Methods in Applied Mechanics and Engineering 192(28--30) (2003)
3233--3244.
\doi{10.1016/S0045-7825(03)00348-7}

\bibitem{le2015computational}
B.~A. Le, J. Yvonnet, Q.-C. He, Computational homogenization of nonlinear
elastic materials using neural networks, International Journal for Numerical
Methods in Engineering 104(12) (2015) 1061--1084.
\doi{10.1002/nme.4953}

\bibitem{benner2015survey}
P. Benner, S. Gugercin, K.~E. Willcox, A survey of projection-based model
reduction methods for parametric dynamical systems, SIAM Review 57(4) (2015)
483--531.
\doi{10.1137/130932715}

\bibitem{yvonnet2007}
J. Yvonnet, Q.-C. He, The reduced model multiscale method (R3M) for the non-linear homogenization of hyperelastic media at finite strains, Journal of Computational Physics 223 (2007) 341--368. \doi{10.1016/j.jcp.2006.09.019}

\bibitem{Hernandez2014}
J.~A. Hern\'andez, J. Oliver, A.~E. Huespe, M.~A. Caicedo, J.~C.
Cante, High-performance model reduction techniques in computational
multiscale homogenization, Computer Methods in Applied Mechanics
and Engineering 276 (2014) 149--189.

\bibitem{hernandez2017}
J.A. Hern\'andez, M.A. Caicedo, A. Ferrer, Dimensional hyper-reduction of nonlinear finite element models via empirical cubature, Computer Methods in Applied Mechanics and Engineering 313 (2017) 687--722. \doi{10.1016/j.cma.2016.10.022}

\bibitem{Aldakheel2023}
F. Aldakheel, E.~S. Elsayed, T.~I. Zohdi, P. Wriggers, Efficient
multiscale modeling of heterogeneous materials using deep neural
networks, Computational Mechanics 72(1) (2023) 155--171.

\bibitem{kalina2023}
K.A. Kalina, L. Linden, J. Brummund, M. K\"astner, FEANN: an efficient data-driven multiscale approach based on physics-constrained neural networks and automated data mining, Computational Mechanics 71 (2023) 827--851. \doi{10.1007/s00466-022-02260-0}

\bibitem{Liu2022}
B. Liu, N. Kovachki, Z. Li, K. Azizzadenesheli, A. Anandkumar, A.~M.
Stuart, K. Bhattacharya, A learning-based multiscale method and its
application to inelastic impact problems, Journal of the
Mechanics and Physics of Solids 158 (2022) 104668.

\bibitem{Ghaderi2020}
A. Ghaderi, V. Morovati, R. Dargazany, A physics-informed assembly of
feed-forward neural network engines to predict inelasticity in
cross-linked polymers, Polymers 12(11) (2020) 2628.
\doi{10.3390/polym12112628}

\bibitem{AsadFarhat2026}
F. As'ad, C. Farhat, A staggered training framework for
mechanics-informed neural networks in tractable multiscale
homogenization with application to woven fabrics, Computer
Methods in Applied Mechanics and Engineering 452 (2026) 118666.

\bibitem{asad2022}
F. As'ad, P. Avery, C. Farhat, A mechanics-informed artificial neural network approach in data-driven constitutive modeling, International Journal for Numerical Methods in Engineering 123 (2022) 2738--2759. \doi{10.1002/nme.6957}

\bibitem{klein2022}
D.K. Klein, M. Fern\'andez, R.J. Martin, P. Neff, O. Weeger, Polyconvex anisotropic hyperelasticity with neural networks, Journal of the Mechanics and Physics of Solids 159 (2022) 104703. \doi{10.1016/j.jmps.2021.104703}

\bibitem{linden2023}
L. Linden, D.K. Klein, K.A. Kalina, J. Brummund, O. Weeger, M. K\"astner,
Neural networks meet hyperelasticity: A guide to enforcing physics,
Journal of the Mechanics and Physics of Solids 179 (2023) 105363.
\doi{10.1016/j.jmps.2023.105363}

\bibitem{GaoNeffRoventaThiel2017}
D.~Y. Gao, P. Neff, I. Roventa, C. Thiel, On the convexity of
nonlinear elastic energies in the right Cauchy--Green tensor,
Journal of Elasticity 127 (2017) 303--308.
\doi{10.1007/s10659-016-9601-6}

\bibitem{Xu2021}
K. Xu, D.~Z. Huang, E. Darve, Learning constitutive relations using
symmetric positive definite neural networks,
Journal of Computational Physics 428 (2021) 110072.
\doi{10.1016/j.jcp.2020.110072}

\bibitem{ball1977}
J.M. Ball, Convexity conditions and existence theorems in nonlinear elasticity, Archive for Rational Mechanics and Analysis 63 (1977) 337--403.

\bibitem{Ciarlet1988}
P.~G. Ciarlet, Mathematical Elasticity, Volume I: Three-Dimensional
Elasticity, North-Holland, 1988.

\bibitem{SchroderNeff2003}
J. Schr\"oder, P. Neff, Invariant formulation of hyperelastic
transverse isotropy based on polyconvex free energy functions,
International Journal of Solids and Structures 40(2) (2003) 401--445.

\bibitem{Gasser2006}
T.~C. Gasser, R.~W. Ogden, G.~A. Holzapfel, Hyperelastic modelling of
arterial layers with distributed collagen fibre orientations,
Journal of the Royal Society Interface 3(6) (2006) 15--35.
\doi{10.1098/rsif.2005.0073}

\bibitem{amos2017}
B. Amos, L. Xu, J.Z. Kolter, Input convex neural networks, Proceedings of Machine Learning Research 70 (2017) 146--155. \url{https://proceedings.mlr.press/v70/amos17b.html}

\bibitem{Kalina2024}
K.A. Kalina, J. Brummund, W. Sun, M. K\"astner,
Neural networks meet anisotropic hyperelasticity: A framework based on generalized structure tensors and isotropic tensor functions,
Computer Methods in Applied Mechanics and Engineering 437 (2025) 117725.
\doi{10.1016/j.cma.2024.117725}

\bibitem{klein2026}
D.K. Klein, K.A. Kalina, R. Ortigosa, J. Mart\'inez-Frutos, M. K\"astner, O. Weeger, Advances in polyconvex anisotropic hyperelasticity, arXiv:2605.27011v1 (2026). \url{https://arxiv.org/abs/2605.27011}

\bibitem{Linka2023}
K. Linka, E. Kuhl, A new family of Constitutive Artificial Neural
Networks towards automated model discovery,
Computer Methods in Applied Mechanics and Engineering 403 (2023) 115731.

\bibitem{Tac2022}
V. Tac, F. Sahli Costabal, A. Buganza Tepole, Data-driven tissue
mechanics with polyconvex neural ordinary differential equations,
Computer Methods in Applied Mechanics and Engineering 398 (2022) 115248.
\doi{10.1016/j.cma.2022.115248}

\bibitem{Abdolazizi2025}
K.P. Abdolazizi, R.C. Aydin, C.J. Cyron, K. Linka,
Constitutive Kolmogorov--Arnold networks (CKANs): Combining accuracy and interpretability in data-driven material modeling,
Journal of the Mechanics and Physics of Solids 203 (2025) 106212.
\doi{10.1016/j.jmps.2025.106212}

\bibitem{thakolkaran2025}
P. Thakolkaran, Y. Guo, S. Saini, M. Peirlinck, B. Alheit, S. Kumar, Can KAN CANs? Input-convex Kolmogorov-Arnold Networks (KANs) as hyperelastic constitutive artificial neural networks (CANs), Computer Methods in Applied Mechanics and Engineering 443 (2025) 118089. \doi{10.1016/j.cma.2025.118089}

\bibitem{Hernandez2020}
J.~A. Hern\'andez, A multiscale method for periodic structures using
domain decomposition and ECM-hyperreduction, Computer Methods in
Applied Mechanics and Engineering 368 (2020) 113192.

\bibitem{Amsallem2012}
D. Amsallem, M.~J. Zahr, C. Farhat, Nonlinear model order
reduction based on local reduced-order bases, International
Journal for Numerical Methods in Engineering 92(10) (2012) 891--916.

\bibitem{Lee2020}
K. Lee, K.~T. Carlberg, Model reduction of dynamical systems on
nonlinear manifolds using deep convolutional autoencoders,
Journal of Computational Physics 404 (2020) 108973.

\bibitem{Barnett2022}
J. Barnett, C. Farhat, Quadratic approximation manifold for
mitigating the Kolmogorov barrier in nonlinear projection-based model
order reduction, Journal of Computational Physics 464 (2022) 111348.

\bibitem{barnett2023}
J. Barnett, C. Farhat, Y. Maday, Neural-network-augmented projection-based model order reduction for mitigating the Kolmogorov barrier to reducibility, Journal of Computational Physics 492 (2023) 112420.
\doi{10.1016/j.jcp.2023.112420}

\bibitem{sibuet2025discrete}
N. Sibuet, S. Ares de Parga, J.~R. Bravo, R. Rossi,
Discrete physics-informed training for projection-based reduced-order models with neural networks,
Axioms 14(5) (2025) 385.
\doi{10.3390/axioms14050385}

\bibitem{aresdeparga2026nonlinear}
S. Ares de Parga, R. Tezaur, C.G. Hern\'andez, C. Farhat,
Nonlinear projection-based model order reduction with machine learning regression for closure error modeling in the latent space,
Computer Methods in Applied Mechanics and Engineering 448 (2026) 118443.
\doi{10.1016/j.cma.2025.118443}

\bibitem{Chmiel2024}
M.~R. Chmiel, J.~L. Barnett, C. Farhat, Assessment of
projection-based model order reduction for a benchmark hypersonic flow
problem, AIAA SciTech 2024 Forum (2024), paper 2024-0250.
\doi{10.2514/6.2024-0250}

\bibitem{Fritzen2016}
F. Fritzen, M. Hodapp, The finite element square reduced (FE$^{2R}$) method with GPU acceleration: towards three-dimensional two-scale simulations,
International Journal for Numerical Methods in Engineering 107 (2016) 853--881.
\doi{10.1002/nme.5188}

\bibitem{Grimberg2021}
S. Grimberg, C. Farhat, R. Tezaur, C. Bou-Mosleh, Mesh sampling and
weighting for the hyperreduction of nonlinear Petrov--Galerkin
reduced-order models with local reduced-order bases, International
Journal for Numerical Methods in Engineering 122(7) (2021) 1846--1874.

\bibitem{AresDeParga2023}
S. Ares de Parga, J.~R. Bravo, J.~A. Hern\'andez, R. Zorrilla,
R. Rossi, Hyper-reduction for Petrov--Galerkin reduced order models,
Computer Methods in Applied Mechanics and Engineering
416 (2023) 116298.

\bibitem{Chaturantabut2010}
S. Chaturantabut and D. C. Sorensen.
Nonlinear model reduction via discrete empirical interpolation.
\emph{SIAM Journal on Scientific Computing} 32(5) (2010) 2737--2764.
\doi{10.1137/090766498}

\bibitem{patera2017lp}
A.~T. Patera, M. Yano,
An LP empirical quadrature procedure for parametrized functions,
C. R. Acad. Sci. Paris, Ser. I 355 (2017) 1161--1167.
\doi{10.1016/j.crma.2017.10.020}

\bibitem{yano2019lp}
M. Yano, A.~T. Patera,
An LP empirical quadrature procedure for reduced basis treatment of parametrized nonlinear PDEs,
Computer Methods in Applied Mechanics and Engineering 344 (2019) 1104--1123.
\doi{10.1016/j.cma.2018.02.028}

\bibitem{Farhat2014}
C. Farhat, P. Avery, T. Chapman, J. Cortial, Dimensional reduction
of nonlinear finite element dynamic models with finite rotations and
energy-based mesh sampling and weighting for computational efficiency,
International Journal for Numerical Methods in Engineering
98(9) (2014) 625--662.

\bibitem{farhat2015}
C. Farhat, T. Chapman, P. Avery, Structure-preserving, stability, and accuracy properties of the energy-conserving sampling and weighting method for the hyper reduction of nonlinear finite element dynamic models, International Journal for Numerical Methods in Engineering 102 (2015) 1077--1110. \doi{10.1002/nme.4820}

\bibitem{Hernandez2024}
J.~A. Hern\'andez, J.~R. Bravo, S. Ares de Parga, CECM: A
continuous empirical cubature method with application to the
dimensional hyperreduction of parameterized finite element models,
Computer Methods in Applied Mechanics and Engineering
418 (2024) 116552.

\bibitem{Bravo2024}
J.~R. Bravo, J.~A. Hern\'andez, S. Ares de Parga, R. Rossi, A
subspace-adaptive weights cubature method with application to the
local hyperreduction of parameterized finite element models,
International Journal for Numerical Methods in Engineering
125(24) (2024) e7590.

\bibitem{hernandez2026dimensional}
J.~A. Hern\'andez, S. Ares de Parga, R. Rossi,
Dimensional hyperreduction of nonlinear finite element models via empirical cubature with manifold-adaptive weights,
arXiv:2609.03068v1 (2026).
\url{https://arxiv.org/abs/2609.03068}

\bibitem{Tac2024}
V. Tac, K. Linka, F. Sahli-Costabal, E. Kuhl, A. Buganza Tepole,
Benchmarking physics-informed frameworks for data-driven hyperelasticity,
Computational Mechanics 73 (2024) 49--65.
\doi{10.1007/s00466-023-02355-2}

\end{thebibliography}
\end{document}